\newcolumntype{L}[1]{>{\raggedright\arraybackslash}p{#1}}
\newcolumntype{C}[1]{>{\centering\arraybackslash}p{#1}}
\renewcommand{\arraystretch}{2.0}
\newtheorem{definition}{Definition}[section]
\newtheorem{theorem}{Theorem}
\newtheorem{proof}{Proof}
\newtheorem{proposition}{Proposition}
\newtheorem{lemma}{Lemma}
\newtheorem{remark}{Remark}
\begin{document}

\title{Steganography and Probabilistic Risk Analysis: A Game Theoretical Framework for Quantifying Adversary Advantage and Impact}

\author{Obinna Omego\textsuperscript{*}, Farzana Rahman, Jean-Christophe Nebel\textsuperscript{\textdagger}
\thanks{Obinna Omego, Farzana Rahman, and Jean-Christophe Nebel are with the School of Computer Science and Mathematics, Kingston University London, KT1 2EE, United Kingdom}
\thanks{Corresponding author: Obinna Omego e-mail: a.omego@Kingston.ac.uk}
\thanks{This version is a preprint uploaded to arXiv.}
\thanks{Original arXiv submission: December 2024. Current date of submission: $8^{th}$ August 2026.}}

\markboth{\parbox{\textwidth}{Author manuscript of the article published in PeerJ Computer Science 12:e4011 (2026). DOI: 10.7717/peerj-cs.4011.}}%
{}

  \maketitle

\begin{abstract}
In environments where adversaries engage in active surveillance and covert communication, defenders face the dual challenge of \emph{when} to deploy steganography and \emph{whether} it yields operational benefit. We present a novel game-theoretic model of steganographic operations that captures strategic interactions between a defender and an adversary through calibrated monetary primitives and nonlinear utility mappings. We derive mixed-strategy equilibria that drive conditional and unconditional success rates for hiding and detection, and introduce a time-varying \emph{adversarial advantage} metric that quantifies when an attacker’s incentives temporarily exceed the defender’s detection capacity. By linking this advantage to a new currency-unit risk measure, we extend the classical risk formula into a decision-aware, monetised framework. A Monte-Carlo simulation pipeline embeds payload shifts, detector learning, and scenario uncertainty to deliver distributions of success probabilities and expected losses rather than a static snapshot. Empirical calibration using breach-cost statistics, regulatory fine caps, and expert elicitations supports strategic prescriptions for steganographic deployment, detector investment, and governance trade-offs. Our results provide actionable insight into when steganography strengthens organisational resilience, and when it may yield marginal or negative value.
\end{abstract}

\begin{IEEEkeywords}
Steganography, Game-theoretic model, Adversarial advantage, Monetised risk measure, Time-varying detection effectiveness, Quantitative risk
\end{IEEEkeywords}

\IEEEpeerreviewmaketitle

\section{Introduction}
As adversaries continuously evolve their tactics to compromise confidential information and network infrastructure, it has become increasingly challenging to develop strategic decision-making processes for security \cite{AndersonSE3, Verizon2024}. Decision making under active adversaries is difficult to model with static checklists or ad-hoc heuristics: attackers adapt, incentives shift, and controls interact in ways that change the likelihood of compromise over time. Quantitative risk practice therefore prescribes separating \emph{likelihood} from \emph{impact} and sizing expected loss accordingly \cite{NIST80030,GordonLoeb2002,Romanosky2016}. At the same time, strategic interaction is central in steganography: defenders choose whether and how to embed, while a warden chooses how aggressively to search for embedded content. Game theory offers a natural framework for guiding these decisions and has already demonstrated that, under explicit modelling assumptions, content-adaptive embedding can surpass non-adaptive baselines. \cite{Schottle2016a,Schottle2013}.

This paper addresses a practical gap between steganographic game models and risk assessment used by decision makers. First, prior work often focuses solely on designing detection metrics and rarely incorporates impact into a risk measure that aligns with established framework \cite{NIST80030,GordonLoeb2002}. Second, while mixed strategies form a critical foundation in these games, the probabilities of different strategies reaching equilibrium are rarely translated into concrete metrics that represent the \emph{overall} success or failure rates for both the defender and the attacker. This data is essential for effectively assessing risk management strategies. Third, there is limited guidance on how to calibrate a deployable effectiveness parameter for the defender based on actual detectors operating on various types of media, including real images. This lack of guidance hinders the practical implementation of theoretical insights in real-world scenarios.

The research contributions of this paper are as follows:
\begin{itemize}
\item This paper proposes a novel two‐player non‐cooperative game‐theoretic model for steganographic operations in surveilled networks. In this model, the defender and the adversary (surveilling warden) both face monetary primitives (costs and benefits) and behavioural utilities that reflect risk attitudes and nonlinearities. The model supports explicit calibration of parameters such as costs of hiding, leak losses, operational harmony gains, and search-costs for the adversary.
\item Second, this study derives mixed‐strategy Nash equilibria for the game, yielding closed-form expressions (or tractable numerical solutions) for equilibrium attack‐search probability and defender mixing. These equilibrium mixes are then used to compute conditional and unconditional success rates for both players, offering a unified metric for steganographic success and detection failure.
\item Third, this study refines the classical quantitative-risk template by introducing a decision-conditioned construction. We introduce a novel metric of adversarial advantage, which captures the excess of the attacker’s incentive over the defender’s time-varying detection effectiveness. We then translate this advantage into a currency-unit risk measure, thereby converting equilibrium mixes into decision-ready, monetised risk quantities. This integration of strategic equilibrium behaviour enables practitioners to move directly from game-theoretic outcomes to actionable assessments of exposure and defence investment.
\item Finally, this paper shows how policy levers and intervention design (e.g., raising adversary search cost via internal controls or legal deterrents; reducing adversary benefit via data‐minimisation; improving detector/hider effectiveness) map seamlessly onto the curvature parameters of our utility‐space model, enabling comparative statics and strategic prescription for stakeholders rather than purely descriptive modelling.
\end{itemize}

The remainder of this paper is organised as follows. Section~\ref{sec:related} reviews background and related work. Section~\ref{Context_and_Background} clarifies the key terminology and context. Section~\ref{Adversary} presents the adversary model. Section~\ref{A_Steganographic_Game-Theoretic_Model} develops the steganographic game-theoretic model. Section~\ref{Game_Analysis} carries out the model analysis. Section~\ref{sec:sensitivity-mne} reports sensitivity analysis of mixed-Nash equilibrium solutions. Section~\ref{Risk_ana} offers the risk analysis framework. Section~\ref{sec:empirical} provides empirical calibration and validation. Section~\ref{sec:comparison-adaptive} compares with adaptive steganography schemes. Finally, Section~\ref{Conclusion} concludes the paper and outlines avenues for future work.

\section{Motivation and Scope}\label{Motivation_for_Security_Game Model}
In environments where targeted surveillance is feasible, organisations must decide \emph{when} to use steganography and \emph{what} security benefit it delivers relative to operational cost. Steganography is effective since this approach conceals the existence of communication rather than just encrypting its content, which reduces the risk of interception, censorship, or pre-emptive blocking \cite{@Fridrich2009, @RainerBohme2010}. While cryptography protects confidentiality after detection, steganography aims to avoid detection altogether—a significant advantage in situations where observable encryption might raise suspicion or prompt selective monitoring. Recent investigations into sophisticated mobile spyware and identity-exposure attacks against cellular subscribers demonstrate that capable adversaries can both identify and exploit valuable communications, underscoring the need for principled, quantitative decision-making rather than ad-hoc controls \cite{AmnestyPegasus2021,Chlosta2021}. In parallel, widely-adopted risk standards define operational risk as a function of the likelihood of adverse events and their consequences; any therefore, defensible method should (i) model strategic attacker–defender behaviour and (ii) connect those behaviours to \emph{likelihood} and \emph{impact} in a transparent way \cite{NIST80030}.

This paper focuses exclusively on \emph{spatial–domain} image steganography in uncompressed 8-bit grayscale images (See Section \ref{sec:empirical} for details). This focus is motivated by both its foundational role in modern steganographic research and its empirical tractability for benchmarking \cite{subramanian2021image}. Spatial-domain methods remain dominant in empirical studies because they offer transparent pixel-level cost modelling and are the basis for evaluating deep-learning-based detectors \cite{kaur2022systematic}. The four content–adaptive methods considered includes \textbf{WOW} (Wavelet Obtained Weights), \textbf{S-UNIWARD} (Spatial UNIversal WAvelet Relative Distortion), \textbf{HILL} (High-pass/Low-pass/Low-pass), and \textbf{MiPOD} (Minimising the Probability of Detection). These schemes constitute the most widely used and rigorously validated baselines in literature \cite{Holub2012WOW,holub2014universal,fridrich2011steganalysis,Sedighi2016MiPOD,li2014new}. On the detection side, we assume a capable warden equipped with two lightweight CNN-based detectors; these choices (GNCNN and Xu-Net–style CNNs) are standard operating points, serving as concrete anchors for calibration and steganalysis performance evaluation \cite{Kodovsky2012Ensemble,xu2016structural, qian2015deep}.

Throughout, we define the impact \(I\) as a monetary (or monetised) loss random variable associated with a confidentiality breach of the protected communication. It subsumes direct response costs, contractual and regulatory exposure, and secondary effects such as reputational harm, tailored to the specific context of the organisation. In our risk equations, \(I\) is multiplied by the \emph{adversary’s advantage} metric (rather than simply a success probability), which is conditioned on equilibrium behaviour (mixed strategies), success rate and the technical effectiveness of the chosen embedding method. This approach preserves the standard likelihood–impact template advocated by quantitative risk frameworks while making the likelihood term explicitly \emph{game-aware} \cite{NIST80030}.

Prior game-theoretic work on adaptive steganography has established that strategic considerations can significantly influence both embedding and detection decisions. However, what remains lacking is an explicit integration of a risk objective (likelihood–impact), a pipeline that turns mixed-strategy success rates into decision-relevant risk, and a deployable path to calibrate steganographic effectiveness using \(\beta_U\) from real detector operating points \cite{Schottle2016a,shi2020cnn}. Our contribution is to connect mixed-strategy equilibria to conditional success rates and subsequently to an explicit, non-negative risk measure. This is enabled by the design of a calibration procedure that maps empirical detector operating points (e.g., CNN at fixed FPR) into the model’s parameters. This creates a policy-legible bridge between steganographic design and quantitative risk management, while remaining faithful to the realities of contemporary steganography and steganalysis.

\section{Background and Related Work}\label{sec:related}
Game theory has provided meaningful insights into adaptive steganography. Early work formalised a two-player interaction in which the encoder adapts embedding to content while a warden optimises detection, revealing conditions under which linear decision rules emerge and adaptive embedding outperforms non-adaptive baselines. However, the model relied on assumptions, which may not be realistic in practical scenarios, such as independent cover symbols and perfect recovery adaptivity \cite{Schottle2016a}. Prior game-theoretic treatments established a more general framework demonstrating that adaptivity can remain secure against informed wardens under specific channel and payload conditions \cite{Schottle2013}. A closely related line of work considers a model under the \emph{independent embedding assumption} in which the warden’s best response converges to a linear aggregation decision rule, thereby reflecting the form of many practical steganalysis detectors, whereas the steganographer’s optimal counter-strategy remains significantly more difficult to characterise \cite{SchottlePascalandLaszkaAronandJohnsonBenjaminandGrossklagsJensandBohme2013}. That analysis sharpened the strategic lens introduced for adaptive embedding by demonstrating when simple detectors suffice against independently embedded payloads; however, its conclusions depend on strong simplifying assumptions such as symbol-independence and full feature-recoverability, and those assumptions have been examined in much greater depth by subsequent work \cite{ker2013moving}. 

Beyond classical content-adaptive encoders, several recent papers have refined how adaptivity is defined and operationalised in realistic channels. First, robust adaptive embedding has been extended to social-network pipelines: the Matching Robust Adaptive Steganography Scheme (MRAS) for JPEG Images over Social Networking Platforms is a scheme where JPEG learns to preserve host statistics through platform transforms (recompression, resizing) while matching perturbed  Discrete Cosine Transform (DCT) distributions, yielding improved extractability under post-processing without a large detectability penalty \cite{MRAS2025}. Second, immune-cover construction explicitly \emph{excludes} cover regions that are systematically exploitable by strong wardens, yielding a warden-aware notion of “holistic” spatial security that complements standard content adaptivity and reduces end-to-end detection risk at fixed payloads \cite{ImmuneCover2024}. User-tunable adaptivity has been explored through configurable controls (e.g., payload, invisibility, and robustness), allowing the instantiation of encoder settings that align with user-specified operating points across various image conditions. \cite{AISM2024}. In \cite{Wang2022ChannelErrors}, channel-aware robustness has been formalised by minimising expected error probability under stochastic post-processing; modelling the channel and embedding with that loss improves recovery after common distortions while maintaining competitive undetectability \cite{Wang2022ChannelErrors}. Moreover, fuzzy-logic-driven selection of embeddable spatial areas has been used to fuse texture/edge saliency into a soft decision surface, improving visual quality and lowering steganalysis success at low–moderate payloads \cite{FuzzyStego2025}.

In parallel, spatial-domain encoders such as WOW, S-UNIWARD, HILL, and MiPOD improved undetectability by minimising content-aware distortion, while detectors have progressed from rich-model ensembles to CNN-based validated on large public benchmarks \cite{Holub2012WOW,holub2014universal,li2014new,Sedighi2016MiPOD,fridrich2012rich,xu2016structural,cogranne2020alaska,luo2024comprehensive}. Together, these papers established a now-standard perspective for analysing steganographic encoders and detectors as best responses within a bimatrix game.

Yet, for practical risk assessment, three critical components remain absent. First, prior game-theoretic treatments on steganography largely do not go further than detection performance and do not integrate \emph{impact} into a quantitative risk metric consistent with established frameworks in information risk management \cite{NIST80030,GordonLoeb2002,Romanosky2016}. Second, while mixed strategies are central to these games, their \emph{equilibrium} mixing probabilities are rarely propagated into end-to-end risk; in practice, decision makers need overall success/failure rates induced by mixed Nash equilibria rather than only detector-level Receiver Operating Characteristic (ROC) summaries. Third, there is limited guidance on how to calibrate steganographic \emph{effectiveness} parameters from deployable detectors on real image data and then feed those calibrated quantities into a game-aware risk computation. This paper addresses these gaps by (i) deriving equilibrium success rates from a corrected mixed Nash equilibrium, (ii) embedding them in a nonnegative, impact-aware risk formulation compatible with quantitative risk practice, and (iii) outlining a deployable calibration path for steganographic effectiveness in the spatial domain (WOW, S-UNIWARD, HILL, MiPOD) with standard steganalysis back-ends.

\section{Terminology}
\label{Context_and_Background}
Game theory provides mathematical models and tools that can be used for investigating strategic decision-making. This section presents game-theoretical concepts and definitions that are pertinent to this study. 

Network security can be seen as a strategic game played between players. These players may be broadly classified as network administrators or authorised users defending the networks and malicious entities who wish to compromise the confidentiality, integrity and availability of systems and networks. The game is played on interconnected, simple and sophisticated systems, where vulnerabilities of assets are tried to be exploited by attacks, and defensive measures constitute its strategic move \cite{Erokhin2023Analysis0}. 

\begin{definition}
\label{definition}
A game $G \in \mathcal G$ is defined as a triple $(\mathcal {P, S, L})$, where $\mathcal P$ represents the set of players, $\mathcal S$ represents the set of strategy profiles, and $\mathcal L$ represents the set of payoff functions. Each payoff function in $\mathcal L$ determines the outcome for a player based on the strategy profile chosen by all players.
\end{definition}
In a complete information game with $n$ players, a strategy profile consists of an $n$-tuple, where each element corresponds to the strategy selected by each individual player. 
The players of a static (bi)matrix security game are denoted by $A$ and $D$, where $A$ denotes the attacker, and $D$ is the defender. The finite action spaces are the set of attacks:

\begin{equation}A := \{a_1,...,a_{NA} \}\end{equation}

whereas the set of approaches chosen by the defender is:

\begin{equation}D := \{d_1,...,d_{ND} \}.\end{equation}

The game's outcome is represented by the $N_{A} \times N_{A}$  game matrices $G^{A}$ and $G^{D}$ for the attacker and defender. The entries in the matrices represent the costs for players, which they minimise. In the instance of a zero-sum security game, i.e., when $G^{A} = - G^{D}$, the matrix:
\begin{equation}
    G: = - G^{D} = - G^{A}
\end{equation}

is said to be a game matrix. In this accord, $P^A$ maximises its payoff while $P^D$ minimises its cost based on the entries of the game matrix. 

An essential solution concept in game theory is the Nash Equilibrium, formally defined in the following definition \ref{definitionNE}.

\begin{definition} \label{definitionNE}
A Nash Equilibrium strategy $(s^*, y^*)$ satisfies:  $s^*Ay^* \geq sAy^* \quad \forall s$ and $s^*By^* \geq s^*By \quad \forall y$.
\end{definition}

As the strategies could be either mixed or pure, the associated Nash Equilibrium is referred to as pure or mixed. Besides, if all the inequalities in Definition \ref{definitionNE} are strict, then one has a strict Nash Equilibrium; otherwise, it is considered non-strict.

\begin{definition} \label{definitionNEBR}
A strategy $y^{\ast}$ is a Nash equilibrium best response to $s^{\ast}$ (denoted $y^{\ast}$ $\in \mathcal{BR}(s^{\ast})$ in the sequel), if it is a strategy satisfying $s^{\ast} By^{\ast} \geq s^{\ast}By \,\forall y$. Hence, a Nash equilibrium strategy is a strategy pair $(s^{\ast}, y^{\ast})$ of mutual best responses: $s^{\ast} \in \mathcal {BR}(y^{\ast})$ and $y^{\ast} \in \mathcal{BR}(s^{\ast})$.
\end{definition}

\begin{lemma}
{
\textit{If Player 1’s mixed strategy $s^\ast$ is the best response to the (mixed) strategy $y$ of the other player, then, for each pure strategy $s_i$ such that $s_i > 0$, it must be the case that $s_i$ is itself the best response. In particular, the payoff $s_{i}Ay$ must be the same for all such strategies.}
}
\end{lemma}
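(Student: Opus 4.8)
The plan is to exploit the fact that the expected payoff of a mixed strategy is nothing but a convex combination of the payoffs of the pure strategies it mixes over, with the mixing probabilities serving as weights. Writing $s^* = (s_1, \ldots, s_m)$ with $s_i \geq 0$ and $\sum_i s_i = 1$, and denoting by $(Ay)_i$ the payoff Player 1 obtains from the pure strategy $i$ against $y$, one has $s^* A y = \sum_i s_i (Ay)_i$. The whole argument then reduces to a statement about when a weighted average attains the value of its largest entry.

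First I would establish the value of the best response. Let $v := \max_i (Ay)_i$ be the highest payoff available among Player 1's pure strategies against $y$. Since $s^*$ is a best response, Definition \ref{definitionNEBR} applied to each pure strategy (the unit vector $e_i$) gives $s^* A y \geq (Ay)_i$ for every $i$, hence $s^* A y \geq v$. Conversely, because $s^* A y$ is a convex combination of the $(Ay)_i$, each of which is at most $v$, we get $s^* A y \leq v$. Combining the two inequalities yields $s^* A y = v$; that is, the best response earns exactly the maximal pure-strategy payoff.

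Second, I would pin down the support. Consider the identity $v - s^* A y = \sum_i s_i \bigl(v - (Ay)_i\bigr)$. Every summand is nonnegative, since $s_i \geq 0$ and $v - (Ay)_i \geq 0$ by the definition of $v$, while the left-hand side is zero by the previous step. A sum of nonnegative terms vanishes only if each term vanishes, so $s_i \bigl(v - (Ay)_i\bigr) = 0$ for all $i$. Therefore, whenever $s_i > 0$ one must have $(Ay)_i = v$: every pure strategy in the support of $s^*$ achieves the maximal payoff and is thus itself a best response, and all such strategies yield the identical payoff $v$, which is precisely the claim.

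I do not expect a genuine obstacle here; the result is essentially a repackaging of the elementary fact that a weighted average of numbers equals the maximum only when all positively-weighted numbers coincide with that maximum. The one point demanding care is the direction of optimization: the lemma must be read against the maximizing convention of Definition \ref{definitionNEBR} (Player 1 maximizes $sAy$), rather than the cost-minimizing convention used when the game matrices $G^A$ and $G^D$ were introduced, so I would fix the payoff orientation at the outset to keep all inequalities pointing the right way.
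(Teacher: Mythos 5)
Your proof is correct. The paper states this lemma as background without providing any proof of its own (it is the standard ``indifference'' or support lemma, essentially the one found in Osborne--Rubinstein), so there is no in-paper argument to compare against; your argument is the canonical one. The two-step structure --- first showing $s^{\ast}Ay = \max_i (Ay)_i$ by sandwiching the convex combination between the pure-strategy bound from the best-response definition and the trivial upper bound, then observing that $\sum_i s_i\bigl(v - (Ay)_i\bigr) = 0$ with nonnegative summands forces every positively-weighted gap to vanish --- is exactly what is needed, and your closing remark about fixing the maximizing convention (the first inequality of Definition~\ref{definitionNE}, rather than the cost-minimizing convention of the game matrices $G^A$, $G^D$) correctly addresses the one genuine ambiguity in the paper's setup. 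The only cosmetic quibble is that you invoke Definition~\ref{definitionNEBR}, which is phrased for Player~2's best response via $B$; the inequality you actually use, $s^{\ast}Ay \geq sAy$ for all $s$, is the Player-1 half of Definition~\ref{definitionNE}. This does not affect the validity of the argument.
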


\section{Adversary Model}\label{Adversary}

To connect strategic behaviour to quantitative risk, this section presents a minimal adversary model that (i) is compatible with mixed strategies at equilibrium, (ii) exposes measurable success probabilities that can be calibrated on images, and (iii) feeds directly into standard likelihood\,$\times$\, impact risk computation \cite{NIST80030}. The construction is aligned with statistical detection theory for wardening tasks and with security games where attackers best-respond or mix \cite{LehmannRomano2022,@Manshaei2013}.

\begin{definition}[Adversary model]\label{def:adv-model}
Let $\mathcal{G}$ be the $2\times2$ steganographic game in Section~\ref{A_Steganographic_Game-Theoretic_Model} with mixed Nash equilibrium $(p^\star,q^\star)$ from Section~\ref{Game_Analysis}. The adversary model $\mathcal{M}_{\mathcal{A}}$ consists of:
\begin{enumerate}[label=(\roman*),leftmargin=1.2em]

\item \emph{Actions and mixing.}
Let the defender's action set be $S_{\mathcal U}=\{H,\bar H\}$ and the adversary's action set be $S_{\mathcal A}=\{L,\bar L\}$, the mixed strategies are
\[
\sigma_{\mathcal U}=(q,1-q)\in\Delta(S_{\mathcal U}), 
\qquad 
\sigma_{\mathcal A}=(p,1-p)\in\Delta(S_{\mathcal A})\,,
\]
and the mixed–strategy Nash equilibrium (MSNE) from Section~\ref{Game_Analysis} is $(p^\star,q^\star)$, where $p^\star$ is the equilibrium probability of $L$ and $q^\star$ is the equilibrium probability of $H$. 

\item \emph{Effectiveness parameters.} When $\mathcal{U}$ hides, $\beta_t^{\mathcal{U}}\in[0,1]$ denotes, at epoch $t$, the probability that hiding prevents compromise; when $\mathcal{U}$ does not hide, $\bar\beta^{\mathcal{U}}\in[0,1]$ captures baseline (non-stego) protection. These parameters are estimated on spatial-domain  PGM (Portable Graymap) imagery with standard steganalysis back-ends in our experiments.

\item \emph{Success event and rates.} Let $\mathrm{Succ}$ be the event ``$\mathcal{A}$ compromises at epoch $t$''. The \emph{conditional} success probabilities $\Pr_{\mathcal{A}}(\mathrm{Succ}\mid \mathrm{Hide}_{\mathcal{U}};t)$ and $\Pr_{\mathcal{A}}(\mathrm{Succ}\mid \lnot\mathrm{Hide}_{\mathcal{U}};t)$ and their derivations are given in \S\ref{subsec:cond-success}; the \emph{unconditional} (overall) success $\Pr_{\mathcal{A}}(\mathrm{Succ};t)$ that averages over $q^\star$ appears in \S\ref{subsec:uncond-success}.

\item \emph{Adversary advantage.} 
The advantage of the warden/adversary $\mathcal{A}$ is
\begin{equation}
\mathsf{Adv}^{(+)}_{\mathcal{G},\mathcal{A}}(t):=\;\Pr_{\mathcal{A}}(\mathrm{Succ}\mid \lnot\mathrm{Hide}_{\mathcal{U}})\;-\;\Pr_{\mathcal{A}}(\mathrm{Succ}\mid \mathrm{Hide}_{\mathcal{U}};t),
\label{eq:signed-adv}
\end{equation}
which is \emph{derived from} the conditional rates in \S\ref{subsec:cond-success}. This model is employed in the risk analysis of \S\ref{Risk_ana}; an absolute-risk formulation is detailed in Appendix~\ref{app:risk-variants}.
\end{enumerate}
\end{definition}

\noindent In Definition~\ref{def:adv-model}, (iii) links the game to observable operating characteristics, while (iv) identifies the sole lift required to compare $s_{\mathcal{U}}$ in risk terms. In \S\ref{Risk_ana}, we compute operational risk $R_t=\mathbb{E}[I]\cdot\Pr_{\mathcal{A}}(\mathrm{Succ};t)$---with impact $I$ as scoped in \S\ref{Motivation_for_Security_Game Model}---using the unconditional rate from \S\ref{subsec:uncond-success}.

\noindent\textbf{Why this model is realistic.}
(i) Wardens operate as detectors with well-defined conditional error/success probabilities; modelling success conditional on the defender’s action is standard in detection theory \cite{LehmannRomano2022}. (ii) Strategic mixing is canonical in security games \cite{@Manshaei2013}; our $(p^\star,q^\star)$ enter only through the referenced success-rate sections. (iii) \emph{Calibratability.} The parameters $\beta_t^{\mathcal{U}}$ and $\bar\beta_t^{\mathcal{U}}$ are estimated on held-out grayscale images stored in the PGM format—e.g., standard spatial-domain benchmarks such as BOSSBase~1.01—using deployable steganalysis pipelines \footnote{BOSSBase~1.01 contains 10{,}000 grayscale images in \texttt{.pgm} format (512\,$\times$\,512), as noted in an academic repository: \url{https://rgu-repository.worktribe.com/output/2508022/population-based-methods-in-image-steganalysis}. Deployable steganalysis code (rich-model feature extractors and the Ensemble Classifier) is provided by the Digital Data Embedding Laboratory: \url{https://www3.cs.stonybrook.edu/~cvl/content/presentations/slides2014barni.pdf} (slide with ‘download/ensemble’ and ‘download/feature\_extractors’ links).}
 (See Section \ref{Risk_ana} for details), making $\Pr_{\mathcal{A}}(\cdot)$ and $\mathsf{Adv}^{(+)}_{\mathcal{G},\mathcal{A}}(\cdot)$ directly usable in risk computations consistent with NIST guidance\cite{NIST80030}.

\section{A Steganographic Game-Theoretic Model} 
\label{A_Steganographic_Game-Theoretic_Model}
Before we present the formal game, we introduce the notion of operational harmony for the defender. In this context, “harmony” refers to the normal state of system operations when no attack is detected and the organisation can carry out its business activities with minimal disruption. The associated monetary gain is denoted by $B^U_{\text{harmony}}$. We then formalise a two-player, static, non-cooperative game tailored to steganographic operations in surveilled networks. The defender (company) is denoted by \(\mathcal{U}\), and the adversary (surveilling warden) by \(\mathcal{A}\). The model is purpose-built to (i) separate \emph{primitive monetary quantities} (costs/benefits one could elicit from business or operational data data) from (ii) \emph{behavioural utilities} that capture risk attitudes and realistic nonlinearities. This separation lets us map empirical or policy inputs into a game whose payoffs better reflect real decision making under risk \cite{GordonLoeb2002, Romanosky2016}.

\paragraph{Players and actions.}
Let $(H,\bar H)\equiv(Hide_{\mathcal U},\lnot Hide_{\mathcal U})$ and $(L,\bar L)\equiv(Look_{\mathcal A},\lnot Look_{\mathcal A})$. The defender chooses whether to embed messages steganographically: \(s_{\mathcal{U}} \in \{ Hide_{\mathcal{U}},\, \lnot Hide_{\mathcal{U}} \}\).
The adversary chooses whether to actively search for stego content: \(s_{\mathcal{A}} \in \{ Look_{\mathcal{A}},\, \lnot Look_{\mathcal{A}} \}\).
The joint action \((s_{\mathcal{U}}, s_{\mathcal{A}})\) determines monetary consequences for both parties.

\paragraph{Primitive monetary components.}
We denote the defender’s \emph{monetary} cost and benefit components by
\[
C^U_{\text{hide}} \ge 0,\quad C^U_{\text{leak}} \ge 0,\quad B^U_{\text{hide}} \ge 0,\quad B^U_{\text{harmony}} \ge 0,
\]
corresponding to (respectively) implementation/operation of steganography, losses in the event of a successful compromise, gains from effective hiding, and the baseline operational benefit when the organisation is not attacked and remains in a harmonious state of operations.
For the adversary, we use
\[
C^A_{\text{look}} \ge 0,\quad B^A_{\text{leak}} \ge 0,
\]
capturing resource/time/legal exposure costs of surveillance and the gains from extracting valuable information.

\paragraph{Nonlinear utility transformations.}
Behavioural realism requires that the \emph{utility} of each monetary component reflects:
(i) \emph{diminishing returns} for benefits (e.g., the first increments of security yield larger marginal utility than later ones),
(ii) \emph{superlinear escalation} for certain losses (e.g., reputational/regulatory cascades), and
(iii) flexible curvature for deployment costs.
We therefore transform the primitives with monotone, twice-differentiable mappings \cite{Pratt1964}:

\begin{multline}
    \tilde B^U_{\text{hide}} = \beta_U \log\!\bigl(1 + B^U_{\text{hide}}/b_U\bigr), \\
\tilde B^U_{\text{harmony}} = \delta_U \log\!\bigl(1 + B^U_{\text{harmony}}/h_U\bigr),
\label{eq:def-concave-benefits}
\end{multline}
\begin{multline}
    \tilde C^U_{\text{leak}} = \alpha_{\ell}\,\bigl(C^U_{\text{leak}}\bigr)^{\gamma_{\ell}}, \gamma_{\ell} > 1, \\
\tilde C^U_{\text{hide}} = \eta_0 + \eta_1 \bigl(C^U_{\text{hide}}\bigr)^{\gamma_h}, \quad \gamma_h \ge 1,
\label{eq:def-convex-defender-costs}
\end{multline}
\begin{multline}
    \tilde B^A_{\text{leak}} = \beta_A \log\!\bigl(1 + B^A_{\text{leak}}/b_A\bigr),
\\ \tilde C^A_{\text{look}} = \alpha_{a}\, \bigl(C^A_{\text{look}}\bigr)^{\gamma_{a}}, \quad \gamma_{a}\ge 1.
\label{eq:def-adversary}
\end{multline}

\noindent
Here \(b_U,h_U,b_A>0\) are scale constants that tune the onset of saturation; 
\(\beta_U,\delta_U,\beta_A>0\) set units/weights for utilities;
\(\alpha_{\ell},\eta_0,\eta_1,\alpha_a>0\) scale costs, and exponents \(\gamma_{\ell}>1\) model superlinear loss escalation (e.g., regulatory penalties and reputational spillovers), while \(\gamma_h,\gamma_a\) allow linear or convex scaling of deployment/search costs.
The choices in \eqref{eq:def-concave-benefits}–\eqref{eq:def-adversary} are standard: 
logarithmic forms implement Arrow–Pratt risk aversion with decreasing absolute risk aversion (DARA) and diminishing marginal utility; power functions model convex loss regions where marginal disutility grows with incident size \cite{Pratt1964,EDPBGuidelines2023Fines,Romanosky2016}.\footnote{Alternative concave families (e.g., CRRA \(x^{1-\rho}/(1-\rho)\) with \(\rho\in(0,1)\)) are compatible with our analysis and can replace the logs if specific elicitation suggests.}
Empirically, security economics exhibits such curvature: breach consequences often magnify through legal/regulatory actions and reputational contagion, whereas protection benefits saturate once the \emph{most} salient risks are mitigated \cite{Romanosky2016,GordonLoeb2002}.

\paragraph{Utility payoff mapping.}
Given \((s_{\mathcal{U}}, s_{\mathcal{A}})\), we compute \emph{utilities} (not raw money) by adding transformed benefits and subtracting transformed costs. 
Let \(U_{\mathcal{U}}(\cdot,\cdot)\) and \(U_{\mathcal{A}}(\cdot,\cdot)\) denote utility payoffs.
The off-diagonal outcomes encode successful exploitation or needless search; the diagonal encodes defended/undefended calm.
We keep the model agnostic to the specific steganographic methods, though content-adaptive schemes (See Section \ref{sec:empirical} for empirical experiment details using \textbf{WOW}, \textbf{S-UNIWARD}, \textbf{HILL}, and \textbf{MiPOD}) and game-theoretic treatments of adaptivity (Sch{\"o}ttle--B{\"o}hme) provide natural instantiations for \(B^U_{\text{hide}}\) and adversary payoffs. See table \ref{tab:nonlinear-game} for utility payoff matrix

\begin{table*}[t]
\centering
\caption{Utility payoff matrix for the steganographic game \(\mathcal{G}\) (defender~\(\mathcal{U}\) rows; adversary~\(\mathcal{A}\) columns). Each cell lists \(\bigl(U_{\mathcal{U}},\,U_{\mathcal{A}}\bigr)\) using the nonlinear utilities in \eqref{eq:def-concave-benefits}–\eqref{eq:def-adversary}.}
\label{tab:nonlinear-game}
\vspace{1ex}
\begin{tabular}{c|c|c}
\(\mathcal{U}\downarrow\;\;\mathcal{A}\rightarrow\) 
& \(Look_{\mathcal{A}}\) 
& \(\lnot Look_{\mathcal{A}}\) \\ \hline\hline
\(Hide_{\mathcal{U}}\) &
\(\bigl(\;\tilde B^U_{\text{hide}} - \tilde C^U_{\text{hide}},\; -\,\tilde C^A_{\text{look}}\;\bigr)\) &
\(\bigl(\;\tilde B^U_{\text{hide}} - \tilde C^U_{\text{hide}},\; 0\;\bigr)\) \\
\hline
\(\lnot Hide_{\mathcal{U}}\) &
\(\bigl(\;-\tilde C^U_{\text{leak}},\; \tilde B^A_{\text{leak}} - \tilde C^A_{\text{look}}\;\bigr)\) &
\(\bigl(\;\tilde B^U_{\text{harmony}},\; 0\;\bigr)\) \\
\end{tabular}
\end{table*}

\paragraph{Interpretation.}
\begin{itemize}
\item \textbf{Top-left} (\(Hide_{\mathcal{U}}, Look_{\mathcal{A}}\)): the defender accrues protected-operations utility \(\tilde B^U_{\text{hide}}\) but pays convex deployment/maintenance \(\tilde C^U_{\text{hide}}\); the adversary pays search \(\tilde C^A_{\text{look}}\) with no leak benefit.
\item \textbf{Top-right} (\(Hide_{\mathcal{U}}, \lnot Look_{\mathcal{A}}\)): identical defender utility as above; the adversary abstains (utility \(0\)).
\item \textbf{Bottom-left} (\(\lnot Hide_{\mathcal{U}}, Look_{\mathcal{A}}\)): potential compromise yields superlinear defender disutility \(\tilde C^U_{\text{leak}}\) (escalation channels); the adversary gets saturated leak benefit \(\tilde B^A_{\text{leak}}\) net of convex search costs.
\item \textbf{Bottom-right} (\(\lnot Hide_{\mathcal{U}}, \lnot Look_{\mathcal{A}}\)): the defender enjoys baseline “harmony” utility \(\tilde B^U_{\text{harmony}}\); the adversary neither spends nor gains.
\end{itemize}

\noindent 
\begin{remark}(on realism and identifiability). The mapping \(\{\text{money}\}\to\{\text{utility}\}\) cleanly isolates curvature and keeps \emph{units} interpretable for elicitation.
Parameters \(\theta=\{b_U,h_U,b_A,\beta_U,\delta_U,\beta_A,\alpha_{\ell},\eta_0,\eta_1,\alpha_a,\gamma_{\ell},\gamma_h,\gamma_a\}\) can be fitted or bounded from:
(i) empirical breach/incident datasets,
(ii) regulatory penalty regimes, and 
(iii) engineering cost curves
(\emph{e.g.}, convex run-time/latency overheads, analyst-hours).
This preserves your downstream sensitivity/equilibrium/risk analysis while making payoffs reflect documented diminishing returns and superlinear liabilities \cite{Romanosky2016,EDPBGuidelines2023Fines}.
\end{remark}

\subsection{Relaxing Rationality: Bounded and Noisy Attackers}
\label{subsec:bounded-rationality}

Classical equilibrium models assume perfectly rational players with common knowledge of pay-offs. In real deployments, however, attackers may behave in ways that deviate from this ideal: they may be myopic \cite{Kardes2011Robust}, focusing only on the immediate payoff rather than anticipating future detection costs (for example, a hacker launching a quick exploit without considering that successive attacks reduce stealth); they may be noisy \cite{McKelveyPalfrey1995,Camerer2011BGT}, meaning they make probabilistic or error‐prone decisions rather than consistently choosing the exact best response (for example, due to imperfect information, fatigue, or randomised tactics); or they may be heterogeneous \cite{Camerer2011BGT}, meaning that the population of attackers comprises different types with varying capabilities, objectives or decision rules (for example, novice insiders, skilled criminals and fully automated bots operating at different levels of search effort). To reflect these empirical realities, we adopt three standard relaxations from behavioural and robust game theory, and propagate them through our success‐rate and risk calculations.

\paragraph{(R1) Quantal–Response (logit) attackers.}
Instead of best responses, the adversary selects actions with probability proportional to their expected utility:
\[
\pi_{\mathcal A}(a \mid \lambda)
=\frac{\exp\{\lambda\,U_{\mathcal A}(a)\}}{\sum_{a'}\exp\{\lambda\,U_{\mathcal A}(a')\}},\quad \lambda\ge 0,
\]
where $\lambda$ controls noise ( $\lambda=0$ uniform random; $\lambda\!\to\!\infty$ recovers best response). A \emph{quantal–response equilibrium} (QRE) solves fixed points of such probabilistic responses for both players; it preserves the comparative statics of the game while allowing systematic mistakes \cite{McKelveyPalfrey1995,Camerer2011BGT}. In this analysis, we replace the adversary’s mix $p^\star$ by the QRE search probability $\tilde p(\lambda)$, and likewise the defender’s $q^\star$ by $\tilde q(\lambda)$, then reuse the conditional and unconditional success formulas defined in Sections~\ref{subsec:cond-success}–\ref{subsec:uncond-success} and the risk mapping expressed in Section~\ref{Risk_ana}.

\paragraph{(R2) Level-$k$ / type mixtures.}
We admit a population mixture of attacker types $\{\tau_k\}$ with weights $\rho_k$; e.g., level–0 random, level–1 best response to level–0, etc., as in behavioural game theory \cite{Camerer2011BGT}. The effective search probability becomes a convex combination $\tilde p=\sum_k \rho_k\,p_k$, producing success rates by linearity of probability (Sections~\ref{subsec:cond-success}–\ref{subsec:uncond-success}) and an expected‐loss risk via the same substitution in Section~\ref{Risk_ana}. This captures heterogeneous or learning-lag attackers without assuming a single perfectly rational type.

\paragraph{(R3) Robust (worst–case) attackers.}
For settings where unpredictability is paramount, we compute \emph{robust} risk by taking the supremum over a set of plausible attacker mixes $\mathcal P$ (or payoff perturbations):
\[
\widehat R \;=\; \sup_{p\in \mathcal P}\; \mathbb E[I]\cdot \Pr_{\mathcal A}(\mathrm{Succ};\,p, q),
\]
which yields conservative, distribution-free bounds under payoff uncertainty or bounded rationality \cite{Kardes2011Robust}. We adopt the same substitution principle: use $p\!\in\!\mathcal P$ in Sections~\ref{subsec:cond-success}–\ref{subsec:uncond-success}, then apply Section~\ref{Risk_ana} with the resulting envelope.

\paragraph{Discussion.}
These relaxations are standard in security games and behavioural game theory: QRE models stochastic choice around better responses; level-$k$ mixtures encode heterogeneous sophistication; robust formulations protect against misspecification and adversarial unpredictability. All three integrate seamlessly with our framework by replacing $(p^\star,q^\star)$ with $(\tilde p,\tilde q)$ from (R1)–(R3) and reusing the downstream success‐rate and risk equations. This approach addresses the concern that “attackers may act unpredictably or irrationally”.

\subsection{Game Model Assumptions: Payoff and Utility Dynamics}
\label{sec:assumptions}

We work with the nonlinear utility transformation introduced in Section~\ref{A_Steganographic_Game-Theoretic_Model}. Let the \emph{primitive} monetary components $\{C^U_{\text{hide}}, C^U_{\text{leak}}, B^U_{\text{hide}}, B^U_{\text{harmony}}, C^A_{\text{look}}, B^A_{\text{leak}}\}$ be strictly nonnegative and finite.\footnote{All parameters are assumed measurable and elicitable from either internal accounting, breach postmortems, or regulatory/market studies; see, e.g., \cite{Romanosky2016,GordonLoeb2002,IoannidisPymWilliams2009}.}

\paragraph{(A1) Monotone, smooth utility curvature.}
For each monetary component $x\!\ge\!0$, the utility mapping $\tilde x \equiv u(x)$ is $\mathcal{C}^2$ and strictly increasing. Benefits use concave forms (e.g., $\log$ or CRRA) to capture diminishing marginal utility; losses and some operational costs use power functions with exponent $>1$ to capture superlinear disutility/overheads.

This is consistent with Arrow–Pratt risk aversion for benefits and convex escalation for large incidents/costs \cite{levy2002arrow,Pratt1964, deck2024simple}. Regulatory fine schedules and reputational cascades also justify convex loss utilities \cite{EDPBGuidelines2023Fines, Romanosky2016}.

\paragraph{(A2) ``Adequate protection'' ordering (defender).}
At the \emph{monetary} level,
\begin{equation}
C^U_{\text{hide}} < C^U_{\text{leak}},
\label{eq:adequate-protection}
\end{equation}
reflecting the classical security design maxim that prevention should cost less than failure (adequate protection). Because $u(\cdot)$ is increasing, \eqref{eq:adequate-protection} preserves the ordering at the utility level ($\tilde C^U_{\text{hide}}<\tilde C^U_{\text{leak}}$) \cite{pfleeger2015security,maghrabimaeva}.

\paragraph{(A3) ``Easiest penetration'' ordering (adversary).}
At the \emph{monetary} level,
\begin{equation}
C^A_{\text{look}} < B^A_{\text{leak}},
\label{eq:easiest-penetration}
\end{equation}
encoding that rational adversaries attack when expected gains exceed costs (``easiest penetration''). Monotonicity gives $\tilde C^A_{\text{look}}<\tilde B^A_{\text{leak}}$ \cite{pfleeger2015security,maghrabimaeva}. \vspace{0.2em}

\paragraph{(A2$^\prime$) Positive Security Incentive (optional, utility space).}
In scenarios where regulatory, reputational, and operational stability dominate marginal gains from a single control, we calibrate the utility ranking
\begin{equation}
\label{eq:positive-incentive-utility}
\tilde B^{U}_{\text{harmony}}
\;>\;
\tilde B^{U}_{\text{hide}} - \tilde C^{U}_{\text{hide}}\,.
\end{equation}
This utility–space expression replaces the monetary inequality to respect the curvature introduced by our nonlinear mappings. It aligns with risk‐management guidance that prioritises mission integrity and compliance harms \cite{NIST80030}, and it echoes well-established security-economics findings: namely, that optimal investment in protection is often substantially less than the full expected loss \cite{GordonLoeb2002}. It also reflects the super‐linear nature of data-protection penalties that can make “harmony” (steady, non-disrupted operations) more valuable than the net benefit of an incremental control \cite{EDPBGuidelines2023Fines}. Full strategic implications are analysed in Section~\ref{Game_Analysis}. 

\paragraph{(A3$^\prime$) Rational Attack Motivation (optional, utility space).}
For environments with economically motivated adversaries, we assume the attacker’s leak benefit is strictly positive in utility space,
\begin{equation}
\label{eq:rational-attack-utility}
\tilde B^{A}_{\text{leak}} \;>\; 0,
\end{equation}
and (where appropriate for a given scenario) the expected \emph{net} utility of attacking is positive:
\begin{equation}
\label{eq:rational-attack-expected}
EU_{\mathcal A}(Look_{\mathcal A}\,|\,q)
=(1-q)\,\tilde B^{A}_{\text{leak}}-\tilde C^{A}_{\text{look}} \;>\; 0\,.
\end{equation}
These statements reflect standard rational-choice/cost–benefit models of offending and attacker behaviour \cite{Becker1968,AndersonSE3}. They are optional scenario calibrations; the stronger monetary cost–benefit ordering introduced in assumption (A2) earlier in the paper implies condition \eqref{eq:rational-attack-utility}. Formal consequences for best responses/mixing are developed in Section~\ref{Game_Analysis}.

\paragraph{(A4) Baselines and normalization.}
The adversary’s abstention payoff is normalised to $0$; the defender’s calm-operation utility satisfies $\tilde B^U_{\text{harmony}}>0$. Normalisation by positive affine transforms of utilities leaves best responses and equilibria invariant \cite{osborne2004introduction}. \vspace{0.2em}

\paragraph{(A5) Expected-utility choice.}
Players maximise \emph{expected utility} (vNM) of the bimatrix entries in Table~\ref{tab:nonlinear-game}; mixed strategies are evaluated as \emph{linear} expectations of \emph{utilities}, not raw money \cite{osborne2004introduction}. Let the defender mix with $q\in[0,1]$ over $\{Hide_{\mathcal U},\lnot Hide_{\mathcal U}\}$ and the adversary mix with $p\in[0,1]$ over $\{Look_{\mathcal A},\lnot Look_{\mathcal A}\}$. With utility entries from Table~\ref{tab:nonlinear-game}, our \emph{primary} formulas are the action-wise expected utilities:
\begin{multline}
      EU_{\mathcal U}(Hide_{\mathcal U}\,|\,p)=p\,u_{\mathcal U}(H,L)+ \\ (1-p)\,u_{\mathcal U}(H,\bar L)
=\tilde B^U_{\text{hide}}-\tilde C^U_{\text{hide}},  
\end{multline}
\begin{multline}
     EU_{\mathcal U}(\lnot Hide_{\mathcal U}\,|\,p)
=p\,u_{\mathcal U}(\bar H,L)+(1-p)\,u_{\mathcal U}(\bar H,\bar L)
\\ =\tilde B^U_{\text{harmony}}-p\!\left(\tilde B^U_{\text{harmony}}+\tilde C^U_{\text{leak}}\right),  
\end{multline}
\begin{multline}
 EU_{\mathcal A}(Look_{\mathcal A}\,|\,q)
=q\,u_{\mathcal A}(H,L)+\\(1-q)\,u_{\mathcal A}(\bar H,L)
=(1-q)\tilde B^A_{\text{leak}}-\tilde C^A_{\text{look}},     
\end{multline}
\begin{align}
 EU_{\mathcal A}(\lnot Look_{\mathcal A}\,|\,q)
&=q\,u_{\mathcal A}(H,\bar L)+(1-q)\,u_{\mathcal A}(\bar H,\bar L)=0.   
\end{align}

Action-wise expected utilities are preferred here because (i) they operate directly on the \emph{nonlinear, risk-aware} utilities $(\tilde B,\tilde C)$ we specified, preserving curvature in all subsequent decisions; (ii) they map one-to-one into best-response comparisons and mixed-strategy indifference in finite games without expanding the full joint expectation, thereby keeping the link between primitives and behaviour transparent; and (iii) they are the standard device for characterizing optimal responses under expected utility and for deriving mixed-strategy conditions in normal-form games\cite{osborne2004introduction,Tadelis2013}. Full equilibrium characterisation (best responses, indifference, and existence) is deferred to Section~\ref{Game_Analysis}.

\paragraph{(A7) Identifiability and elicitation.}
Curvature parameters $\theta=\{b_U,h_U,b_A,\beta_U,\delta_U,\beta_A,\alpha_\ell,\gamma_\ell,\eta_0,\eta_1,\alpha_a,\gamma_h,\gamma_a\}$ can be constrained or fitted from (i) observed post-incident loss tails (e.g., superlinear penalties/regulatory fines), (ii) internal cost curves for deployment/monitoring (often convex in workload), and (iii) macro estimates for breach impacts. Canonical security-economics work (e.g., Gordon--Loeb) also supports diminishing returns to investment, which aligns with concave benefit utilities on the defender side \cite{GordonLoeb2002,IoannidisPymWilliams2009}. \vspace{0.2em}

\paragraph{(A8) Non–zero-sum and asymmetry.}
We make no zero-sum assumption. The defender’s and adversary’s utilities need not be negatives of each other; this better reflects compliance, reputation, and process frictions that do not symmetrically transfer utility between players \cite{osborne2004introduction}. \vspace{0.2em}

\paragraph{(A9) Mapping monetary scenarios to utilities.}
Because $u(\cdot)$ is strictly increasing, any policy/empirical inequality posited at the monetary level (e.g., \eqref{eq:adequate-protection}–\eqref{eq:easiest-penetration}) induces the same ordering at the utility level. This permits scenario design and sensitivity analysis to be specified in familiar monetary terms while guaranteeing consistent utility comparisons in equilibrium analysis \cite{osborne2004introduction}.

\section{Game Model Analysis}
\label{Game_Analysis}
This section analyses the Nash Equilibrium strategies of the steganographic game $\mathcal{G}$ presented in section \ref{A_Steganographic_Game-Theoretic_Model}. First, it is demonstrated that, whereas the game $\mathcal{G}$ does not admit a pure Nash Equilibrium solution, a Mixed Nash Equilibrium exists. Second, the game's strategies and payoffs are discussed in detail, which includes the use of numerical values that aim at representing those found in some real scenarios. For bounded-rational variants (QRE, level-
k, robust), see Section~\ref{subsec:bounded-rationality}.

\subsection{Pure Nash Equilibrium Analysis}\label{PureNashAnalysis}
All possible states are considered to solve the game $\mathcal G$ and find potential Pure Nash Equilibria. By examining deviation incentives, the analysis in this subsection demonstrates that no combination of pure strategies exists for players $\mathcal{U}$ and $\mathcal{A}$ to satisfy a Nash equilibrium solution. The evaluation of deviation incentives involves assessing potential gains or losses for a player if they deviate unilaterally while the other player holds fixed \cite{osborne2004introduction,Tadelis2013}. 

\begin{theorem}
The steganographic security game $\mathcal G$ with utility payoffs given in Table~\ref{tab:nonlinear-game} admits no pure Nash equilibrium.
\end{theorem}

Throughout, we use the assumptions in Section~\ref{sec:assumptions}: (i) $\tilde C^A_{\text{look}}>0$ and $\tilde B^A_{\text{leak}}>\tilde C^A_{\text{look}}$; 
(ii) $\tilde C^U_{\text{leak}}>\tilde C^U_{\text{hide}}$; and 
(iii) \emph{harmony-dominance} $\tilde B^U_{\text{harmony}}>\tilde B^U_{\text{hide}}-\tilde C^U_{\text{hide}}$.
Note these are \emph{utility}-level inequalities; the monotone curvature mappings ensure signs and orderings are preserved relative to the underlying monetary primitives.

\medskip
\textit{(a) State 1} $(Hide_{\mathcal U}, Look_{\mathcal A})$.
\begin{proposition}
Under $Hide_{\mathcal U}$, the adversary strictly prefers $\lnot Look_{\mathcal A}$ to $Look_{\mathcal A}$; hence $(Hide_{\mathcal U}, Look_{\mathcal A})$ cannot be a pure Nash equilibrium.
\end{proposition}
\begin{proof}
From Table~\ref{tab:nonlinear-game},
\[
u_{\mathcal A}(H,L) \;=\; -\,\tilde C^A_{\text{look}} \;<\; 0
\quad\text{and}\quad
u_{\mathcal A}(H,\bar L) \;=\; 0.
\]
Intuitively, when the defender hides, the warden’s search generates cost but no direct leak benefit; with no leak payoff to offset $\tilde C^A_{\text{look}}$, searching is strictly worse than abstaining. Formally, since $\tilde C^A_{\text{look}}>0$ and utility is additively separable in this cell, the unilateral deviation $Look_{\mathcal A}\to \lnot Look_{\mathcal A}$ increases the adversary’s payoff from $-\,\tilde C^A_{\text{look}}$ to $0$. Therefore $(Hide_{\mathcal U}, Look_{\mathcal A})$ violates mutual best response and cannot be a PSNE.
\end{proof}

\medskip
\textit{(b) State 2} $(Hide_{\mathcal U}, \lnot Look_{\mathcal A})$.
\begin{proposition}
If $\tilde B^U_{\text{harmony}}>\tilde B^U_{\text{hide}}-\tilde C^U_{\text{hide}}$, then $(Hide_{\mathcal U}, \lnot Look_{\mathcal A})$ is not a pure Nash equilibrium.
\end{proposition}
\begin{proof}
Holding $\lnot Look_{\mathcal A}$ fixed, the adversary’s payoff $u_{\mathcal A}(H,\bar L)=0$ already weakly dominates $u_{\mathcal A}(H,L)=-\tilde C^A_{\text{look}}$; thus $\lnot Look_{\mathcal A}$ is a best response to $Hide_{\mathcal U}$. The defender’s choice is pivotal: with $Hide_{\mathcal U}$ the defender gets
\[
u_{\mathcal U}(H,\bar L) \;=\; \tilde B^U_{\text{hide}}-\tilde C^U_{\text{hide}},
\]
while deviating to $\lnot Hide_{\mathcal U}$ yields
\[
u_{\mathcal U}(\bar H,\bar L) \;=\; \tilde B^U_{\text{harmony}}.
\]
By the stated inequality $\tilde B^U_{\text{harmony}}>\tilde B^U_{\text{hide}}-\tilde C^U_{\text{hide}}$, the deviation $\;Hide_{\mathcal U}\to\lnot Hide_{\mathcal U}\;$ strictly increases the defender’s utility in the calm (no-search) column. Economically, if the warden is not searching, continuing to pay the convex running cost of the stego control is strictly dominated by enjoying the harmony state. Hence $(Hide_{\mathcal U}, \lnot Look_{\mathcal A})$ fails to be a PSNE.
\end{proof}

\medskip
\textit{(c) State 3} $(\lnot Hide_{\mathcal U}, Look_{\mathcal A})$.
\begin{proposition}
If $\tilde C^U_{\text{leak}}>\tilde C^U_{\text{hide}}$ and $\tilde B^U_{\text{hide}}\ge 0$, then $(\lnot Hide_{\mathcal U}, Look_{\mathcal A})$ is not a pure Nash equilibrium.
\end{proposition}
\begin{proof}
Given $Look_{\mathcal A}$, the defender compares the \emph{leak} cell with the \emph{hide while attacked} cell. Under no hiding,
\[
u_{\mathcal U}(\bar H,L) \;=\; -\,\tilde C^U_{\text{leak}}.
\]
If the defender deviates to hiding,
\[
u_{\mathcal U}(H,L) \;=\; \tilde B^U_{\text{hide}} - \tilde C^U_{\text{hide}}
\;\;\ge\;\; -\,\tilde C^U_{\text{hide}}
\quad\text{(since $\tilde B^U_{\text{hide}}\ge 0$).}
\]
Consider the deviation gain:
\begin{multline*}
\Delta_{\mathcal U}
=
u_{\mathcal U}(H,L)-u_{\mathcal U}(\bar H,L)=\bigl(\tilde B^U_{\text{hide}}-\tilde C^U_{\text{hide}}\bigr)
- \bigl(-\,\tilde C^U_{\text{leak}}\bigr)\\=
\tilde B^U_{\text{hide}}+\tilde C^U_{\text{leak}}-\tilde C^U_{\text{hide}}.   
\end{multline*}

With $\tilde C^U_{\text{leak}}>\tilde C^U_{\text{hide}}$ and $\tilde B^U_{\text{hide}}\ge 0$, we have $\Delta_{\mathcal U}>0$. Hence, the defender strictly prefers to deviate to $Hide_{\mathcal U}$ against an active searcher. (For completeness: the adversary is indeed best responding in this row because $u_{\mathcal A}(\bar H,L)=\tilde B^A_{\text{leak}}-\tilde C^A_{\text{look}}>0 = u_{\mathcal A}(\bar H,\bar L)$.) Therefore $(\lnot Hide_{\mathcal U}, Look_{\mathcal A})$ is not a PSNE.
\end{proof}

\medskip
\textit{(d) State 4} $(\lnot Hide_{\mathcal U}, \lnot Look_{\mathcal A})$.
\begin{proposition}
If $\tilde B^A_{\text{leak}}>\tilde C^A_{\text{look}}$, then $(\lnot Hide_{\mathcal U}, \lnot Look_{\mathcal A})$ is not a pure Nash equilibrium.
\end{proposition}
\begin{proof}
Holding $\lnot Hide_{\mathcal U}$ fixed, the adversary compares
\[
u_{\mathcal A}(\bar H,\bar L)=0
\quad\text{vs.}\quad
u_{\mathcal A}(\bar H,L)=\tilde B^A_{\text{leak}}-\tilde C^A_{\text{look}}.
\]
Under the profitability condition $\tilde B^A_{\text{leak}}>\tilde C^A_{\text{look}}$, the deviation $\;\lnot Look_{\mathcal A}\to Look_{\mathcal A}\;$ yields strictly positive utility (leak benefit net of convex search cost). Intuitively, if the defender leaves traffic unprotected, a rational warden exploits it. Thus $(\lnot Hide_{\mathcal U}, \lnot Look_{\mathcal A})$ fails to be a PSNE.
\end{proof}

\medskip
\noindent Each of the four pure profiles admits a profitable unilateral deviation by at least one player under standard security–economics inequalities on the utility-transformed payoffs. Hence, no pure Nash equilibrium exists.

Having demonstrated that the game $G$ does not admit any Pure Nash Equilibrium solution, the following subsection \ref{MixNE} presents the Mixed Nash Equilibrium solution.

\subsection{Mixed Nash Equilibrium Analysis}\label{MixNE}
A mixed–strategy Nash equilibrium (MSNE) allows players to randomise across pure actions so that each player is indifferent across the actions used with positive probability; in finite normal–form games, this is characterised by \emph{linearity of expected utility} and the \emph{indifference principle} \cite{osborne2004introduction,Tadelis2013}. We now derive the MSNE for $\mathcal G$ using the \emph{action-wise expected utilities} defined in (A5) and the utility entries in Table~\ref{tab:nonlinear-game}.

\begin{theorem}\label{theorem_p}
Consider the steganographic game $\mathcal G$ with utility payoffs in Table~\ref{tab:nonlinear-game}. Suppose
\begin{multline*}
0<\tilde C^A_{\text{look}}<\tilde B^A_{\text{leak}},\qquad
0<\tilde B^U_{\text{harmony}}-\tilde B^U_{\text{hide}}+\tilde C^U_{\text{hide}} \\
<\tilde B^U_{\text{harmony}}+\tilde C^U_{\text{leak}}.   
\end{multline*}

Then there exists an \emph{interior} mixed–strategy equilibrium $(p^\star,q^\star)\in(0,1)^2$, where
\begin{equation}\label{equation_p}
    p^{\star}
    =\frac{\tilde B^U_{\text{harmony}}-\tilde B^U_{\text{hide}}+\tilde C^U_{\text{hide}}}
           {\tilde B^U_{\text{harmony}}+\tilde C^U_{\text{leak}}},
\end{equation}
\begin{equation}\label{equation_q}
    q^{\star}
    = 1-\frac{\tilde C^A_{\text{look}}}{\tilde B^A_{\text{leak}}}.
\end{equation}
Here $p^\star$ is the adversary’s probability of \emph{searching} $(Look_{\mathcal A})$ and $q^\star$ is the defender’s probability of \emph{hiding} $(Hide_{\mathcal U})$. 
\end{theorem}

\paragraph{Derivation via action-wise indifference.}
Let the adversary mix with $p\in[0,1]$ over $\{L,\bar L\}$ and the defender mix with $q\in[0,1]$ over $\{H,\bar H\}$, where $H\equiv Hide_{\mathcal U}$, $L\equiv Look_{\mathcal A}$, and bars denote “do not”.

\smallskip
\noindent\textbf{Defender’s indifference (solves for $p^\star$).}  
Using (A5) with Table~\ref{tab:nonlinear-game},
\begin{align*}
EU_{\mathcal U}(H\,|\,p) \;=\; \tilde B^U_{\text{hide}}-\tilde C^U_{\text{hide}},\qquad \\
EU_{\mathcal U}(\bar H\,|\,p)\;=\; \tilde B^U_{\text{harmony}}-p\,\bigl(\tilde B^U_{\text{harmony}}+\tilde C^U_{\text{leak}}\bigr).   
\end{align*}

In any interior MSNE the defender is indifferent: $EU_{\mathcal U}(H\,|\,p^\star)=EU_{\mathcal U}(\bar H\,|\,p^\star)$. Hence
\[
\tilde B^U_{\text{hide}}-\tilde C^U_{\text{hide}}
=\tilde B^U_{\text{harmony}}-p^\star\!\left(\tilde B^U_{\text{harmony}}+\tilde C^U_{\text{leak}}\right),
\]
which yields \eqref{equation_p}:
\[
p^\star
=\frac{\tilde B^U_{\text{harmony}}-\tilde B^U_{\text{hide}}+\tilde C^U_{\text{hide}}}
       {\tilde B^U_{\text{harmony}}+\tilde C^U_{\text{leak}}}.
\]

\smallskip
\noindent\textbf{Adversary’s indifference (solves for $q^\star$).}
Again from (A5) and Table~\ref{tab:nonlinear-game},
\[
EU_{\mathcal A}(L\,|\,q)=(1-q)\,\tilde B^A_{\text{leak}}-\tilde C^A_{\text{look}},\qquad
EU_{\mathcal A}(\bar L\,|\,q)=0.
\]
Indifference at an interior MSNE requires $EU_{\mathcal A}(L\,|\,q^\star)=EU_{\mathcal A}(\bar L\,|\,q^\star)$, hence
\[
(1-q^\star)\,\tilde B^A_{\text{leak}}-\tilde C^A_{\text{look}}=0
\quad\Rightarrow\quad
q^\star=1-\frac{\tilde C^A_{\text{look}}}{\tilde B^A_{\text{leak}}},
\]
which is \eqref{equation_q}.

The stated inequalities ensure $0<p^\star<1$ and $0<q^\star<1$:the denominator of \eqref{equation_p} is strictly positive by construction, and the numerator lies strictly between $0$ and the denominator; similarly, $0<\tilde C^A_{\text{look}}<\tilde B^A_{\text{leak}}$ ensures $q^\star\in(0,1)$. By Nash’s theorem, an equilibrium exists in every finite game; given the pure–strategy nonexistence proved in Section~\ref{PureNashAnalysis}, the equilibrium characterised above is (under the stated regularity) the unique interior MSNE solving the mutual indifference conditions \cite{osborne2004introduction}.
\subsection{Conditional Success Rates under MixNE}
\label{subsec:cond-success}

Building on the mixed–strategy solution in \S\ref{MixNE}, let \(p^\star\in[0,1]\) and \(q^\star\in[0,1]\) denote, respectively, the adversary’s and defender’s equilibrium randomization probabilities. The steganographic \emph{effectiveness} of the defender is modeled by the time–varying parameter \(\beta_t^{\mathcal U}\in[0,1]\), interpreted as the conditional probability that steganographic embedding prevents a successful compromise at decision epoch~\(t\). 

In contrast, the parameter \(\bar\beta^{\mathcal U}\in[0,1]\) represents a baseline or non-steganographic control state, capturing the limited residual protection offered by ordinary hardening measures such as intrusion prevention, network segmentation, or anomaly filtering. 
A large value of \(\bar\beta^{\mathcal U}\) (e.g., \(0.9\)–\(0.95\)) encodes ineffective baseline controls when \(\lnot\mathrm{Hide}_{\mathcal U}\) is chosen, indicating that conventional defences provide minimal resistance to compromise and motivating the adoption of steganographic concealment to enhance security. All formulas remain valid if \(p^{\star},q^{\star}\) are replaced by the bounded-rational variants \(\tilde p,\tilde q\) from Section~\ref{subsec:bounded-rationality}.
\paragraph{Adversary (conditional) success.}
Given the defender’s action, the adversary’s conditional success probabilities are defined as
\begin{align}
\Pr_{\mathcal A}\!\bigl(Succ \mid Hide_{\mathcal U};\,t\bigr)
    &= p^\star \bigl(1-\beta_t^{\mathcal U}\bigr), 
    \label{eq:adv-hide-cond}\\[2mm]
\Pr_{\mathcal A}\!\bigl(Succ \mid \lnot Hide_{\mathcal U};\,t\bigr)
    &= p^\star \bigl(1-\bar\beta^{\mathcal U}\bigr),
    \label{eq:adv-nohide-cond}
\end{align}  
The monotonicity is immediate: 
$\partial \Pr_{\mathcal A}(Succ\mid\cdot)/\partial p^\star \ge 0$ and 
$\partial \Pr_{\mathcal A}(Succ\mid\cdot)/\partial \beta \le 0$.

\paragraph{Defender (conditional) success.}
By the complementarity of detection and miss events, the defender’s conditional success probabilities follow as
\begin{align}
\Pr_{\mathcal U}\!\bigl(Succ \mid Hide_{\mathcal U};\,t\bigr)
    &= 1 - p^\star \bigl(1-\beta_t^{\mathcal U}\bigr)
     \;=\; (1-p^\star) + p^\star \beta_t^{\mathcal U}, 
     \label{eq:def-hide-cond}\\[2mm]
\Pr_{\mathcal U}\!\bigl(Succ \mid \lnot Hide_{\mathcal U};\,t\bigr)
    &= 1 - p^\star \bigl(1-\bar\beta^{\mathcal U}\bigr)
     \;=\; (1-p^\star) + p^\star \bar\beta^{\mathcal U}.
     \label{eq:def-nohide-cond}
\end{align}
Thus, $\beta_t^{\mathcal U}$ captures the steganographic contribution to defence effectiveness,
whereas $\bar\beta^{\mathcal U}$ quantifies the limited success rate of baseline controls.

\paragraph{Extremal and intermediate regimes.}
For \eqref{eq:adv-hide-cond} (the same logic applies to \eqref{eq:adv-nohide-cond}):
\begin{multline}
   \text{Max adversary success:}\quad 
\Pr_{\mathcal A}(Succ \mid Hide_{\mathcal U};\,t)=1
\\ \text{iff } p^\star=1 \text{ and } \beta_t^{\mathcal U}=0; \label{eq:adv-max} 
\end{multline}
\begin{multline}
\text{Min adversary success:}\quad 
\Pr_{\mathcal A}(Succ \mid Hide_{\mathcal U};\,t)=0
\\ \text{if } p^\star=0 \ \text{or}\ \beta_t^{\mathcal U}=1; \label{eq:adv-min}   
\end{multline}
\begin{multline}
   \text{Intermediate regime:}\quad 
\Pr_{\mathcal A}(Succ \mid Hide_{\mathcal U};\,t)\in(0,1)
\\ \text{for } p^\star\in(0,1),\ \beta_t^{\mathcal U}\in(0,1).\label{eq:adv-int} 
\end{multline}

By \eqref{eq:def-hide-cond}, the defender’s extrema invert: 
$\Pr_{\mathcal U}(Succ \mid Hide_{\mathcal U};\,t)=1$ when $p^\star=0$ or $\beta_t^{\mathcal U}=1$, 
and $\Pr_{\mathcal U}(Succ \mid Hide_{\mathcal U};\,t)=0$ when $p^\star=1$ and $\beta_t^{\mathcal U}=0$. 
These bounds make explicit how equilibrium aggressiveness $p^\star$ 
and technical effectiveness $(\beta_t^{\mathcal U},\bar\beta^{\mathcal U})$ jointly govern operational outcomes.

\paragraph{Dynamic \(\beta_t^{\mathcal U}\) (learning/adaptation).}
To reflect evolving steganalysis and defensive tuning, the defender’s effectiveness is allowed to vary over time:
\begin{align}
\beta_{t+1}^{\mathcal U} \;=\; \beta_t^{\mathcal U} \;-\; \eta\,\mathsf{Adv}^{(+)}_{\mathcal{G},\mathcal{A}}(t),
\qquad \eta>0,\ \ \mathsf{Adv}^{(+)}_{\mathcal{G},\mathcal{A}}(t) \ge 0,
\label{eq:beta-grad}
\end{align}
where $\mathsf{Adv}^{(+)}_{\mathcal{G},\mathcal{A}}(t)$ summarises adversarial learning or detector improvement at epoch~$t$ 
(e.g., enhanced feature extraction or classifier precision reduces $\beta_t^{\mathcal U}$).
This adaptive formulation captures the dynamic tension between evolving steganalysis and defensive concealment strategies.
\subsection{Unconditional Success Rates under MixNE}
\label{subsec:uncond-success}
Building on the conditional analysis and the equilibrium randomisations \(p^\star,q^\star\) from \S\ref{MixNE}, we now aggregate over the defender’s hide/no–hide cases using the \emph{law of total probability}. This yields the \emph{unconditional} (overall) success probabilities that weight each conditional branch by its respective mixing probability. Formally, the events \(\{Hide_{\mathcal U},\lnot Hide_{\mathcal U}\}\) form a partition of the sample space for the defender’s action, hence
$\Pr(A)=\Pr(A\mid Hide_{\mathcal U})\,\Pr(Hide_{\mathcal U})+\Pr(A\mid \lnot Hide_{\mathcal U})\,\Pr(\lnot Hide_{\mathcal U}),$
as in \cite{ross2020first}.  
We retain the time–varying effectiveness parameters 
\(\beta_t^{\mathcal U}\in[0,1]\) (steganographic concealment) 
and \(\bar\beta^{\mathcal U}\in[0,1]\) (non–steganographic baseline control).

\paragraph{Adversary (unconditional) success.}
Using \eqref{eq:adv-hide-cond}–\eqref{eq:adv-nohide-cond} and \(\Pr(Hide_{\mathcal U})=q^\star\), we obtain
\begin{align}
\Pr_{\mathcal A}\!\bigl(Succ;\,t\bigr)
& = \Pr_{\mathcal A}\!\bigl(Succ \mid Hide_{\mathcal U};\,t\bigr)\,q^\star
   + \Pr_{\mathcal A}\!\bigl(Succ \mid \lnot Hide_{\mathcal U};\,t\bigr)\, \nonumber\\ & \quad(1-q^\star) \nonumber\\
&= p^\star\Big[q^\star\big(1-\beta_t^{\mathcal U}\big) + (1-q^\star)\big(1-\bar\beta^{\mathcal U}\big)\Big].
\label{eq:adv-uncond}
\end{align}
Expression~\eqref{eq:adv-uncond} is affine in \(p^\star\) and \(q^\star\) and nonincreasing in each effectiveness parameter. In particular,
\begin{multline}
\frac{\partial}{\partial p^\star}\Pr_{\mathcal A}(Succ;t)
= q^\star(1-\beta_t^{\mathcal U})+(1-q^\star)(1-\bar\beta^{\mathcal U}) \;\ge 0,
\\
\frac{\partial}{\partial \beta_t^{\mathcal U}}\Pr_{\mathcal A}(Succ;t)
= -\,p^\star q^\star \;\le 0,
\end{multline}
and the analogous sign holds for \(\bar\beta^{\mathcal U}\).  
Thus, adversary success grows with aggressiveness \(p^\star\) but decreases with either form of defensive effectiveness.

\paragraph{Defender (unconditional) success.}
From \eqref{eq:def-hide-cond}–\eqref{eq:def-nohide-cond} and the same probabilistic partition,
\begin{align}
\Pr_{\mathcal U}\!\bigl(Succ;\,t\bigr)
&= \Pr_{\mathcal U}\!\bigl(Succ \mid Hide_{\mathcal U};\,t\bigr)\,q^\star
 + \Pr_{\mathcal U}\!\bigl(Succ \mid \lnot Hide_{\mathcal U};\,t\bigr) \nonumber \\ &  \quad(1-q^\star) \nonumber\\
&= q^\star\!\big[(1-p^\star)+p^\star\beta_t^{\mathcal U}\big]
  + (1-q^\star)\!\big[(1-p^\star)+p^\star\bar\beta^{\mathcal U}\big] \nonumber\\
&= 1 - p^\star\Big[q^\star\big(1-\beta_t^{\mathcal U}\big)
     + (1-q^\star)\big(1-\bar\beta^{\mathcal U}\big)\Big]
\nonumber \\ &= 1 - \Pr_{\mathcal A}\!\bigl(Succ;\,t\bigr),
\label{eq:def-uncond}
\end{align}
where the final equality follows from the complementarity of compromise and preservation events in each detection branch \cite{KayDetTheory}.  
This dual formulation highlights that unconditional performance is jointly shaped by both the equilibrium mixing probabilities \((p^\star,q^\star)\) and the relative strengths \((\beta_t^{\mathcal U},\bar\beta^{\mathcal U})\), linking operational outcomes directly to the underlying game–theoretic structure.
\paragraph{Extremal and intermediate regimes.}
Because \(\Pr_{\mathcal A}(Succ;t)\) is affine in \(p^\star\in[0,1]\), its extrema with respect to \(p^\star\) occur at the boundary points:
\begin{align}
\nonumber\text{Max adversary success (given \(q^\star,\beta_t^{\mathcal U},\bar\beta^{\mathcal U}\)):} \Pr_{\mathcal A}(Succ;t)=\\ \nonumber
q^\star(1-\beta_t^{\mathcal U}) +(1-q^\star)(1-\bar\beta^{\mathcal U}) \\ 
 \quad \text{at }p^\star=1, \label{eq:adv-uncond-max}
\end{align}
\begin{align}
\text{Min adversary success:}\quad
&\Pr_{\mathcal A}(Succ;t)=0
&\text{at }p^\star=0.
\label{eq:adv-uncond-min}
\end{align}

By \eqref{eq:def-uncond}, the defender’s extrema invert accordingly:
\begin{multline*}
   \Pr_{\mathcal U}(Succ;t)=1
   \quad\text{when }p^\star=0,\\[1mm]
   \Pr_{\mathcal U}(Succ;t)
   = q^\star\beta_t^{\mathcal U}
     + (1-q^\star)\bar\beta^{\mathcal U}
   \quad\text{when }p^\star=1.
\end{multline*}

For interior values \(p^\star\in(0,1)\) and effectiveness parameters 
\(\beta_t^{\mathcal U},\bar\beta^{\mathcal U}\in(0,1)\), both players’ unconditional success probabilities lie strictly between these limits. The convex combination in \eqref{eq:adv-uncond} therefore makes explicit that, under equilibrium mixing, the overall adversary success rate is the weighted average of the branch-specific miss-detection probabilities, fully consistent with the law of total probability \cite{ross2020first}.

\subsection{Interpretation and Implications for the Nash Equilibrium Solutions}
This subsection interprets the Mixed Nash Equilibrium (MixNE) obtained in \S\ref{MixNE} and draws implications for security economics under the nonlinear utility mappings of \S\ref{A_Steganographic_Game-Theoretic_Model}. In a $2\times2$ game, the MixNE arises by making each player indifferent across their two actions; algebraically, this yields the closed-form probabilities in \eqref{equation_p}–\eqref{equation_q}. The logic is standard: in any finite game, a player mixes only over actions that yield equal expected \emph{utility}, so that unilateral deviations do not improve payoffs. Our use of transformed (concave/convex) utilities does not alter this structure—indifference is taken in utility space—so the derivation mirrors the textbook treatment of mixed strategies \cite{osborne2004introduction,Tadelis2013}.

\paragraph{Reading \(p^\star\).} 
Equation \eqref{equation_p} shows that the adversary’s equilibrium aggressiveness \(p^\star\) depends only on the \emph{defender’s} transformed parameters,
\[
p^\star=\frac{\tilde B^U_{\text{harmony}}-\tilde B^U_{\text{hide}}+\tilde C^U_{\text{hide}}}{\tilde B^U_{\text{harmony}}+\tilde C^U_{\text{leak}}}\in[0,1].
\]
Intuitively, when leak losses escalate superlinearly (larger \(\tilde C^U_{\text{leak}}\) via higher \(\gamma_\ell\) or scale), the denominator grows, pushing \(p^\star\) \emph{down}. To keep the defender indifferent between hide and no–hide, the warden must \emph{attack less often}; otherwise the defender would strictly prefer to hide. Conversely, if deploying steganography is convex (larger \(\tilde C^U_{\text{hide}}\)), the numerator rises and \(p^\star\) increases; the more costly defence lets the warden mix more aggressively while maintaining the defender’s indifference. Because \(\tilde B^U_{\text{hide}}\) and \(\tilde B^U_{\text{harmony}}\) are concave (diminishing returns), their marginal influence on \(p^\star\) attenuates as the corresponding monetary primitives grow, matching risk-averse behaviour (Arrow–Pratt) \cite{osborne2004introduction, levy2002arrow,Pratt1964}.

\paragraph{Reading \(q^\star\).}
Equation \eqref{equation_q} depends only on the \emph{adversary’s} transformed parameters,
\[
q^\star=1-\frac{\tilde C^A_{\text{look}}}{\tilde B^A_{\text{leak}}}\in[0,1].
\]
Raising search/surveillance cost \(\tilde C^A_{\text{look}}\) (e.g., steganalysis compute, analyst time, legal exposure) increases \(q^\star\), i.e., the defender hides more often at equilibrium; decreasing the warden’s net benefit \(\tilde B^A_{\text{leak}}\) has the same effect. Because \(\tilde B^A_{\text{leak}}\) is concave, expansions of the warden’s monetary benefit face diminishing marginal utility, which dampens the reduction in \(q^\star\) when leak payoffs grow. These comparative statics are the utility-space analogue of classic best-response thresholds for mixed strategies \cite{Tadelis2013}.

\paragraph{Feasibility and corners.}
The bounds \(0\le q^\star\le1\) require \(\tilde C^A_{\text{look}}\le \tilde B^A_{\text{leak}}\); otherwise the adversary’s “look” is dominated and the equilibrium collapses to a corner with \(q^\star=1\) and \(p^\star=0\). Similarly, \(0\le p^\star\le1\) imposes \(\tilde B^U_{\text{harmony}}+\tilde C^U_{\text{leak}}>0\) (true by construction) and \(\tilde B^U_{\text{harmony}}-\tilde B^U_{\text{hide}}+\tilde C^U_{\text{hide}}\in[0,\tilde B^U_{\text{harmony}}+\tilde C^U_{\text{leak}}]\). When feasibility fails, the MixNE devolves to a pure best response (e.g., \(\lnot Look_{\mathcal A}\) if surveillance is too costly), exactly as standard equilibrium existence/selection arguments predict \cite{osborne2004introduction}.

\paragraph{Link to success rates and risk.}
The equilibrium mixes \((p^\star,q^\star)\) are \emph{inputs} to operational metrics. In \S\ref{subsec:cond-success}–\S\ref{subsec:uncond-success} we translated them into (i) conditional success rates under hide/no–hide branches and (ii) unconditional overall success via total probability, and then into expected loss \(\mathbb E[L]\) by multiplying by impact. This is exactly the probability–impact decomposition used in quantitative risk analysis; if success and impact are dependent, a Bayesian treatment is natural \cite{FentonNeilBNs}.

\paragraph{Policy levers and calibration.}
Because curvature is explicit, parameters can be elicited from (i) observed breach cost escalation (fixing \(\gamma_\ell\), \(\alpha_\ell\)), (ii) measured deployment overheads (\(\gamma_h,\eta_1\)), and (iii) steganalysis pipeline costs/benefits for the warden (\(\gamma_a,\alpha_a,\beta_A,b_A\)). With these in hand, comparative statics map interventions (e.g., raising \(\tilde C^A_{\text{look}}\) through stronger internal controls or legal deterrents; lowering \(\tilde B^A_{\text{leak}}\) by data minimisation) to predictable shifts in \((p^\star,q^\star)\), and hence to conditional/unconditional success and expected loss. This aligns with security–economics practice and mirrors prior game-theoretic models of steganography where warden/encoder incentives are explicitly equilibrated \cite{Schottle2013,Schottle2016a}.

\subsection{Quantitative Evaluation}\label{sec:quantitative-eval}
This subsection provides a coherent quantitative study of the steganographic game that connects monetary primitives to observable strategic behaviour. Building on the nonlinear utility maps in \eqref{eq:def-concave-benefits}--\eqref{eq:def-adversary} and the equilibrium characterisation in Theorem~\ref{theorem_p}, we show how curvature in benefits and costs reshapes best responses and mixing. Figures~\ref{fig:F1_curvature}--\ref{fig:F5_distribution} supply (i) an intuitive understanding of the chosen mappings and (ii) empirical evidence regarding equilibrium probabilities, their curvature sensitivity, and their dispersion under parameter uncertainty.

\begin{figure*}[ht]
  \centering
  \includegraphics[width=0.9\linewidth]{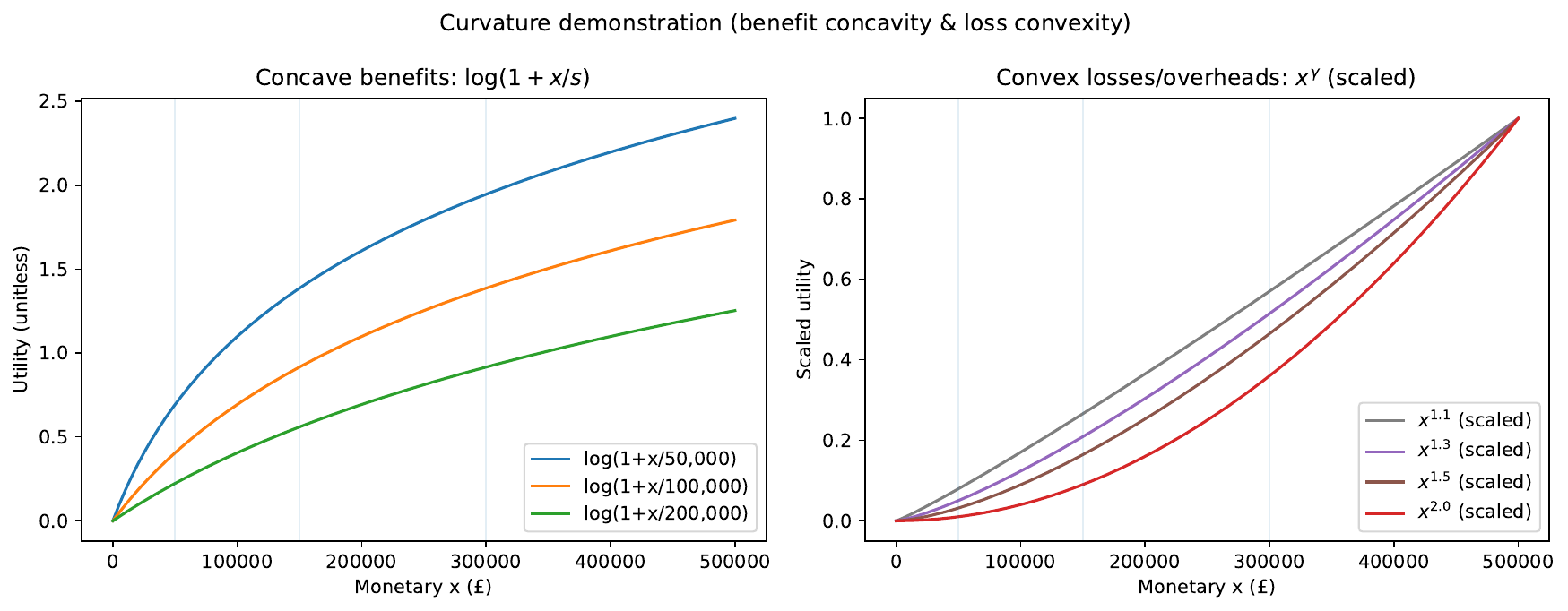}
  \caption{Utility transformations. \textbf{Left:} Concave benefit curves $x\mapsto\log(1+x/s)$ for three scales $s$ show diminishing marginal returns. \textbf{Right:} Convex cost curves $x\mapsto x^\gamma$ (normalised at £500 k) for four exponents illustrate accelerating disutility for large losses.}
  \label{fig:F1_curvature}
\end{figure*}

\begin{figure*}[ht]
  \centering
  \includegraphics[width=\linewidth]{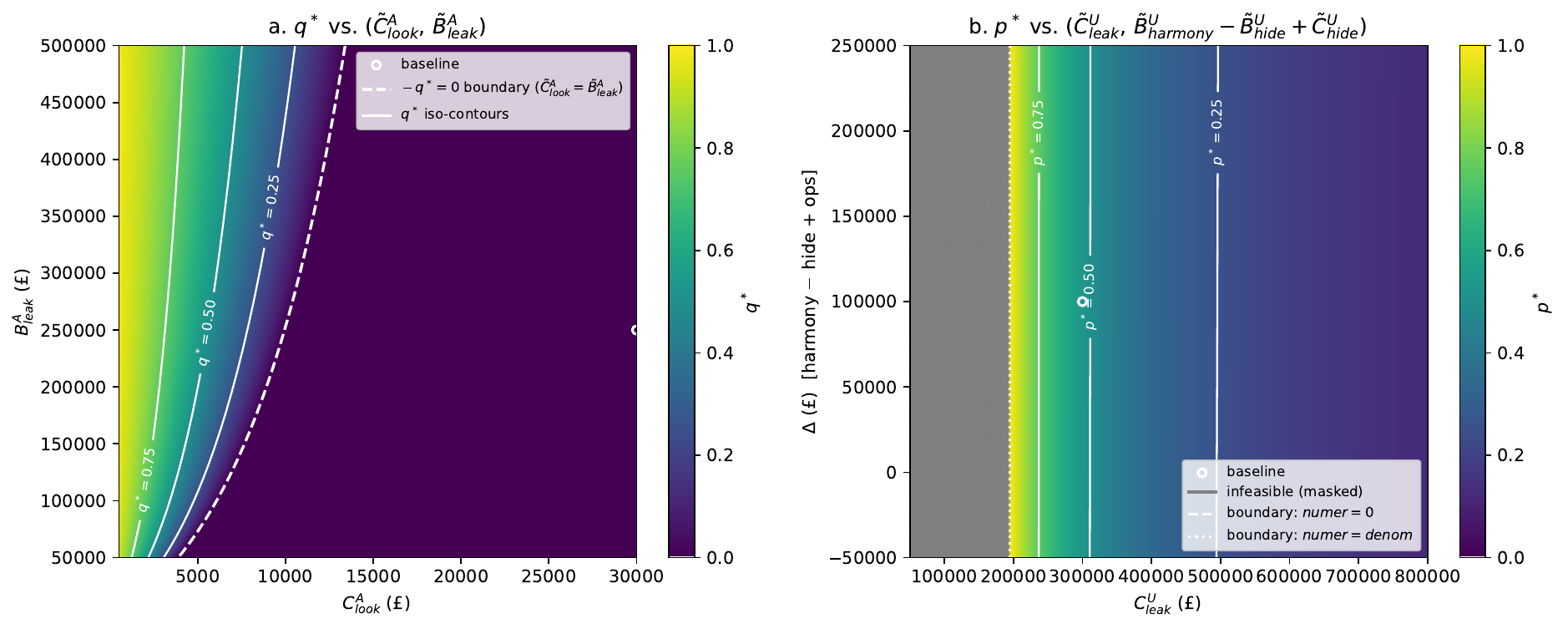}
  \caption{Equilibrium mixing probabilities under nonlinear utilities.
    \textbf{(a)} Defender’s hide–probability $q^\star$ vs.\ $(C^A_{\text{look}},\,B^A_{\text{leak}})$. White iso-contours show $q^\star\in\{0.25,0.50,0.75\}$; the white dashed curve marks the $q^\star=0$ boundary (i.e., where the attacker’s cost equals its benefit). 
    \textbf{(b)} Adversary’s search–probability $p^\star$ vs.\ $(C^U_{\text{leak}},\,\Delta)$ where $\Delta=\tilde B^U_{\text{harmony}}-\tilde B^U_{\text{hide}}+\tilde C^U_{\text{hide}}$. White iso-contours show $p^\star\in\{0.25,0.50,0.75\}$. The dashed white curve marks the boundary where the attacker’s net payoff numerator equals zero (no incentive to search), and the dotted curve marks the boundary where the attacker’s numerator equals the denominator (search probability would hit unity). Masked regions indicate non-interior regimes (i.e., pure-strategy corners).}
  \label{fig:F2_surfaces}
\end{figure*}

\begin{figure*}[ht]
  \centering
  \includegraphics[width=\linewidth]{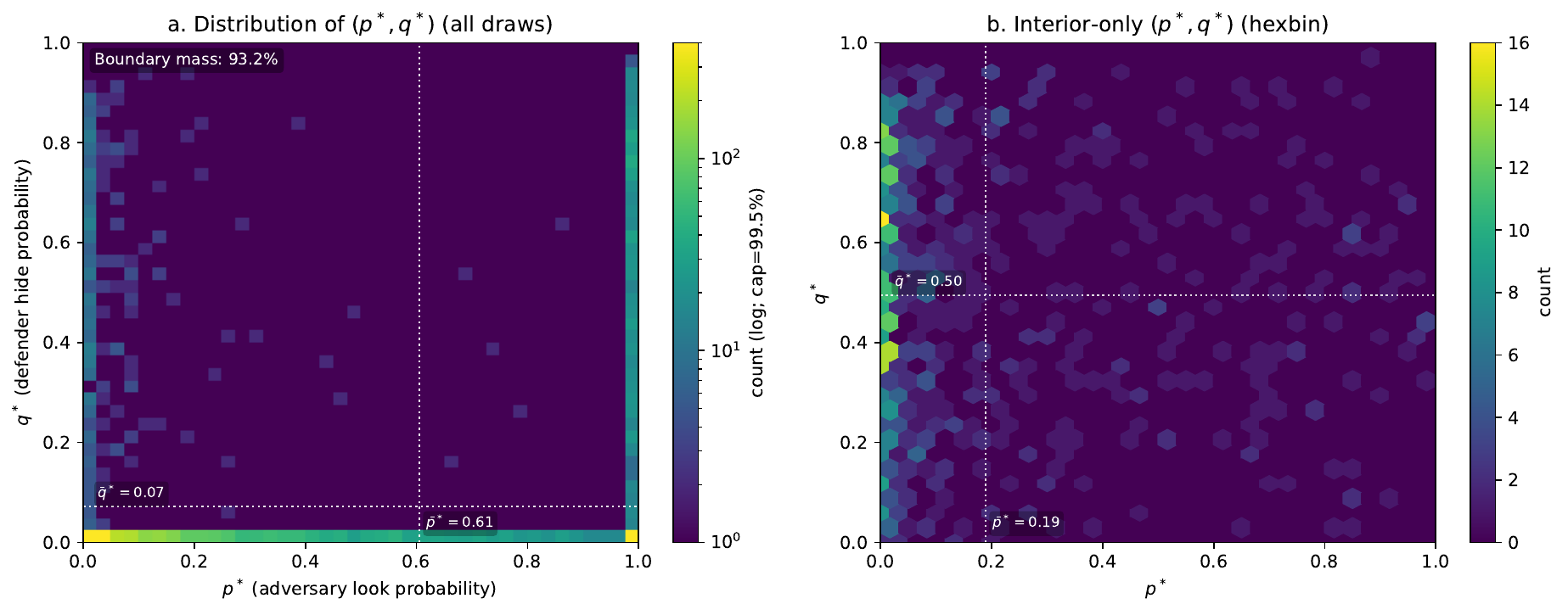}
  \caption{Monte Carlo distribution of equilibrium mixes under parameter uncertainty. \textbf{Left (a):} 2D histogram of $(p^\star,q^\star)$ over all draws; white dashed crosshairs show sample means $(\bar p^*,\bar q^*)$. \textbf{Right (b):} Hexbin for interior equilibria ($0<p^\star,q^\star<1$); white dashed crosshairs show interior means $(\bar p^*,\bar q^*)$.}
  \label{fig:F5_distribution}
\end{figure*}

\paragraph{Pipeline.}
The six practitioner-elicited \emph{monetary} inputs (annualised), i.e., $\{B^U_{\text{hide}},\,B^U_{\text{harmony}},\,C^U_{\text{hide}},\,C^U_{\text{leak}},\,B^A_{\text{leak}},\,C^A_{\text{look}}\}\!\in\!\mathbb{R}_{\ge 0}^6$, are transformed by concave (log) and convex (power) utilities governed by the curvature vector
\[
\theta=\{b_U,h_U,b_A,\beta_U,\delta_U,\beta_A,\alpha_\ell,\eta_0,\eta_1,\alpha_a,\gamma_\ell,\gamma_h,\gamma_a\}.
\]
For each $\theta$, the interior conditions of Theorem~\ref{theorem_p} are verified and the mixed-strategy solution from \eqref{equation_p}–\eqref{equation_q} are computed; if an interior condition fails, best responses lie on the boundary and probabilities are clamped to $[0,1]$ the run is tagged as a \emph{boundary-regime}
case (i.e., a non-interior equilibrium) and stored for later sensitivity review.

\paragraph{Calibration modes.}
To isolate the effect of risk attitudes from monetary magnitudes, two complementary baselines are considered:
\begin{enumerate}
  \item \textbf{Linear baseline:} Identity mappings (effectively $\gamma\!=\!1$ with very large scales) reproduce the textbook linear-payoff game for a sanity check.
  \item \textbf{Curvature-aware baseline:} Finite scales and exponents calibrated to practice,
   \[
    \gamma_\ell=1.5,\quad \gamma_h=1.2,\quad \gamma_a=1.2,
  \]
  with scale parameters $b_U=h_U=b_A=10^{5},\quad \alpha_\ell=10^{-6},\quad \eta_1=2\times 10^{-4},\quad \alpha_a=2\times 10^{-5}$. These values were selected to reflect diminishing marginal benefits and superlinear penalties, while keeping transformed utilities on comparable scales for equilibrium analysis.
\end{enumerate}
\paragraph{Parameter rationale.}
The adopted curvature and scale parameters reflect empirically grounded patterns in security economics and regulatory risk. The convex leak–loss exponent $\gamma_\ell\!=\!1.5$ models the well-documented superlinear growth of breach costs with incident size and regulatory exposure \cite{Romanosky2016,IBM2025CDBR,EDPBGuidelines2023Fines}. The moderate convexity values for operational and adversarial overheads ($\gamma_h\!=\!1.2$, $\gamma_a\!=\!1.2$) capture increasing marginal burdens on deployment and attacker effort, consistent with engineering cost curves and incident-response workload escalation \cite{GordonLoeb2002,Verizon2024}. Scale parameters such as $b_U\!=\!h_U\!=\!b_A\!=\!10^{5}$ and the small-rate coefficients $\alpha_\ell$, $\eta_1$, and $\alpha_a$ ensure that transformed utilities remain on comparable magnitudes, preserving interior mixed-strategy conditions while embodying diminishing returns and realistic penalty growth. These calibrated values therefore operationalise established economic and regulatory regularities within the equilibrium analysis.

Accordingly, Figures~\ref{fig:F1_curvature}--\ref{fig:F5_distribution} visualise (i) utility shapes, (ii) equilibrium mixing for $p^\star,q^\star$, (iii) curvature sensitivity, and (iv) Monte Carlo dispersion \cite{ross2022simulation} of equilibrium mixes under parameter uncertainty. By decoupling monetary inputs from behavioural curvature, this pipeline is consistent with NIST SP~800-30 risk practice and standard simulation methodology \cite{NIST80030,ross2022simulation}. 

\paragraph{Calibrated monetary primitives (scenario setup).}\label{scenario setup}
Consistent with quantitative risk practice, we parameterise defender and adversary \emph{monetary primitives} by empirically anchored \emph{ranges}, not single points. For total loss from a successful leak, we center an impact prior on IBM’s most recent \emph{Cost of a Data Breach} statistics (2025 global average \$4.44M), converting values into GBP using the contemporaneous 2025 exchange rate (USD\,1~$\approx$~GBP\,0.79).\footnote{Exchange rate sourced from the Bank of England (2025 average USD$\rightarrow$ GBP spot rate\url{https://www.bankofengland.co.uk/boeapps/database/Rates.asp?Travel=NIxAZx&into=USD&utm}).\label{fn:boe_exchange_rate}} The highest possible values at the maximum fine permitted are aligned with the GDPR administrative-fine cap (up to 4\% of worldwide annual turnover; Article~83) to capture rare but very large regulatory penalties often omitted in average-only models \cite{IBM2025CDBR,gdpr_article83_legislation}. Likelihood baselines and incident modes are cross-checked against the Verizon DBIR 2024 to keep base rates realistic for sector and actor mixes \cite{Verizon2024}. Where organisation-specific data are unavailable (e.g., confidentiality or GDPR constraints), we elicit expert distributions for response and hardening costs using structured expert judgment (Cooke’s classical model) with proven calibration and aggregation models \cite{ColsonCooke2018}. These distributions are propagated through the mixed-NE success rates (Sections~\ref{subsec:cond-success}–\ref{subsec:uncond-success}), and expected loss is evaluated in \S\ref{Risk_ana} \cite{NIST80030}.

\medskip
\noindent\emph{Concrete instantiation.}
For a medium-sized services organisation, we report \textbf{all monetary quantities in GBP (\pounds)} (with USD,1~$\approx$~GBP,0.79, based on the official exchange rate data cited in Footnote~\ref{fn:boe_exchange_rate}), and the following ranges for the \emph{expected} values of the random primitives:
\begin{align*}
    \mathbb{E}\!\left[B^{\mathcal U}_{\text{harmony}}\right] \in [1.5,\,2.5]\times10^{5}\,\pounds, 
    \mathbb{E}\!\left[C^{\mathcal U}_{\text{hide}}\right] \in [3.5,\,7.5]\times10^{4}\,\pounds, \\
    \mathbb{E}\!\left[B^{\mathcal U}_{\text{hide}}\right]  \in [1.0,\,2.0]\times10^{5}\,\pounds, 
    \mathbb{E}\!\left[C^{\mathcal U}_{\text{leak}}\right]   \in [2.0,\,6.0]\times10^{5}\,\pounds.
\end{align*}
Here, “$\mathbb{E}[X]\in[a,b]\times10^{k}\,\pounds$’’ means the mean lies between $a\times10^{k}$ and $b\times10^{k}$ \emph{pounds sterling}. Then, limits on extreme cost values are placed where necessary by applying Article 83 GDPR bound (i.e., administrative fines of up to 4 \% of a company’s worldwide annual turnover, whichever is higher) to reflect rare but very large regulatory penalties. These ranges satisfy (A2)–(A2$^\prime$) and preserve the best-response thresholds in \S\ref{Game_Analysis}. For the adversary, $C^{\mathcal A}_{\text{look}}$ and $B^{\mathcal A}_{\text{leak}}$ are parameterized analogously, respecting (A3)–(A3$^\prime$). Specifically, we set
\[
B^{\mathcal A}_{\text{leak}}=\rho\,C^{\mathcal U}_{\text{leak}},
\]
with \(\rho=0.83\).\footnote{Here \(\rho\) denotes the \emph{capturable fraction} of defender impact that a rational adversary can monetize (resale value, extortion yield, competitive gain), which empirical analyses of underground markets suggest is substantial yet typically below full impact \cite{Ablon2014HackersBazaar}}, so that with $C^{\mathcal U}_{\text{leak}}=\pounds 300{,}000$ we obtain $B^{\mathcal A}_{\text{leak}}\approx\pounds 250{,}000$ $(B^{\mathcal A}_{\text{leak}} \;=\; 0.83 \times \pounds 300{,}000 \;=\; \pounds 249{,}000 \approx \pounds 250{,}000)$; we set $C^{\mathcal A}_{\text{look}}=\pounds 30{,}000$, preserving the “easiest penetration’’ ordering (A3) and its utility-space counterpart (A3$^\prime$). 

\paragraph{Calibration evidence (brief).}
Anchors for breach economics use IBM’s 2025 hub (global average \$4.44M; 2024 \$4.88M) \cite{IBM2025CDBR} and UK trade-press synthesis indicating low seven-figure UK averages \cite{computerweekly_2025_uk_costs}. Regulatory tails follow Article~83 \cite{gdpr_article83_legislation}. Base-rate realism is aligned with Verizon DBIR 2024 \cite{Verizon2024}. Expert-judgment calibration and spending benchmarks follow \cite{ColsonCooke2018,GordonLoeb2002}.

Figure~\ref{fig:F1_curvature} provides an explicit visual account of the two core utility transforms introduced in Section~\ref{A_Steganographic_Game-Theoretic_Model}. The left panel plots three instances of the concave benefit map $x\mapsto \log(1+x/s)$ where the horizontal axis is the pre-transform monetary amount (in $\pounds$) and the vertical axis the resulting, dimensionless utility. Each line corresponds to a different scale parameter $s$: the blue line ($s=\text{\pounds}50\,$k) has the largest initial slope (approximate linear tangent near zero) and therefore the greatest marginal utility at small $x$, but it saturates fastest as $x$ grows; the orange ($s=\text{\pounds}100\,$k) and green ($s=\text{\pounds}200\,$k) lines display progressively gentler initial slopes and slower saturation. This family makes tangible the Arrow–Pratt intuition \cite{Pratt1964}: concavity implies diminishing absolute risk appetite and decreasing marginal returns to further increments of the same nominal gain. Operationally, a smaller $s$ models a decision-maker for whom modest monetary gains are relatively more valuable (steeper early marginal utility), whereas a larger $s$ models an actor for whom only sizeable gains materially move utility.

The right panel isolates loss and operational-overhead curvature by plotting $x\mapsto x^{\gamma}$ across a range of exponents $\gamma\in\{1.1,1.3,1.5,2.0\}$. For comparison, we normalise each power curve to equal one at the top of the monetary axis (£500k), so colour differences emphasise shape rather than scale. Near zero, all curves rise slowly; however, as $x$ increases, the larger exponents accelerate growth markedly. The red curve ($\gamma=2.0$) displays the steepest curvature: relatively moderate changes in monetary exposure at high magnitudes produce much larger increments in transformed disutility than they would under near-linear (low-$\gamma$) specifications. Intuitively, this captures regulatory and reputational regimes where large incidents produce disproportionately large costs (fines, cascading damage, loss of business), and therefore the transformed loss space is heavy-tailed and strongly nonlinear.

Together, these panels illustrate how curvature impacts equilibrium: concave benefit maps diminish marginal gains for added protection, while convex loss maps increase breach costs. The choice of scale $s$ and exponent $\gamma$ significantly affects the transformed payoffs that influence mixed-strategy formulas, impacting $p^\star$ and $q^\star$ as shown in the comparative statics.

Building on these primitives, Figure~\ref{fig:F2_surfaces} maps the defender’s hide‐mixing probability $q^\star$ and the adversary’s search‐mixing probability $p^\star$ (Theorem~\ref{theorem_p}) over key transformed inputs.  In panel (a), the horizontal axis is the adversary’s search cost $C^A_{\text{look}}$ (monetary), the vertical axis its leak benefit $B^A_{\text{leak}}$, both shown prior to transformation.  The colour scale encodes $q^\star=1-\tilde C^A_{\text{look}}/\tilde B^A_{\text{leak}}$, with darker hues indicating less frequent hiding.  The solid white contours at $q^\star\in\{0.25,0.50,0.75\}$ and the dashed boundary $\tilde C^A_{\text{look}}=\tilde B^A_{\text{leak}}$ clearly demarcate the regimes where the defender finds concealment increasingly unattractive.  Panel (b) plots $p^\star=(\Delta)/(\Delta+\tilde C^U_{\text{leak}})$ against the defender’s leak cost $C^U_{\text{leak}}$ and the net term $\Delta=\tilde B^U_{\text{harmony}}-\tilde B^U_{\text{hide}}+\tilde C^U_{\text{hide}}$.  The light‐grey masked region flags parameter combinations where $p^\star$ clips to the boundary, while white contours at $p^\star\in\{0.25,0.50,0.75\}$ and vertical dashed/dotted lines trace the indifference locus, confirming that higher leak costs or lower harmony benefits compel the adversary to search more aggressively.

To assess robustness under realistic uncertainty, Figure~\ref{fig:F5_distribution} presents a Monte‐Carlo ensemble of $n=10\,000$ draws over monetary and curvature hyperparameters (Appendix~\ref{Quantitative Evaluation setup}).  The left (panel F3a) is a two-dimensional histogram \emph{of all draws}, - that is, of all $10{,}000$ sampled parameter sets without excluding boundary cases. The high \emph{frequency} of outcomes very close to the boundaries $p^\star\in\{0,1\}$ or $q^\star\in\{0,1\}$ indicates that, for many plausible parameterisations, at least one player’s best response is a \emph{boundary solution} (e.g., “almost never” or “almost always”), rather than a genuinely mixed response.  Dotted lines denote the sample means $\bar p^\star$ and $\bar q^\star$, quantifying central tendency.  On the right (panel F3b), we restrict to interior equilibria ($0<p^\star,q^\star<1$) and display a hexbin plot that uncovers the dispersion of genuinely mixed strategies.  The skew toward low $p^\star$ and low $q^\star$ accords with the convex‐cost effects observed in Figures~\ref{fig:F2_surfaces}, thereby reinforcing that risk‐averse and superlinear cost structures naturally drive players to boundary or near‐boundary behaviour under parameter uncertainty.

Curvature-driven sensitivity of mixing probabilities is analyzed in §\ref{sec:sensitivity-mne}, where we sweep $(\gamma_\ell,\gamma_h,\gamma_a)$ and quantify local and global effects (Figure \ref{fig:F3_sensitivity}, Figure \ref{fig:S3_interior_mass}).

These results confirm that our nonlinear utility framework produces interpretable equilibrium patterns: cost convexity deters aggressive actions, concave benefit saturation moderates strategy adoption returns, and realistic parameter variability can shift equilibria toward pure strategies. These insights validate our model's relevance for risk management and policy design in surveilled networks.

\section{Sensitivity Analysis of Mixed Nash Equilibrium Solution}
\label{sec:sensitivity-mne}

This section develops comparative statics for the mixed–strategy Nash equilibrium of the steganographic game under the nonlinear utility mappings introduced in \S\ref{sec:assumptions} and analysed in \S\ref{Game_Analysis}. Our goal is to identify which \emph{transformed} payoffs move the equilibrium probabilities and in what direction, while preserving the policy-relevant orderings (A2)–(A3) and the utility–space refinements (A2$^\prime$)–(A3$^\prime$). We derive sensitivities in expected-utility space (where best responses are ordinally invariant), then map to monetary primitives via the chain rule to expose curvature effects, so that curvature is explicitly reflected in the sensitivity coefficients. This approach follows security-game and risk-analysis guidance on incentives under uncertainty \cite{@Manshaei2013}, convex fines and reputational effects justify convex loss utilities \cite{EDPBGuidelines2023Fines}, and diminishing returns reflect Arrow–Pratt risk aversion \cite{Pratt1964}.

Based on the equilibrium probabilities $(p^{*}, q^{*})$ expressed in Equations \ref{equation_p} and \ref{equation_q} from Theorem~\ref{theorem_p}, the following shorthand is introduced
\[
\Delta \;\equiv\; \tilde B^U_{\text{harmony}}-\tilde B^U_{\text{hide}}+\tilde C^U_{\text{hide}},
\qquad
D \;\equiv\; \tilde B^U_{\text{harmony}}+\tilde C^U_{\text{leak}},
\]
so $p^\star=\Delta/D$ and $q^\star=1-\tilde C^{A}_{\text{look}}/\tilde B^{A}_{\text{leak}}$, where the notation $\Delta$ denotes $\mathcal{U}$'s net baseline advantage of not hiding, and $D$ is $\mathcal{U}$'s exposure scale when not hiding.  
We report derivatives for the \emph{interior} case $0<p^\star,q^\star<1$; on the boundaries ($p^\star\in\{0,1\}$ or $q^\star\in\{0,1\}$) the usual envelope arguments imply zero or one–sided responses and clipping (cf.\ Figure~\ref{fig:F2_surfaces}).\footnote{Affine (positive) rescalings of utilities leave best responses and equilibria invariant; our use of expected utility and positive monotone transforms therefore preserves equilibrium structure \cite{osborne2004introduction}.}

Differentiating $p^\star=\Delta/D$ and $q^\star=1-\tilde C^{A}_{\text{look}}/\tilde B^{A}_{\text{leak}}$ yields
\begin{align}
\frac{\partial p^\star}{\partial \tilde B^U_{\text{harmony}}}
&=\frac{D-\Delta}{D^2}
=\frac{\tilde C^U_{\text{leak}}+\tilde B^U_{\text{hide}}-\tilde C^U_{\text{hide}}}{\bigl(\tilde B^U_{\text{harmony}}+\tilde C^U_{\text{leak}}\bigr)^2},
\label{eq:psens-Bh}\\[-0.2em]
\frac{\partial p^\star}{\partial \tilde B^U_{\text{hide}}}
&=-\frac{1}{D},
\,
\frac{\partial p^\star}{\partial \tilde C^U_{\text{hide}}}
=\frac{1}{D},
\,
\frac{\partial p^\star}{\partial \tilde C^U_{\text{leak}}}
=-\frac{\Delta}{D^2}
=-\frac{p^\star}{D},
\label{eq:psens-others}\\[0.3em]
\frac{\partial q^\star}{\partial \tilde B^{A}_{\text{leak}}}
&=\frac{\tilde C^{A}_{\text{look}}}{\bigl(\tilde B^{A}_{\text{leak}}\bigr)^2}
=\frac{1-q^\star}{\tilde B^{A}_{\text{leak}}},
\qquad
\frac{\partial q^\star}{\partial \tilde C^{A}_{\text{look}}}
=-\,\frac{1}{\tilde B^{A}_{\text{leak}}}.
\label{eq:qsens-utility}
\end{align}
Thus, ceteris paribus, higher \emph{transformed} leak disutility depresses the adversary’s search mix $p^\star$ (via $-p^\star/D<0$), while steeper transformed deployment overheads $\tilde C^U_{\text{hide}}$ raise $p^\star$ by enlarging the indifference numerator $\Delta$. For the defender, $q^\star$ is increasing in the attacker’s transformed benefit and decreasing in transformed search cost, exactly as the indifference condition for $q^\star$ implies. These directions are consistent with the orderings in (A2)–(A3) and the utility–space calibrations (A2$^\prime$)–(A3$^\prime$), which encode adequate protection for the defender and economically motivated attack for the adversary \cite{@Manshaei2013,Becker1968}. 

The Jacobian entries of the utilities transforms in \eqref{eq:def-concave-benefits}–\eqref{eq:def-adversary} are
\[
\begin{aligned}
&\frac{\partial \tilde B^U_{\text{harmony}}}{\partial B^U_{\text{harmony}}}=\frac{\delta_U}{h_U+B^U_{\text{harmony}}},
\quad
\frac{\partial \tilde B^U_{\text{hide}}}{\partial B^U_{\text{hide}}}=\frac{\beta_U}{b_U+B^U_{\text{hide}}},
\\
&\frac{\partial \tilde C^U_{\text{hide}}}{\partial C^U_{\text{hide}}}
=\eta_1\,\gamma_h\,(C^U_{\text{hide}})^{\gamma_h-1},
\quad
\frac{\partial \tilde C^U_{\text{leak}}}{\partial C^U_{\text{leak}}}
=\alpha_\ell\,\gamma_\ell\,(C^U_{\text{leak}})^{\gamma_\ell-1},
\\
&\frac{\partial \tilde B^{A}_{\text{leak}}}{\partial B^{A}_{\text{leak}}}
=\frac{\beta_A}{b_A+B^{A}_{\text{leak}}},
\quad
\frac{\partial \tilde C^{A}_{\text{look}}}{\partial C^{A}_{\text{look}}}
=\alpha_a\,\gamma_a\,(C^{A}_{\text{look}})^{\gamma_a-1}.
\end{aligned}
\]
Hence, the monetary–space sensitivities follow by the chain rule:
\begin{multline}
\frac{\partial p^\star}{\partial B^U_{\text{harmony}}}
=\frac{\partial p^\star}{\partial \tilde B^U_{\text{harmony}}}\cdot \frac{\delta_U}{h_U+B^U_{\text{harmony}}},\quad
\frac{\partial p^\star}{\partial B^U_{\text{hide}}}
\\=\frac{\partial p^\star}{\partial \tilde B^U_{\text{hide}}}\cdot \frac{\beta_U}{b_U+B^U_{\text{hide}}}.
\label{eq:p-monetary-1}
\end{multline}
\begin{multline}
\frac{\partial p^\star}{\partial C^U_{\text{hide}}}
=\frac{\partial p^\star}{\partial \tilde C^U_{\text{hide}}}\cdot \eta_1\,\gamma_h\,(C^U_{\text{hide}})^{\gamma_h-1},\quad
\frac{\partial p^\star}{\partial C^U_{\text{leak}}}
\\=\frac{\partial p^\star}{\partial \tilde C^U_{\text{leak}}}\cdot \alpha_\ell\,\gamma_\ell\,(C^U_{\text{leak}})^{\gamma_\ell-1}.
\label{eq:p-monetary-2}
\end{multline}

\begin{multline}
\frac{\partial q^\star}{\partial B^{A}_{\text{leak}}}
=\frac{\partial q^\star}{\partial \tilde B^{A}_{\text{leak}}}\cdot \frac{\beta_A}{b_A+B^{A}_{\text{leak}}}
=\frac{\tilde C^{A}_{\text{look}}}{\bigl(\tilde B^{A}_{\text{leak}}\bigr)^2}\cdot \frac{\beta_A}{b_A+B^{A}_{\text{leak}}}.
\label{eq:q-monetary-1}
\end{multline}

\begin{multline}
\frac{\partial q^\star}{\partial C^{A}_{\text{look}}}
=\frac{\partial q^\star}{\partial \tilde C^{A}_{\text{look}}}\cdot \alpha_a\,\gamma_a\,(C^{A}_{\text{look}})^{\gamma_a-1}
=-\frac{\alpha_a\,\gamma_a\,(C^{A}_{\text{look}})^{\gamma_a-1}}{\tilde B^{A}_{\text{leak}}}.
\label{eq:q-monetary-2}
\end{multline}
Equations \eqref{eq:p-monetary-1}–\eqref{eq:q-monetary-2} make transparent how the \emph{shape} parameters $(\gamma,\text{scales})$ attenuate or amplify monetary shocks: logarithmic benefits dampen large positive changes through the $1/(b+\cdot)$ factor, whereas power losses magnify large exposures through $(\cdot)^{\gamma-1}$ with $\gamma>1$ \cite{Pratt1964,EDPBGuidelines2023Fines}.

Considering practical interpretations and policy levers, the following are discussed:

(i) \emph{Defender levers (moving $p^\star$).} Increasing transformed leak consequences (e.g., via higher expected regulatory/contractual exposure or better quantification of reputational harm feeding into $\tilde C^U_{\text{leak}}$) reduces $p^\star$ by \eqref{eq:psens-others}; simultaneously, curbing operational convexity (lower $\gamma_h$ or $\eta_1$ in $\tilde C^U_{\text{hide}}$) pulls down $\Delta$ and thus $p^\star$. This is consistent with (A2) and with institutional guidance that stresses mission/business harms in risk treatment \cite{NIST80030,EDPBGuidelines2023Fines}.  

(ii) \emph{Adversary–facing levers (moving $q^\star$).} From \eqref{eq:qsens-utility}, higher transformed search costs (legal risk, detection risk, analyst time) lower $q^\star$ only insofar as they reduce the attacker’s incentive to look (since $q^\star$ is a response to $(\tilde C^A_{\text{look}},\tilde B^A_{\text{leak}})$). Conversely, larger transformed leak payoffs push $q^\star$ up, inducing the defender to hide more often. Economic models of offending predict such cost–benefit responses \cite{Becker1968}. 

(iii) \emph{Boundary and regime shifts.} The interior formulas apply on the feasible wedge $\Delta\in(0,D)$ for $p^\star$ and $\tilde C^A_{\text{look}}<\tilde B^A_{\text{leak}}$ for $q^\star$. Crossing these thresholds clips probabilities to $\{0,1\}$ and collapses sensitivities (Figure~\ref{fig:F2_surfaces}, Figure~\ref{fig:F3_sensitivity}). In practice, this means incentive programs or regulatory changes can trigger qualitative regime changes (e.g., pushing the attacker to \emph{never look} or the defender to \emph{always hide}). 

The derivative formulas \eqref{eq:psens-Bh}–\eqref{eq:q-monetary-2} are directly usable for \emph{local} sensitivity dashboards and priority setting: estimate transformed quantities (and their elasticities) from breach/incident, compliance, and cost data, and read off how marginal changes move $(p^\star,q^\star)$. They also rationalise the heatmaps in Figure~\ref{fig:F2_surfaces} and the dispersion patterns in Figure~\ref{fig:F5_distribution}, where interior vs.~boundary regimes emerge endogenously under parameter uncertainty \cite{@Manshaei2013}. 

\subsection{Interior comparative statics around the baseline}
\label{subsec:local-CS}
\begin{figure}[h!]
  \centering
  \includegraphics[width=0.95\linewidth]{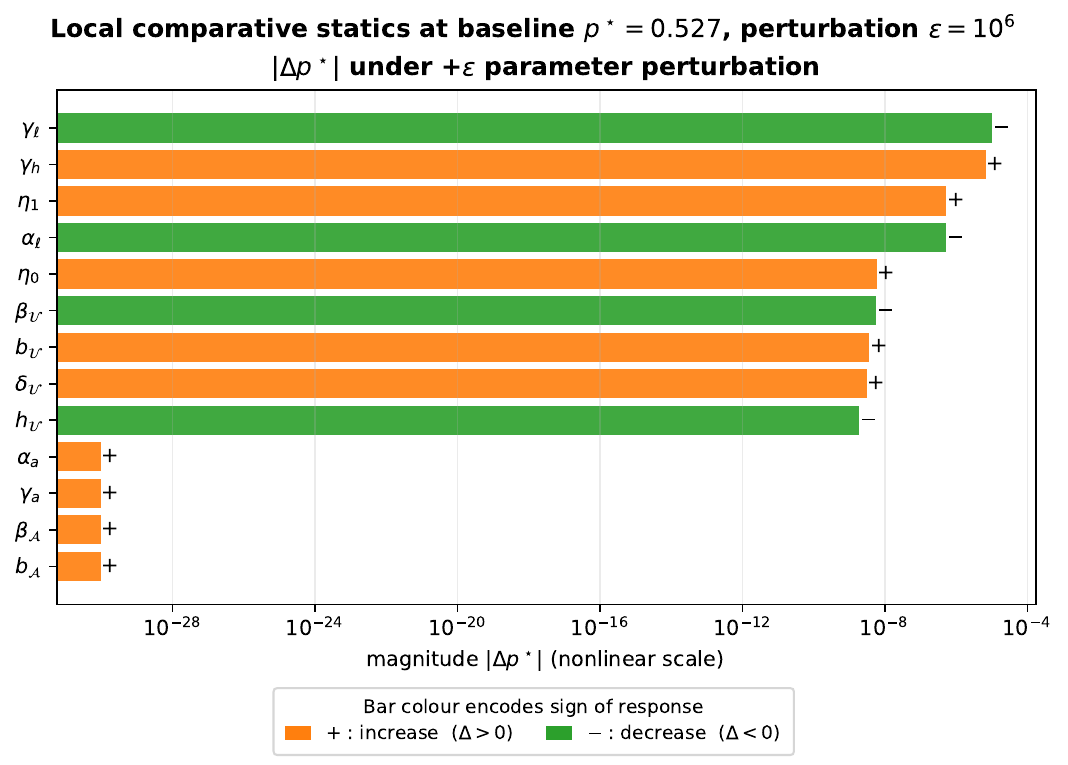}
  \caption{\textbf{Local comparative statics of the adversary’s equilibrium search probability $p^\star$ at the baseline.} Bars report the magnitude $|\Delta p^\star|$ induced by a small positive perturbation $\varepsilon$ to each transformed utility parameter, plotted on a logarithmic scale. Bar colours (orange and green) indicate the sign of the response ($+$ increase, $-$ decrease).}
  \label{fig:F6_cs_bars}
\end{figure}

We study how small perturbations to the transformed utilities shift the mixed–strategy equilibrium $(p^\star,q^\star)$ at the baseline. 
Figure~\ref{fig:F6_cs_bars} quantifies local comparative statics of the mixed-strategy equilibrium derived in \S\ref{MixNE} under small perturbations of the transformed utility primitives. Consistent with the response-surface structure in Figure~\ref{fig:F2_surfaces}, sensitivity of $p^\star$ is dominated by defender-side curvature and cost parameters ($\gamma_\ell$, $\gamma_h$, $\eta_1$), while attacker-side benefit and search-cost parameters induce changes several orders of magnitude smaller.

Importantly, the baseline equilibrium lies in a boundary regime as characterised in \S\ref{PureNashAnalysis} and \S\ref{Game_Analysis}: the defender’s optimal response satisfies $q^\star=0$. As a result, $q^\star$ is locally pinned to the boundary and admits no first-order response to parameter perturbations, whereas $p^\star$ remains interior and varies smoothly. The result in Figure~\ref{fig:F6_cs_bars} therefore, complements the interior comparative statics by explicitly illustrating how, in boundary regimes, equilibrium sensitivity concentrates on the unconstrained player while the boundary strategy remains locally insensitive.

\subsection{Quantitative Sensitivity Evaluation (Sensitivity Focus)}
\label{sec:quant_eval_sensitivity}
\begin{figure*}[h!]
  \centering
  \includegraphics[width=\linewidth]{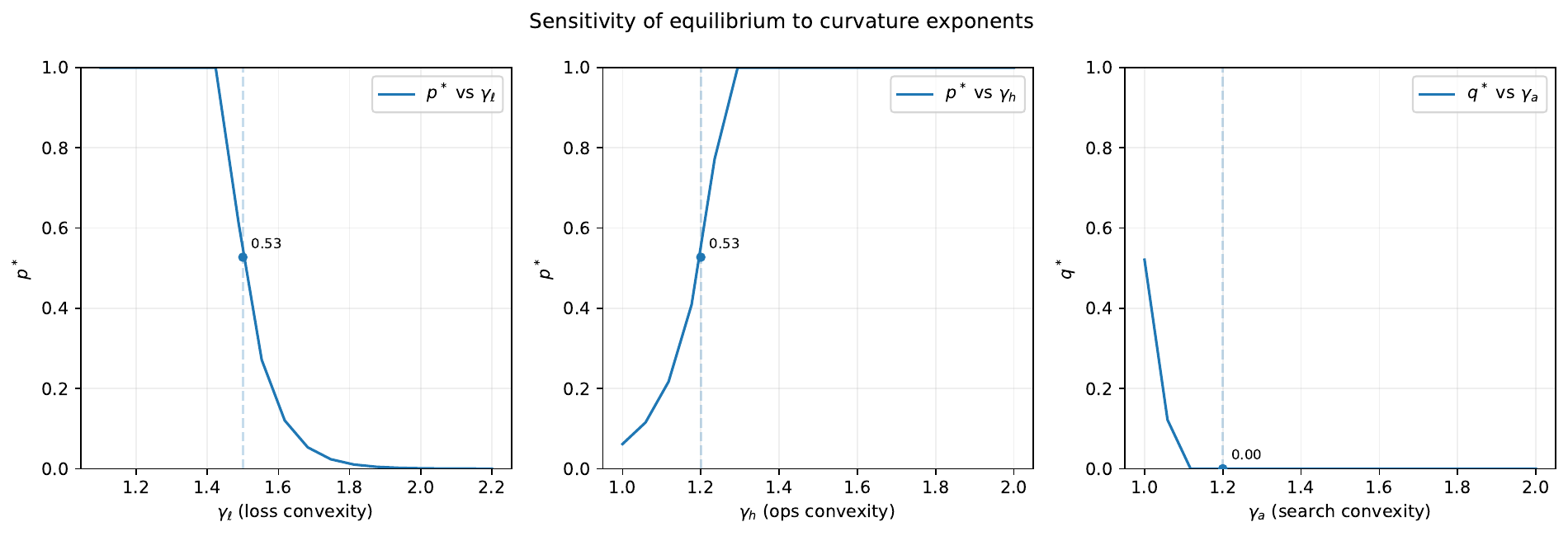}
  \caption{Curvature sensitivity of equilibrium mixing. \textbf{Left:} $p^\star$ vs.\ leak-loss convexity $\gamma_\ell$.
  \textbf{Middle:} $p^\star$ vs.\ deployment-overhead convexity $\gamma_h$.
  \textbf{Right:} $q^\star$ vs.\ search-cost convexity $\gamma_a$.
  Vertical dotted lines mark the baseline exponents used in our calibrated plots.}
  \label{fig:F3_sensitivity}
\end{figure*}

\begin{figure}[h!]
  \centering
  \includegraphics[width=0.9\linewidth]{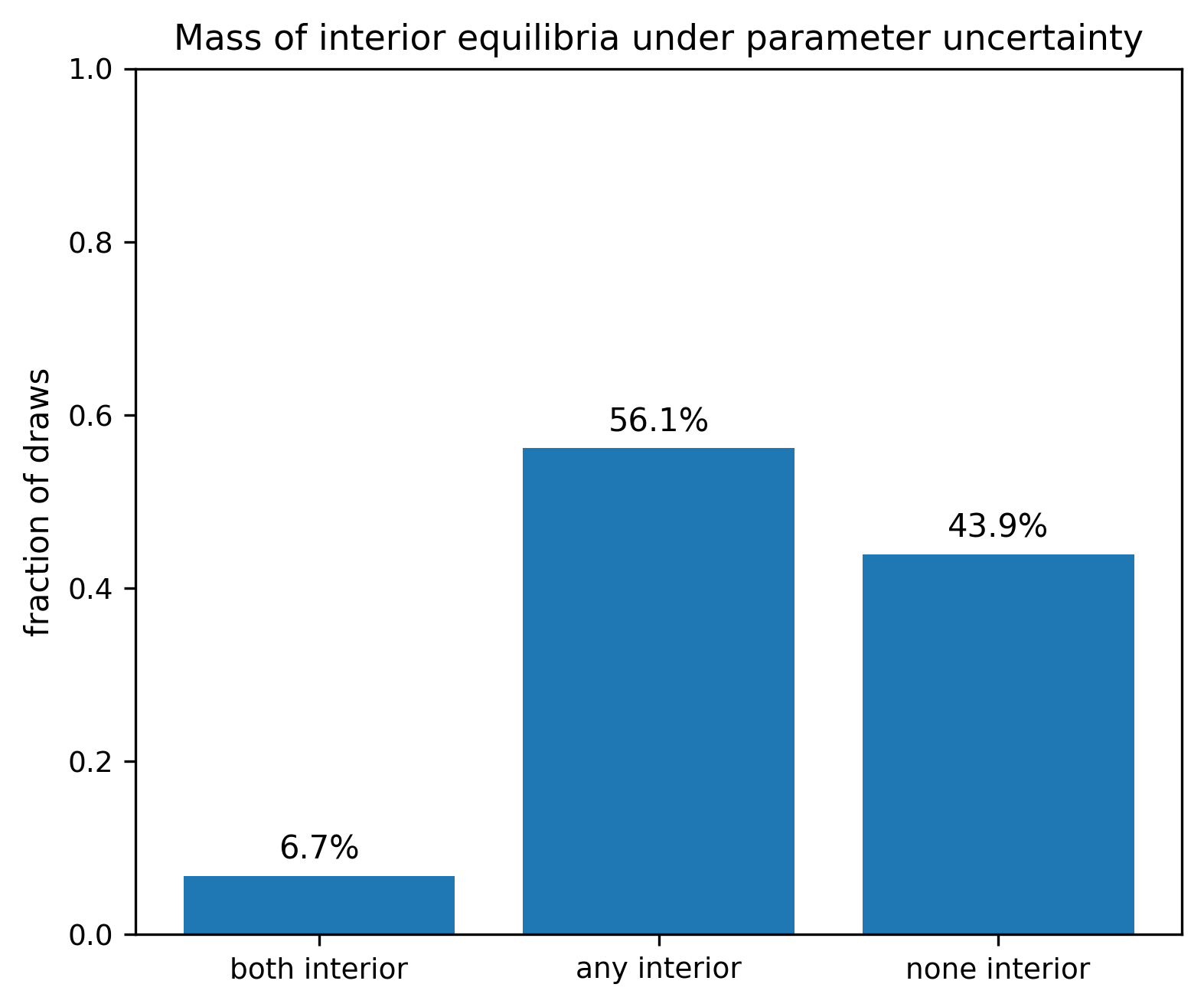}
    \caption{Prevalence of interior equilibria under parameter uncertainty. Bars show the proportion of Monte Carlo draws yielding: (i) \emph{both} players mixing ($0<p^\star,q^\star<1$); (ii) \emph{any} interior outcome (at least one player mixing); or (iii) \emph{none} (both players at boundary choices, i.e., pure responses). Percent labels above bars report sample percentages across all draws.}
  \label{fig:S3_interior_mass}
\end{figure}

\begin{figure*}[ht]
  \centering
  \includegraphics[width=\linewidth]{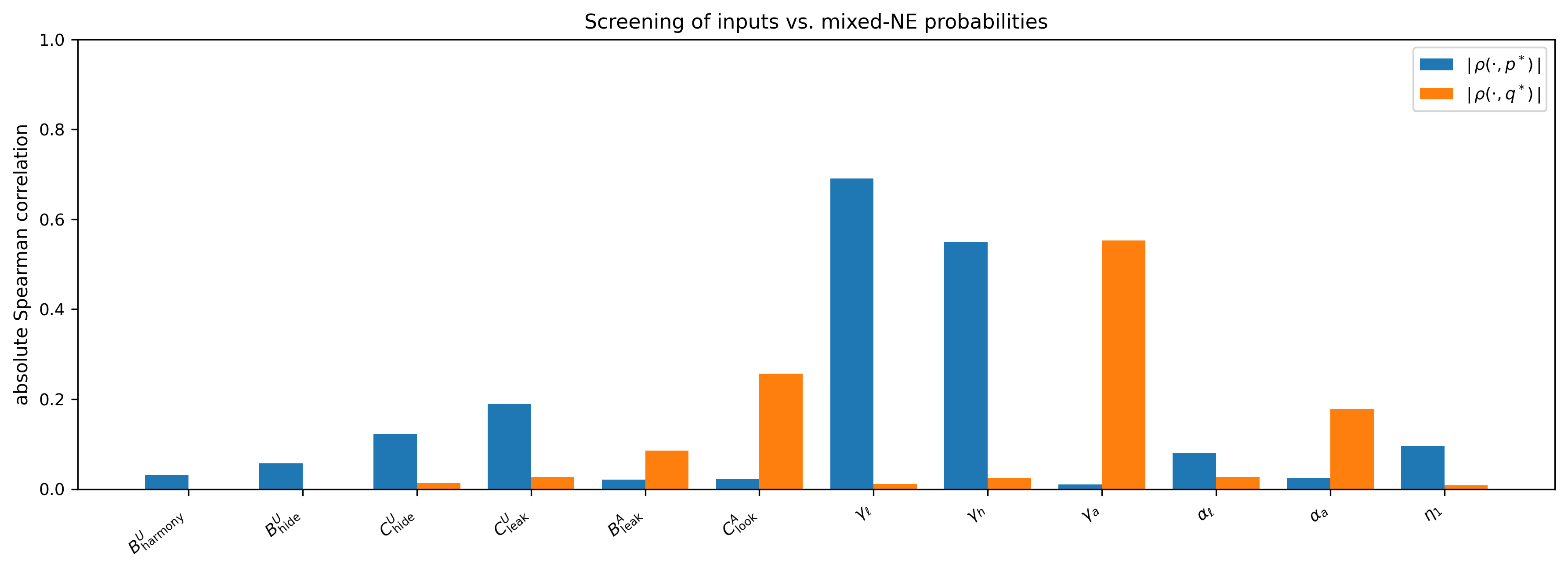}
  \caption{Screening of inputs against mixed-NE probabilities using absolute Spearman correlations. Bars compare $|\rho(\cdot,p^\star)|$ and $|\rho(\cdot,q^\star)|$ for each monetary primitive and curvature/scale parameter. Consistent with the theory, $p^\star$ is most sensitive to the defender-side convexities ($\gamma_\ell,\gamma_h$) and to $C^U_{\text{leak}}$, whereas $q^\star$ loads primarily on the attacker search convexity $\gamma_a$ and on $C^A_{\text{look}}$; the remaining inputs show weaker monotone influence.}
  \label{fig:S4_screening}
\end{figure*}
Building on the interior comparative statics derived above, we now examine how small changes in primitives and risk–attitude parameters shift the mixed–strategy equilibrium. Sensitivities are evaluated in expected–utility space, where positive affine rescalings preserve best responses and equilibria, so that the directions and relative magnitudes reported here are invariant to utility scaling (cf.\ Assumption~(A4).\footnote{Equilibrium predictions are invariant under positive affine re‐scalings of utility in standard game theory. See, e.g., Osborne’s textbook discussion of cardinal utility and ordinal invariance.} in Section~\ref{sec:assumptions}).

First, we vary the curvature exponents while holding the calibrated monetary primitives unchanged. Figure~\ref{fig:F3_sensitivity} shows that \(p^\star\) falls monotonically with the breach–loss convexity \(\gamma_\ell\) and rises with the deployment/overhead convexity \(\gamma_h\), while \(q^\star\) falls with the search–cost convexity \(\gamma_a\). These monotonicities agree with the interior derivatives implied by \(p^\star=\Delta/D\) and \(q^\star=1-\tilde C^A_{\mathrm{look}}/\tilde B^A_{\mathrm{leak}}\), and the vertical dotted lines mark the baseline exponents used elsewhere in our calibrated plots.

Second, we quantify the \emph{prevalence} of genuine mixing once joint uncertainty over both monetary and curvature/scale parameters is acknowledged. Using a Latin–hypercube design, Figure~\ref{fig:S3_interior_mass} reports the shares of draws that yield (i) both players mixing, (ii) any interior behaviour (at least one player mixing), and (iii) none (pure responses by both). In our calibration, interior regimes are common in aggregate but simultaneous interiority for both players is comparatively rare, indicating that boundary solutions (and hence clipped sensitivities) should be expected over large regions of the plausible parameter space.

Screening inputs against \((p^\star,q^\star)\) using absolute Spearman rank correlations, we capture \emph{global} monotone influence. Figure~\ref{fig:S4_screening} indicates that \(p^\star\) is dominated by the defender–side convexities \((\gamma_\ell,\gamma_h)\) and the monetary leak cost \(C^U_{\mathrm{leak}}\), whereas \(q^\star\) is driven primarily by the attacker’s search convexity \(\gamma_a\) and monetary search cost \(C^A_{\mathrm{look}}\). This aligns with the transformed best–response structure: convexification of losses magnifies \(\tilde C^U_{\mathrm{leak}}\), pushing \(p^\star\) down, while convex search costs magnify \(\tilde C^A_{\mathrm{look}}\), pushing \(q^\star\) down.

Taken together, the evidence suggests a practical hierarchy of levers: shaping \emph{how} costs grow (curvature) often moves equilibrium behaviour more than shifting \emph{how large} the costs are (levels). In particular, mechanisms that steepen breach-loss curvature depress \(p^\star\), while mechanisms that steepen search curvature depress \(q^\star\). This perspective informs the risk quantification in Section~\ref{Risk_ana}, where we propagate these equilibrium responses.

\section{Risk Analysis}
\label{Risk_ana}

This section presents a quantitative risk analysis of the steganographic game model introduced in Section~\ref{A_Steganographic_Game-Theoretic_Model}. 

Conventional deterministic risk formulations inadequately capture the inherent uncertainty in adversarial and networked environments. Indeed, real–world security risks are stochastic, context–dependent, and often nonstationary, making fixed‐probability estimates unreliable \cite{boyd2022assumptions, Cox2009}. To address this limitation, conditional–probability models have become central to modern cyber–risk assessment, providing a formal mechanism to express uncertainty about evolving threats and control effectiveness \cite{fenton2024risk}.

Recent advances incorporate probabilistic inference and machine learning into risk quantification. Bayesian frameworks have been widely applied to model uncertainty and update beliefs about attack likelihoods as new evidence arises \cite{Wei2022Application, uflaz2024quantifying, behbehani2023cloud, d2023including}, while neural approaches have been used to estimate dynamically security posture from real‐time telemetry and attack traces \cite{basit2023dynamic}. 

Building on these foundations, this present work introduces a game–theoretic risk formulation that improves the conventional product form $R=p\!\times\!I$ with a decision–conditioned metric. This formulation captures the time–varying interplay between attacker aggressiveness, defender concealment effectiveness, and the resulting operational loss potential. The computational procedure implementing this metric is summarised in Algorithm~\ref{alg:adv-risk}. This theoretical formulation underpins the empirical studies in Section~\ref{sec:risk_interpretation}, analysing the temporal evolution of various metrics to characterise the dynamic risk behaviour.

\begin{algorithm}[ht]
\caption{Advantage–based risk at epoch $t$}
\label{alg:adv-risk}
\DontPrintSemicolon
\footnotesize
\KwGoal{Compute the adversarial risk using
$R_t = I\cdot \max\{\mathsf{Adv}^{(+)}_{\mathcal{G},\mathcal{A}}(t),0\}$,
with the positive part advantage
$\mathsf{Adv}^{(+)}_{\mathcal{G},\mathcal{A}}(t)=\max\{\mathsf{Adv}^{(+)}_{\mathcal{G},\mathcal{A}}(t),0\}$ and
$\mathsf{Adv}^{(+)}_{\mathcal{G},\mathcal{A}}(t) := p^\star\!\left(\bar\beta_t^{\mathcal U}-\beta^{\mathcal U}\right)$.}

\Input{From the equilibrium aggressiveness $p^\star\in[0,1]$ (from \S\ref{MixNE});
effectiveness parameters $\beta_t^{\mathcal U},\ \bar\beta^{\mathcal U}\in[0,1]$ (with/without hiding) at epoch $t$; impact $I\ge 0$ (currency units) for the loss of a successful compromise.}

\Output{Risk $R_t\ge 0$, signed differential $\mathsf{Adv}^{(+)}_{\mathcal{G},\mathcal{A}}(t)$, and (optional) security benefit $B_{\text{sec},t}$.}

\BlankLine
\textbf{(Equilibrium step)}\;
\lIf{$p^\star$ is not given}{
$p^\star \gets \text{clip}_{[0,1]}\!\left(
\dfrac{\tilde B^{\mathcal U}_{\mathrm{harmony}}-\tilde B^{\mathcal U}_{\mathrm{hide}}+\tilde C^{\mathcal U}_{\mathrm{hide}}}
     {\tilde B^{\mathcal U}_{\mathrm{harmony}}+\tilde C^{\mathcal U}_{\mathrm{leak}}}
\right)$
}
\Indp \(\text{clip}_{[0,1]}\) truncates to the interval \([0,1]\). (cf.\ \S\ref{MixNE})\; \Indm

\textbf{(Sanity bounds)}\;
$\beta_t^{\mathcal U}\gets\text{clip}_{[0,1]}(\beta_t^{\mathcal U})$;\quad
$\bar\beta^{\mathcal U}\gets\text{clip}_{[0,1]}(\bar\beta^{\mathcal U})$.\;

\textbf{(Differential)}\;
$\mathsf{Adv}^{(+)}_{\mathcal{G},\mathcal{A}}(t) \gets p^\star\!\left(\beta_t^{\mathcal U}-\bar\beta^{\mathcal U}\right)$.\;

\textbf{(Positive–part advantage)}\;
$\mathsf{Adv}^{(+)}_{\mathcal{G},\mathcal{A}}(t)\gets \max\{\mathsf{Adv}^{(+)}_{\mathcal{G},\mathcal{A}}(t),0\}$.\;

\textbf{(Risk)}\;
$R_t \gets I\cdot \mathsf{Adv}^{(+)}_{\mathcal{G},\mathcal{A}}(t)$.\;

\textbf{(Optional: security benefit)}\;
\lIf{$\mathsf{Adv}^{(+)}_{\mathcal{G},\mathcal{A}}(t)<0$}{ $B_{\text{sec},t}\gets I\cdot \max\{-\mathsf{Adv}^{(+)}_{\mathcal{G},\mathcal{A}}(t),0\}$ }

\textbf{Return} $(R_t,\mathsf{Adv}^{(+)}_{\mathcal{G},\mathcal{A}}(t),B_{\text{sec},t})$.\;

\BlankLine
\textbf{Notes and diagnostics.}
\begin{enumerate}[label=(\alph*), leftmargin=2.2em]
  \item \emph{Monotonicity:} For $\mathsf{Adv}^{(+)}_{\mathcal{G},\mathcal{A}}(t)>0$, $R_t$ increases linearly in $I$ and $\beta_t^{\mathcal U}$, and decreases linearly in $\bar\beta^{\mathcal U}$, with slopes
  $\partial R_t/\partial \beta_t^{\mathcal U}=I\,p^\star$ and
  $\partial R_t/\partial \bar\beta^{\mathcal U}=-I\,p^\star$.
  \item \emph{Zero–risk regime:} If $\beta_t^{\mathcal U}\le \bar\beta^{\mathcal U}$ (no advantage for the adversary), then $\mathsf{Adv}^{(+)}_{\mathcal{G},\mathcal{A}}(t)\le 0$ and $R_t=0$ by construction.
  \item \emph{Numerical safeguards:} Clip probabilities to $[0,1]$. When computing $p^\star$ from utilities, guard denominators away from $0$ (e.g., add $10^{-12}$).
\end{enumerate}
\end{algorithm}

\subsection{Strategic Risk Analysis in Adversarial Contexts}\label{Adversarisk}

\paragraph{Purpose and alignment.}
We refine the classical quantitative-risk template $R= p \times I$ using a decision-conditioned construction \(R_t = I \times \mathrm{Adv}^{(+)}(t)\),
where the ``likelihood'' is the \emph{adversarial advantage} induced by the equilibrium attack propensity and the relative effectiveness of controls. Specifically, under the mixed strategy \(p^\star\) from \S\ref{MixNE}, and effectiveness parameters \(\beta_t^{\mathcal U},\bar\beta^{\mathcal U}\in[0,1]\) (hiding vs.\ baseline), we set
\(\mathrm{Adv}^{(+)}(t)=p^\star(\bar\beta^{\mathcal U}-\beta_t^{\mathcal U})\).
This retains the standard ``likelihood $\times$ impact'' form endorsed by NIST~SP~800-30, ISO/IEC~27000, and Open FAIR \cite{NIST80030,ISO27000,OpenFAIR}, but makes the likelihood \emph{explicitly game- and control-aware}, enabling principled comparisons across equilibrium mixes and defensive postures.

\paragraph{Risk metric choice (signed advantage, nonnegative risk).}
\begin{definition}[Adversarial Advantage]
\label{def:advantage}
Given the equilibrium aggressiveness of $\mathcal{A}$ (\S\ref{MixNE}) $p^\star\in[0,1]$, the \emph{time–varying adversarial advantage} quantifies the incremental benefit gained by the adversary at epoch~$t$ relative to the baseline, and is defined as
\begin{align}\label{eq:AdvPlus-def}
\mathsf{Adv}^{(+)}_{\mathcal{G},\mathcal{A}}(t)
&:= \Pr_{\mathcal A}(\mathrm{Succ}\mid \lnot \mathrm{Hide}_{\mathcal U})
      - \Pr_{\mathcal A}(\mathrm{Succ}\mid \mathrm{Hide}_{\mathcal U};t) \nonumber\\
&= p^\star\bigl(\bar\beta^{\mathcal U}-\beta_t^{\mathcal U}\bigr), \qquad \quad t=1,2,\dots,T.
\end{align}
\end{definition}

where \(\beta_t^{\mathcal U}\in[0,1]\) denotes the time-varying steganographic effectiveness (probability that hiding \emph{prevents} compromise at epoch \(t\)), and \(\bar\beta^{\mathcal U}\in[0,1]\) denotes the baseline (no-stego) effectiveness of other controls (cf.\ \S\ref{subsec:cond-success}). From \eqref{eq:adv-hide-cond}–\eqref{eq:adv-nohide-cond},
\begin{align*}
\Pr_{\mathcal A}\!\bigl((\mathrm{Succ} \mid \lnot (\mathrm{Hide}_{\mathcal U};\bigr)
    &= p^\star \bigl(1-\bar\beta^{\mathcal U}\bigr),\\
\Pr_{\mathcal A}\!\bigl((\mathrm{Succ} \mid (\mathrm{Hide}_{\mathcal U};\,t\bigr)
    &= p^\star \bigl(1-\beta_t^{\mathcal U}\bigr).
\end{align*}  
Multiplication by the equilibrium attack probability $p^\star$
scales this difference by the likelihood that the adversary chooses to attack. Thus \(\mathsf{Adv}^{(+)}(t)>0\) when hiding underperforms baseline (i.e., \(\beta_t^{\mathcal U}<\bar\beta^{\mathcal U}\)),
\(\mathsf{Adv}^{(+)}(t)=0\) when both are equally effective, and \(\mathsf{Adv}^{(+)}(t)<0\) when hiding dominates baseline.
Note that \(\mathsf{Adv}^{(+)}(t)\) depends on \(p^\star\) and the \emph{relative} gap
\(\bar\beta^{\mathcal U}-\beta_t^{\mathcal U}\) but not on the defender’s mixing \(q^\star\), because it compares the two branches directly.

Let \(I\ge 0\) denotes the (currency) impact. We report a nonnegative \emph{operational risk}:
\begin{equation}\label{eq:Risk-nonneg}
R_t \;=\; I\cdot \max\!\left\{\mathsf{Adv}^{(+)}_{\mathcal{G},\mathcal{A}}(t),\,0\right\},
\end{equation}
where $\max\{\mathsf{Adv}^{(+)}_{\mathcal{G},\mathcal{A}}(t),0\}$ denotes “take the larger of the two numbers $\mathsf{Adv}^{(+)}_{\mathcal{G},\mathcal{A}}(t)$ and $0$.”  Concretely,
\[
\max\{\mathsf{Adv}^{(+)}_{\mathcal{G},\mathcal{A}}(t),0\} = 
\begin{cases}
\mathsf{Adv}^{(+)}_{\mathcal{G},\mathcal{A}}(t), & \text{if }\mathsf{Adv}^{(+)}_{\mathcal{G},\mathcal{A}}(t)>0,\\
0,        & \text{if }\mathsf{Adv}^{(+)}_{\mathcal{G},\mathcal{A}}(t)\le0.
\end{cases}
\]
Thus, $\max\{\mathsf{Adv}^{(+)}_{\mathcal{G},\mathcal{A}}(t),0\}$ extracts the \emph{positive part} of $\mathsf{Adv}^{(+)}_{\mathcal{G},\mathcal{A}}(t)$ ensuring the \emph{risk} $R_t$ is never negative.  By contrast, $\min\{\mathsf{Adv}^{(+)}_{\mathcal{G},\mathcal{A}}(t),0\}$ would extract the \emph{negative part} of $\mathsf{Adv}^{(+)}_{\mathcal{G},\mathcal{A}}(t)$ (and be nonpositive), which does not align with the conventional interpretation of risk as a nonnegative expectation of loss.  Instead, the negative part $\min\{\mathsf{Adv}^{(+)}_{\mathcal{G},\mathcal{A}}(t),0\}$ is captured separately as the \emph{security benefit}
\begin{equation}\label{Benefit}
   B_{\mathrm{sec},t}
\;=\;
I\cdot\max\{-\mathsf{Adv}^{(+)}_{\mathcal{G},\mathcal{A}}(t),0\}
\;=\;
I\cdot\bigl(-\min\{\mathsf{Adv}^{(+)}_{\mathcal{G},\mathcal{A}}(t),0\}\bigr), 
\end{equation}
This approach preserves the standards-aligned nonnegativity of risk \cite{NIST80030,ENISA_RM_Standards} while retaining the sign information via a separate upside measure.

Moreover, if baseline controls are ineffective, \(\bar\beta^{\mathcal U}\) can be taken close to \(1\).
Then for any epoch with \(\beta_t^{\mathcal U}<\bar\beta^{\mathcal U}\), we have
\(\mathsf{Adv}^{(+)}_{\mathcal G,\mathcal A}(t)>0\).
Moreover, for two epochs \(t_1,t_2\) with \(\beta_{t_2}^{\mathcal U}<\beta_{t_1}^{\mathcal U}\),
\[
\mathsf{Adv}^{(+)}_{\mathcal G,\mathcal A}(t_2)-\mathsf{Adv}^{(+)}_{\mathcal G,\mathcal A}(t_1)
= p^\star\bigl(\beta_{t_1}^{\mathcal U}-\beta_{t_2}^{\mathcal U}\bigr) \;>\; 0,
\]
so the curve rises as detectors improve.

For completeness, we document in Appendix~\ref{app:risk-variants} the alternative \emph{absolute} variant --- %
$\mathsf{Adv}^{\mathsf{attack}}_{\mathcal{G},\mathcal{A}}(t)
:= \mathsf{Adv}^{(\mathrm{abs})}_{\mathcal{G},\mathcal{A}}(t)
=\bigl|\Delta_{t}\bigl|$ with $R^{(\mathrm{abs})}_t \;=\; I \cdot \mathsf{Adv}^{\mathsf{attack}}_{\mathcal{G},\mathcal{A}}(t) \;\ge 0$
--- and show how it collapses the sign information (useful for ranking scenarios by \emph{magnitude} only). In this paper, the main text, plots, and tables use the signed construction above.

\paragraph{Unconditional (overall) success and expected loss.}
Building on the conditional success definitions in \S\ref{subsec:cond-success}, we aggregate over the defender’s mixed action with \(\Pr(\mathrm{Hide}_{\mathcal U})=q^\star\) and \(\Pr(\lnot \mathrm{Hide}_{\mathcal U})=1-q^\star\) yields the \emph{unconditional} (overall) adversary success probability in \eqref{eq:adv-uncond}.
The bracketed term is a convex combination of the miss–detection rates in the two branches and is therefore in \([0,1]\). Monotonicities follow immediately: \(\partial \Pr_{\mathcal A}(\mathrm{Succ};t)/\partial p^\star\ge 0\), \(\partial \Pr_{\mathcal A}(\mathrm{Succ};t)/\partial \beta_t^{\mathcal U}\le 0\), and \(\partial \Pr_{\mathcal A}(\mathrm{Succ};t)/\partial \bar\beta^{\mathcal U}\le 0\); moreover \(\partial \Pr_{\mathcal A}(\mathrm{Succ};t)/\partial q^\star = p^\star(\bar\beta^{\mathcal U}-\beta_t^{\mathcal U})\), which has the same sign as the effectiveness gap between baseline and hiding. 

Let $I_t \ge 0$ be the monetary loss if a compromise occurs at time $t$. We make the standard assumption that, once $t$ is fixed, the \emph{size} of this loss does not depend on whether the attack succeeds (i.e., $I_t$ is independent of the success event conditional on $t$). Under this assumption, the expected loss separates into “likelihood $\times$ impact”:
\begin{multline}
\mathbb{E}[L_t]
\;=\;\mathbb{E}[I_t]\cdot \Pr_{\mathcal A}\!\bigl(\mathrm{Succ};t\bigr)\\
= \mathbb{E}[I_t]\cdot p^\star\Big[q^\star(1-\beta_t^{\mathcal U}) + (1-q^\star)(1-\bar\beta^{\mathcal U})\Big].
\label{eq:ELt-main}
\end{multline}
This approach matches the quantitative risk convention form advocated in operational guidance (e.g., NIST SP~800-30) \cite{NIST80030}.

When independence is too strong, — e.g., if the impact distribution depends on whether the attack succeeded under hiding vs.\ no hiding, or on the attack type jointly influenced by \(q^\star\)— one can replace \eqref{eq:ELt-main} by a conditional form or by a Bayesian network that encodes the dependencies among \(\{\mathrm{Hide}_{\mathcal U},\,\text{Succ},\,I_t\}\). A simple refinement is
\begin{multline}
 \mathbb{E}[L_t]
\;=\;
\mathbb{E}\!\left[I_t \,\middle|\, \mathrm{Succ}\right]\Pr_{\mathcal A}(\mathrm{Succ};t),
\quad\\ \text{or}\\ \quad
\mathbb{E}[L_t]
\;=\;
q^\star\,\mathbb{E}\!\left[I_t \,\middle|\, \mathrm{Succ},\mathrm{Hide}\right]\Pr_{\mathcal A}(\mathrm{Succ}\mid \mathrm{Hide};t) \\
+
(1-q^\star)\,\mathbb{E}\!\left[I_t \,\middle|\, \mathrm{Succ},\lnot\mathrm{Hide}\right]\Pr_{\mathcal A}(\mathrm{Succ}\mid \lnot \mathrm{Hide};t),
\label{eq:ELt-conditional}   
\end{multline}
or, more generally, one can propagate expectations through a Bayesian network that links technical effectiveness \((\beta_t^{\mathcal U},\bar\beta^{\mathcal U})\), strategy mix \((p^\star,q^\star)\), attack type/severity, and loss \(I_t\) (cf.\ standard total–probability decompositions  \cite{hubbard2023measure,RFProbLoss,FentonNeilBNs}). This retains the clarity of the total-probability aggregation in \eqref{eq:adv-uncond}, while allowing impact to vary with pathway and adversary response. The computation of epoch-level risk and corresponding security benefit follows the advantage-based formulation detailed in Algorithm \ref{alg:adv-risk}.

\paragraph{Why refine $R=\text{probability}\times\text{impact}$ in adversarial settings.}
Classical guidance frames cyber risk as a combination of the likelihood of a loss event and its impact, and prioritises the reduction of expected loss through controls with the best marginal benefit per unit cost. In our game, the “likelihood” piece is not a single scalar: it \emph{changes with the defender’s action}. The baseline “no hide” branch and the “hide” branch lead to different adversary success rates, which we already model via the conditional probabilities in \eqref{eq:adv-hide-cond}–\eqref{eq:adv-nohide-cond}. Compressing these two branches into a single $p$ obscures the very policy question at stake—\emph{should we hide now?}—and can distort prioritisation. The differential in Equation \eqref{eq:AdvPlus-def} exposes that policy lever directly: it is the causal lift (or drag) that hiding provides against compromise at time $t$. The mapping \eqref{eq:Risk-nonneg} and \eqref{Benefit} reports \emph{nonnegative risk of forgoing hiding}, while $B_{\text{sec},t}$ quantifies the \emph{security benefit} when hiding strictly helps. This keeps “risk” aligned with established standards (nonnegative likelihood/consequence construction) yet preserves signed information for decision-making.

\paragraph{Governance and decision thresholds.}
The pair $(R_t,B_{\text{sec},t})$ supports clear governance rules. Because $\mathsf{Adv}^{(+)}_{\mathcal{G},\mathcal{A}}(t)=p^\star(\beta_t^{\mathcal U}-\bar\beta^{\mathcal U})$, the boundary $\mathsf{Adv}^{(+)}_{\mathcal{G},\mathcal{A}}(t)=0$ is a transparent technical threshold: hiding and not hiding are equivalent \emph{on likelihood}. Above the threshold ($\mathsf{Adv}^{(+)}_{\mathcal{G},\mathcal{A}}(t)>0$), the control increases adversary success and the organisation faces a positive \emph{risk of not hiding} equal to $R_t$; below it ($\mathsf{Adv}^{(+)}_{\mathcal{G},\mathcal{A}}(t)<0$), the organisation accrues a \emph{security benefit} of $B_{\text{sec},t}$. Practically, a risk committee can set a minimal acceptable risk-reduction per cost and approve deployment only when $R_t/\text{Cost}(\mathrm{Hide})$ exceeds that threshold—fully in line with enterprise risk management practices that tie investment to demonstrable risk reduction.

\paragraph{Prioritisation and portfolio selection.}
By separating the branches and using $\mathsf{Adv}^{(+)}_{\mathcal{G},\mathcal{A}}(t)$, one can compare heterogeneous controls on a common “risk-reduction per unit spend” axis. Controls that primarily improve baseline hardening (raising $\bar\beta^{\mathcal U}$) will shrink $\mathsf{Adv}^{(+)}_{\mathcal{G},\mathcal{A}}(t)$ from the right; controls that specifically improve stego efficacy (raising $\beta_t^{\mathcal U}$) push $\mathsf{Adv}^{(+)}_{\mathcal{G},\mathcal{A}}(t)$ up from the left. A rational portfolio budgets first for items with the largest $\mathbb{E}[I]\cdot \mathsf{Adv}^{(+)}_{\mathcal{G},\mathcal{A}}(t)$ per dollar (or per unit of operational friction), a workflow consistent with quantitative risk analysis programs that emphasise expected-loss triage and economically motivated control selection.

\paragraph{Time variation, monitoring, and model risk.}
Because $\beta_t^{\mathcal U}$ and $\bar\beta^{\mathcal U}$ evolve with attacker learning and defender tuning, $\mathsf{Adv}^{(+)}_{\mathcal{G},\mathcal{A}}(t)$ should be monitored like any key risk indicator. Operationally, one can estimate the branch miss–detection rates from telemetry and maintain posterior distributions for $\beta_t^{\mathcal U}$ and $\bar\beta^{\mathcal U}$ (e.g., Beta–Bernoulli updates or Bayesian-network models when dependencies exist). This yields credible intervals for $\mathsf{Adv}^{(+)}_{\mathcal{G},\mathcal{A}}(t)$, and hence for $R_t$ and $B_{\text{sec},t}$, supporting “decision with uncertainty” dashboards rather than point estimates only.

\section{Empirical Calibration and Validation}
\label{sec:empirical}
This section presents the empirical evaluation of the proposed risk analysis framework. The goal is twofold: first, to empirically calibrate the model across different stego methods and payloads to determine optimal hyperparameters and learning configurations; and second, to validate the robustness and generalisation capability of the detector under varying embedding conditions.

\subsection{Experimental Setup \& Implementation Details}
\label{sec:exp-setup}

\paragraph{Detector and codebase.}
To mirror common practice and hedge model bias, detection is evaluated with two small CNNs (GNCNN Deep learning for steganalysis and Xu-Net–style) trained in PyTorch with GPU support~\cite{xu2016structural, qian2015deep}. Our detector preserves XuNet’s input--output interface (high-pass prefiltering, convolutional blocks with batch normalisation, global average pooling, and a final log-softmax head), and we explore training controls for curriculum learning and lightweight hyperparameter sweeps, selecting the best learning rate.

To align with the adversarial defender framework and ensure robustness across multiple stego methods, modifications were introduced to the GNCNN. The implementation now supports extended metric logging, AUC and $\mathrm{TPR}@\mathrm{FPR}=0.10$ ( \(\mathrm{TPR}@\mathrm{FPR}{=}0.10\) is the true positive rate (TPR) measured at the operating point where false positive rate (FPR) is fixed at \(0.10\)), from which the defender’s measure $\beta_t^{\mathcal{U}} = 1 - \mathrm{TPR}$ is derived. The dataset loader was enhanced to automatically match cover–stego pairs, preventing runtime interruptions. Checkpoint loading was modernised for PyTorch~2.6 compatibility, ensuring consistent restoration of weights and optimisers. To avoid CUDA reinitialisation errors, the training pipeline enforces the \texttt{spawn} multiprocessing method with \texttt{num\_workers=0}. Random seeds, logging, and checkpointing were standardised to guarantee full reproducibility across runs and configurations.

\paragraph{Dataset and stego generation.}
All experiments use \textbf{BOSSbase 1.01}. For each embedding method: \textbf{WOW}, \textbf{S-UNIWARD}, \textbf{HILL}, and \textbf{MiPOD} stego images are generated at payloads \(\{0.400, 0.200, 0.100\}\) bits per pixel. We form disjoint training/validation splits (typical configurations use \(\sim\)4,000 pairs for training and \(\sim\)1,000 pairs for validation per method/payload \cite{ntivuguruzwa2023convolutional}), with a separate held-out test set for final reporting. To make the pipeline robust to partial or missing stego sets, the loader \emph{intersects} cover/stego filenames and skips missing pairs without aborting.

\paragraph{Training protocol.}
Training uses a batch size of 16 for up to 150 epochs with $(\mathrm{StepLR})$ learning rate decay (step size 30, $(\gamma=0.5))$. The Adam optimiser is supported \cite{kingma2014adam}, and weight decay is enabled by default. Early stopping monitors validation AUC, with a patience window set by default to 20 epochs. Additionally, a \emph{curriculum across payloads} is implemented: training is conducted sequentially from payload 0.400 to 0.200 to 0.100, with each stage initialised from the best checkpoint of the previous (higher) payload. This cross-payload transfer has been empirically shown to benefit all methods considered \cite{boroumand2018deep,qian2015deep,xu2016structural,Kodovsky2012Ensemble,fridrich2012rich,xie2023novel}.

\paragraph{Metrics and defender signal.}
Per epoch, training/validation loss, accuracy, ROC-AUC, and \(\mathrm{TPR}\) at a fixed FPR \(=0.10\) are recorded. Following the defender’s objective, we define
\[
\beta_t^{\mathcal{U}} \;=\; 1 - \mathrm{TPR}@\mathrm{FPR{=}0.10},
\]
and log it jointly with AUC. We propagate the bootstrap distribution of $\mathrm{TPR}_{\text{op}}$ through these algebraic maps to obtain Confidence intervals (CIs) for $\mathsf{Adv}^{(+)}_{\mathcal{G},\mathcal{A}}(t)$ and $R_t$ (Algorithm~\ref{alg:adv-risk}). The \emph{operating point} \(\tau_{\text{op}}\) ( \(\tau\) denotes a score threshold) is defined as any threshold satisfying \(\mathrm{FPR}(\tau_{\text{op}})=0.10\). We then write $\mathrm{TPR}_{\text{op}}\;:=\;\mathrm{TPR}(\tau_{\text{op}})$ and $\mathrm{TPR}@\mathrm{FPR{=}0.10\;:=\;\mathrm{TPR}_{\text{op}}},$ i.e., the detector’s sensitivity when the false alarm rate is held at \(10\%\) along the ROC curve. This mapping follows the conceptual interpretation used in Sections~\ref{MixNE} and~\ref{Risk_ana} where \(\beta^{\mathcal U}\) denotes the conditional probability that embedding prevents compromise. Also, snapshot system statistics (CPU\%, RAM usage, and GPU memory) are provided to contextualise throughput and resource needs during long sweeps.


\paragraph{Hyperparameters.}
The learning rate (\(\mathrm{lr}\)), weight decay (\(\mathrm{wd}\)), and optimiser settings are explored via command-line flags for easy configuration. For \textsc{S-UNIWARD} and \textsc{WOW}, the best settings consistently use the Adam optimiser with learning rates in the range \(\mathrm{lr}\in[1\times10^{-4},\,2\times10^{-4}]\) and weight decay \(\mathrm{wd}\in\{5\times10^{-4},\,10^{-3}\}\). Under the same search space, \textsc{HILL} and \textsc{MiPOD} exhibit similar optima and do not require method-specific adjustments. The process proceeds as follows: (i) Fix the optimiser to Adam. (ii) On the \(0.400\) bpp payload, perform a grid search over \(\mathrm{lr}\in\{5\times10^{-5},\,1\times10^{-4},\,2\times10^{-4}\}\) and \(\mathrm{wd}\in\{5\times10^{-4},\,10^{-3}\}\), selecting the configuration with the best validation AUC/\(\mathrm{TPR}@\mathrm{FPR}=0.10\). (iii) Initialise training for the \(0.200\) and \(0.100\) payloads from the best \(0.400\) checkpoint and fine-tune with the same optimiser and regularisation settings.


\subsection{Empirical Risk Behaviour and Interpretation}
\label{sec:risk_interpretation}

\begin{figure*}[ht]
  \centering
  \includegraphics[width=\linewidth]{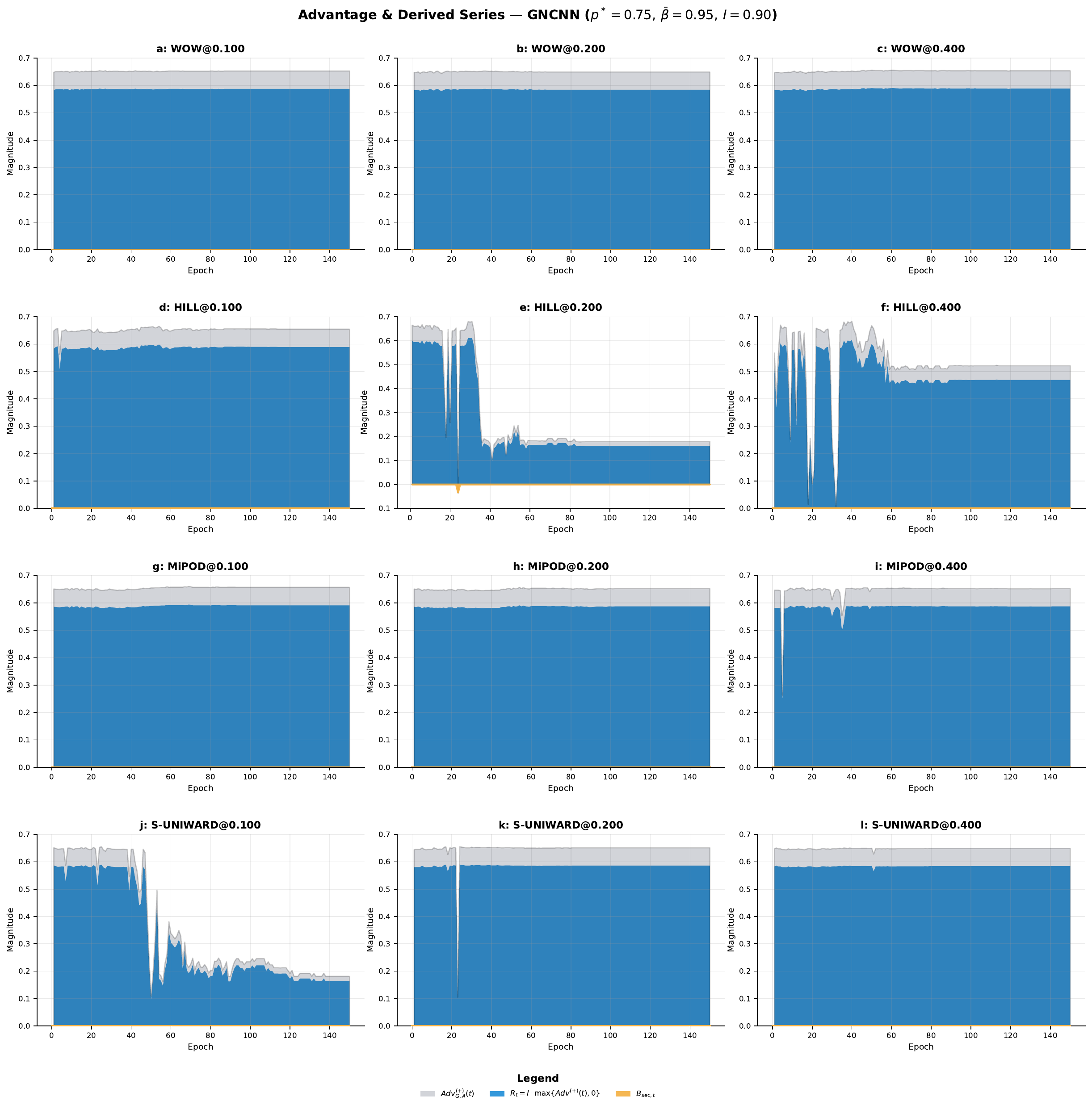}
  \caption{\small
  \textbf{Empirical evolution of adversarial advantage and risk for GNCNN.}
  Each subplot corresponds to a payload and embedding method combination, showing
  the empirical advantage $\mathsf{Adv}^{(+)}_{\mathcal G,\mathcal A}(t)$ (transparent light grey fill),
  the induced risk $R_t = I\!\cdot\!\max\{\mathsf{Adv}^{(+)}(t),0\}$ (blue shading above zero),
  and the security–benefit term $B_{\mathrm{sec},t} = I\!\cdot\!\max\{-\mathsf{Adv}^{(+)}(t),0\}$
  (orange shading below zero, or an orange baseline when $B_{\mathrm{sec},t}\equiv 0$).
  For the parameter setting $(p^\star=0.75,\ \bar\beta^{\mathcal U}=0.95,\ I=0.90)$, the empirical
  advantage remains non-negative across epochs, so $B_{\mathrm{sec},t}$ is identically zero and
  appears only as an orange baseline at $0$. Flat or slowly varying advantage profiles indicate
  stable but limited improvements in GNCNN's steganalytic capability.}

  \label{fig:panel-adv-risk-gncnn}
\end{figure*}

\begin{figure*}[ht]
  \centering
  \includegraphics[width=\linewidth]{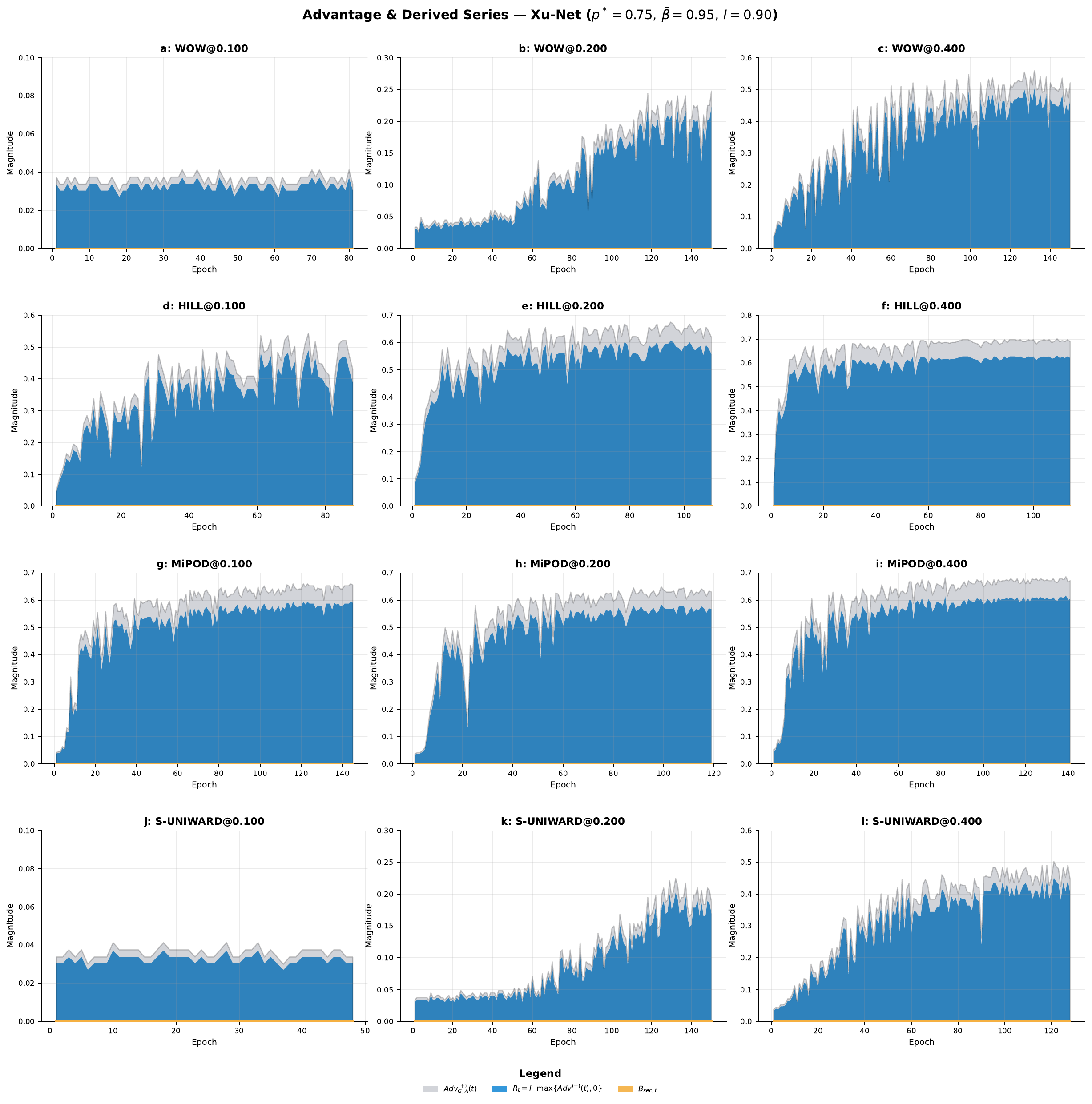}
  \caption{
  \textbf{\small Empirical evolution of adversarial advantage and risk for Xu--Net.} Each subplot corresponds to a payload and embedding method combination, with notation identical to Figure~\ref{fig:panel-adv-risk-gncnn}. Xu--Net displays higher temporal volatility and larger advantage amplitudes, particularly for WOW, HILL, and MiPOD at 0.200–0.400~bpp, demonstrating a stronger sensitivity of $\beta_t^{\mathcal U}$ to embedding rate and confirming the theoretical scaling of $R_t$ with $|\bar\beta_t^{\mathcal U}-\beta_t^{\mathcal U}|$.}
  \label{fig:panel-adv-risk-xunet}
\end{figure*}

\begin{figure*}[h!]
  \centering
  \includegraphics[width=0.85\linewidth]{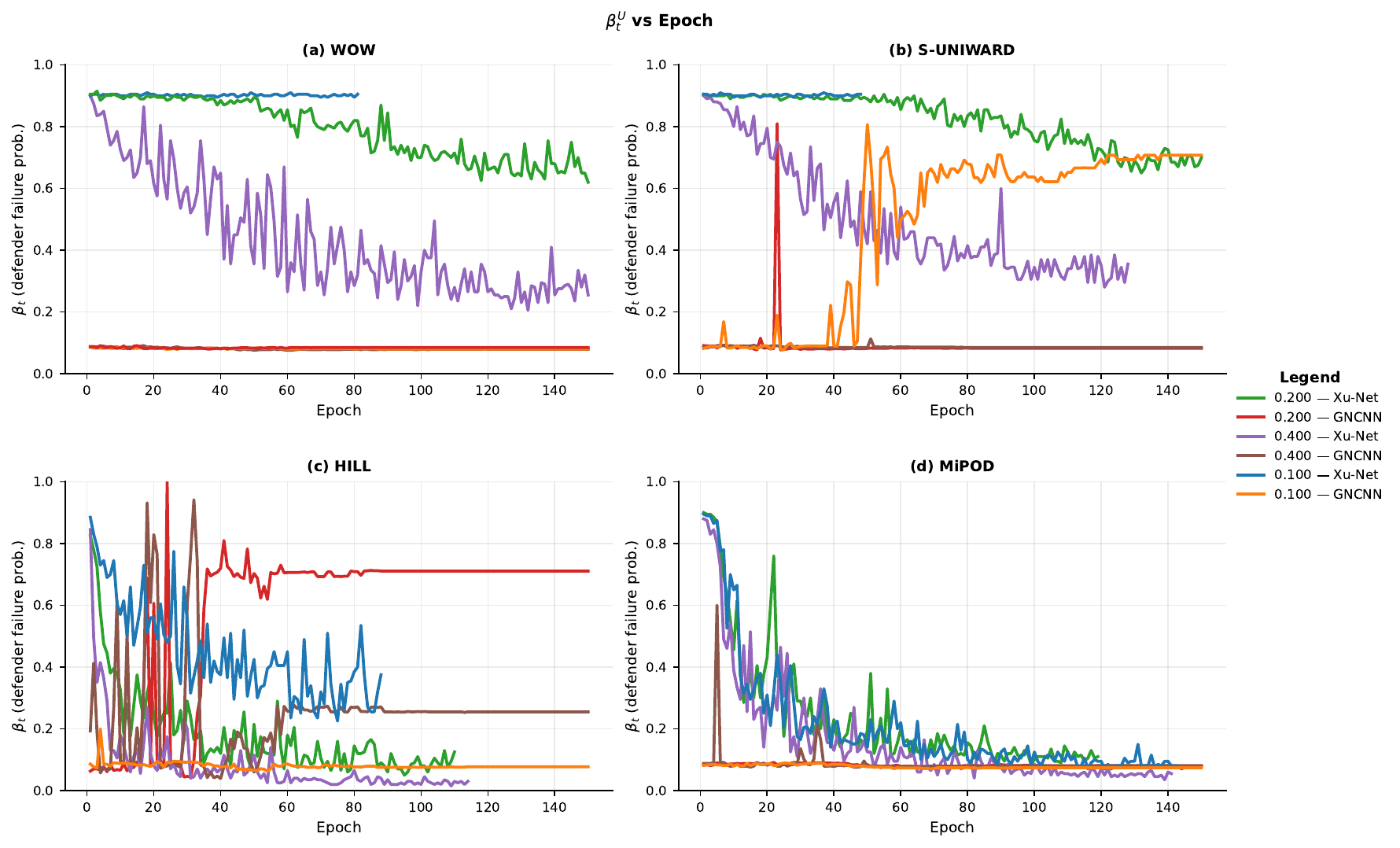}
    \caption{\textbf{\small
$\beta_t^{\mathcal U}$ vs Epoch (FPR target $=0.10$).}\small Temporal evolution of defender failure probability $\beta_t^{\mathcal U}$ 
  (FPR target $=0.10$) across embedding payloads and models for (a)~\textsc{WOW}, 
  (b)~\textsc{S-UNIWARD}, (c)~\textsc{HILL}, and (d)~\textsc{MiPOD}. 
  Each curve shows the epoch-wise $\beta_t^{\mathcal U}$ under identical training conditions for Xu-Net and GNCNN. 
  Xu-Net displays greater volatility—especially for \textsc{WOW} and \textsc{S-UNIWARD}—with convergence strongly payload-dependent. 
  GNCNN maintains lower and more stable $\beta_t^{\mathcal U}$ trajectories across all payloads, reflecting smoother convergence and reduced sensitivity to embedding variance. These results substantiate the equilibrium interpretation in Section~\ref{subsec:cond-success}, linking detector reliability and defender failure probability to the adversarial advantage term $\mathsf{Adv}^{(+)}_{\mathcal G,\mathcal A}(t)$.}
   \label{fig:beta-vs-epoch-all}
\end{figure*}

\begin{figure*}[t]
  \centering
  \includegraphics[width=0.85\linewidth]{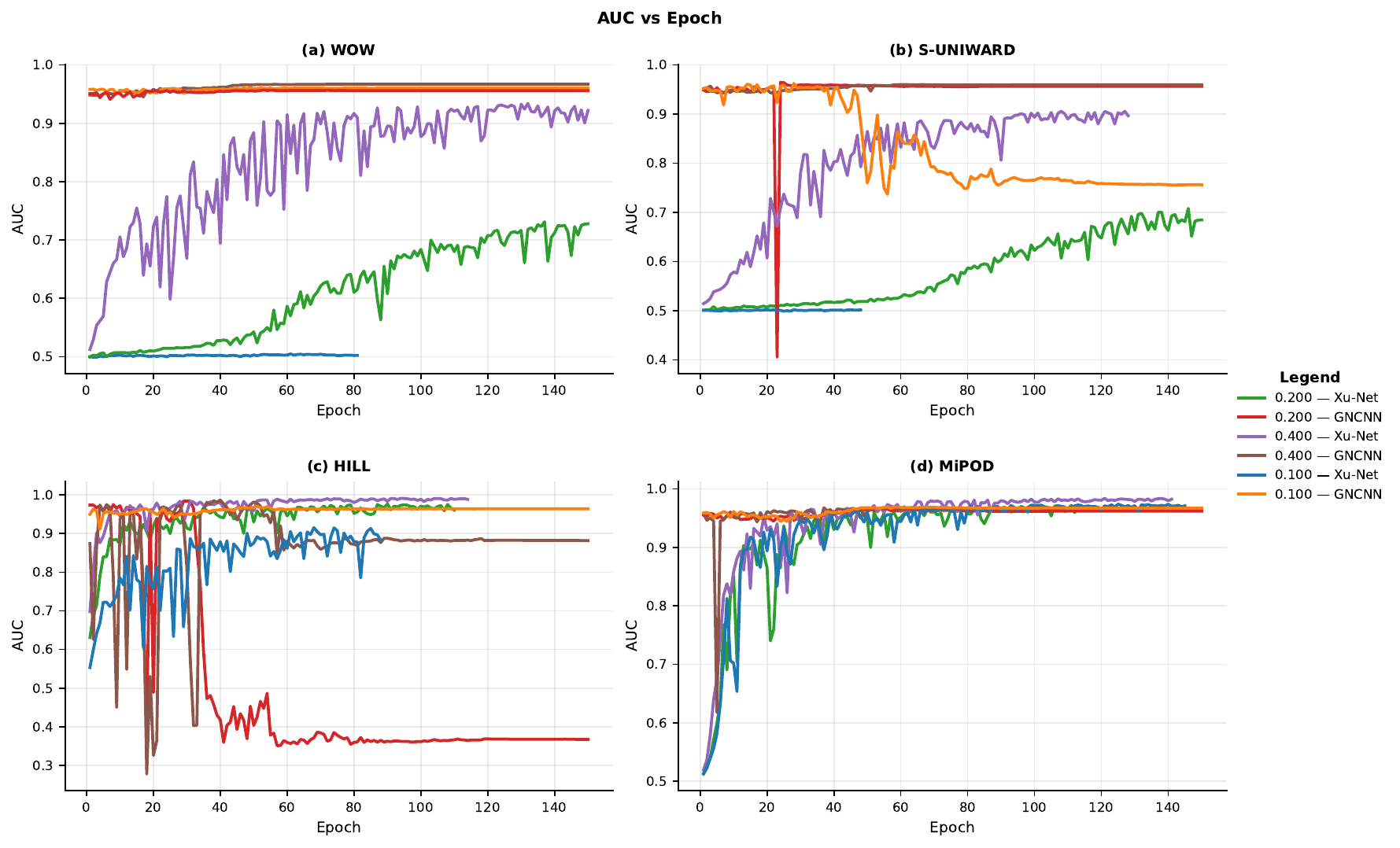}
     \caption{\small Evolution of AUC (Area Under the ROC Curve) over epochs for each embedding method and payload under both Xu-Net and GNCNN architectures. 
  Xu-Net generally exhibits faster early improvement but greater variance, particularly at low payloads for \textsc{WOW} and \textsc{S-UNIWARD}, reflecting higher sensitivity to texture complexity. 
  GNCNN, by contrast, converges smoothly with consistently high AUC values, indicating stable detection performance across payloads. 
  These AUC trajectories complement the $\beta_t^{\mathcal U}$ profiles in Figure~\ref{fig:beta-vs-epoch-all}, jointly illustrating the trade-off between discrimination stability and concealment effectiveness in mixed-strategy equilibria.}
   \label{fig:auc-vs-epoch-all}
\end{figure*}

\begin{figure*}[t]
  \centering
  \includegraphics[width=0.85\linewidth]{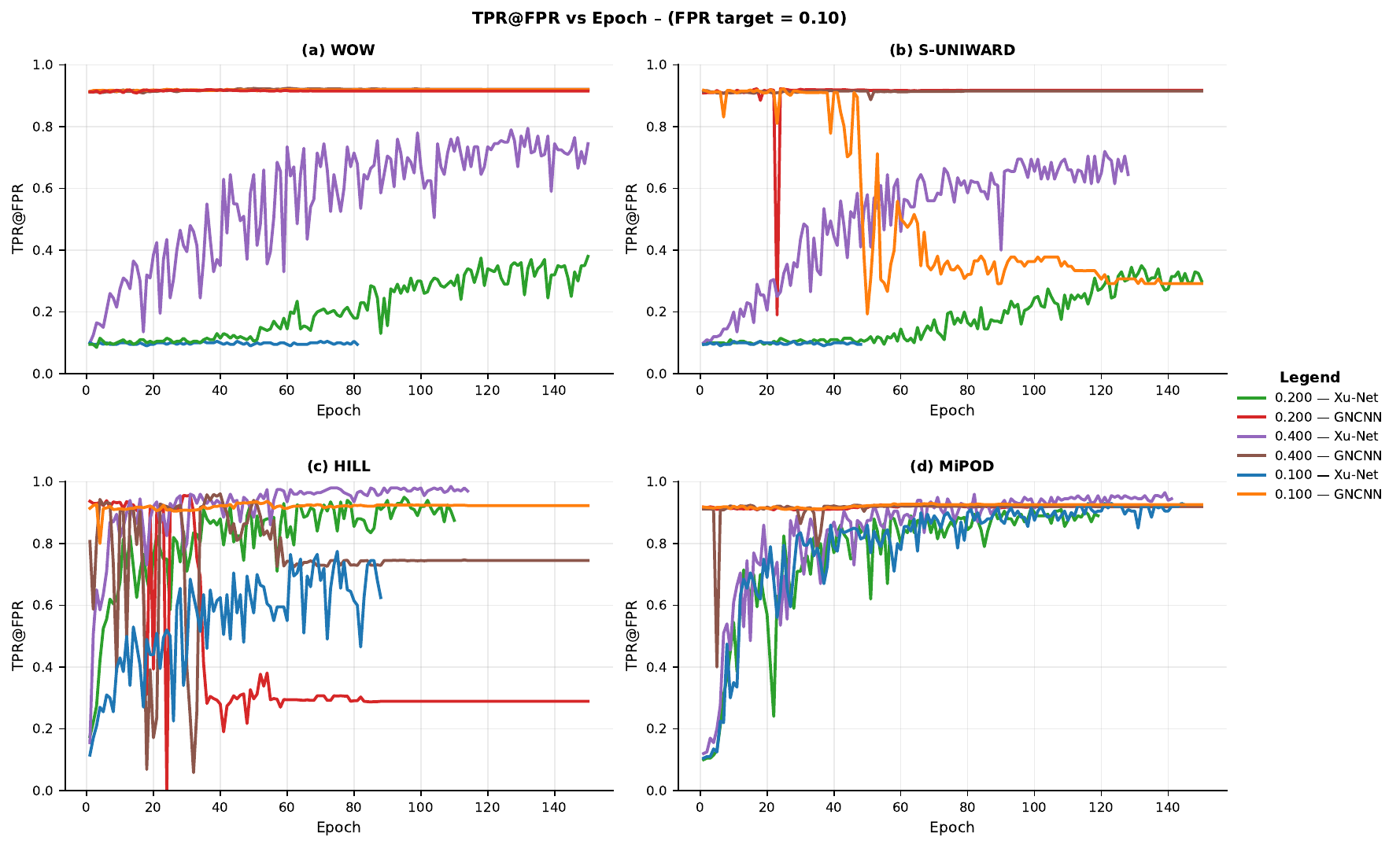}
     \caption{\small Evolution of $\mathrm{TPR@FPR}$ (True Positive Rate at fixed False Positive Rate target $=0.10$) across epochs for both Xu-Net and GNCNN detectors under different embedding methods and payloads. 
  GNCNN consistently maintains high $\mathrm{TPR@FPR}$ values with minimal variance, signifying stable detection sensitivity under constant false alarm constraints. 
  Xu-Net exhibits larger oscillations—especially for \textsc{WOW} and \textsc{S-UNIWARD}—indicating greater adaptability but also transient instability during early training. 
  These dynamics complement the AUC and $\beta_t^{\mathcal U}$ trends (Figs.~\ref{fig:auc-vs-epoch-all} and~\ref{fig:beta-vs-epoch-all}), providing a clearer view of detector robustness and convergence behaviour within the mixed-strategy learning process.}
   \label{fig:tpr-at-fpr-all}
\end{figure*}

Figures~\ref{fig:panel-adv-risk-gncnn} and~\ref{fig:panel-adv-risk-xunet} present the empirical evolution of the 
adversarial advantage $\mathsf{Adv}^{(+)}_{\mathcal G,\mathcal A}(t)$ and the corresponding derived quantities 
$R_t = I \!\cdot\! \max\{\mathsf{Adv}^{(+)}(t),0\}$ and 
$B_{\mathrm{sec},t} = I \!\cdot\! \max\{-\mathsf{Adv}^{(+)}(t),0\}$ over training epochs for all four embedding methods (i.e., WOW, S--UNIWARD, HILL, MiPOD) and payloads 
$\{0.100, 0.200, 0.400\}$~bpp. Both figures are produced under identical calibration parameters $(p^\star=0.75,\ \bar\beta^{\mathcal U}=0.95,\ I=0.90)$, ensuring comparability across models. The results show how the differential likelihood of compromise (captured by $\mathsf{Adv}^{(+)}(t)$) translates into the operational risk trajectory $R_t$ during the learning process of the steganalytic detectors.

\paragraph{Calibration rationale (high baseline effectiveness and impact).}
Mainstream risk guidance models expected loss as a function of \emph{likelihood} and \emph{impact}, and classifies \emph{high} impact as severe or catastrophic effects on confidentiality, integrity or availability; thus choosing $I$ near unity simply instantiates the “high” end of that scale.\footnote{NIST \emph{Guide for Conducting Risk Assessments} states that risk is a function of likelihood and impact; see SP~800-30, Rev.~1, §1.1 (p.~7). \url{https://csrc.nist.rip/publications/detail/sp/800-30/rev-1/final}. NIST FIPS~199 defines the \emph{high} impact level for security categorization; see \S2.2–2.3. \url{https://csrc.nist.gov/pubs/fips/199/final}.} Accordingly, modern CNN-based steganalysis on standard corpora (e.g., BOSSBase) routinely reports low error at moderate payloads (e.g., $0.4$\,bpp) for methods like WOW and S-UNIWARD; as training stabilises or models are fused, operating points with very high detection rates are attainable, supporting a baseline “no-hiding” effectiveness $\bar\beta^{\mathcal U}$ close to one.\footnote{See, e.g., Boroumand–Chen–Fridrich’s SRNet (deep residual steganalysis) demonstrating strong performance on spatial stego at 0.4\,bpp \cite[p.~1]{boroumand2018deep} (\url{https://ieeexplore.ieee.org/stamp/stamp.jsp?arnumber=8470101}); recent deep architectures and ensembles further reduce error on WOW/S-UNIWARD at 0.4\,bpp, \url{https://peerj.com/articles/cs-616/}, and subsequent CNN variants reporting $\leq\!0.15$ error at 0.4\,bpp on BOSSBase).}
Under this lens, $(\bar\beta^{\mathcal U},I)=(0.95,0.90)$ is a defensible “upper-end” stress-test for operational risk curves: it reflects a capable defender (high baseline detector effectiveness when \emph{not} hiding) and a system whose losses are categorised as high if compromise succeeds, while still allowing the empirical learning dynamics of $\beta_t^{\mathcal U}$ to modulate realised risk across epochs.

\paragraph{Global pattern.}
Across both models, the blue shaded area (risk) tracks closely with the dashed purple curve (advantage), 
confirming the theoretical relationship $R_t \propto \mathsf{Adv}^{(+)}(t)$ from \eqref{eq:Risk-nonneg}.  
Positive advantage values where $\bar\beta_t^{\mathcal U}>\beta_t^{\mathcal U}$ signify epochs in which 
the adversary gains relative to the defender, leading to elevated instantaneous risk.  
In contrast, negative advantages (where hiding outperforms baseline) would correspond to 
nonzero security benefits $B_{\mathrm{sec},t}$, which remain visually absent in most runs, 
indicating that steganographic concealment rarely reversed the adversarial edge 
under the given payload and model configurations.
\paragraph{Model comparison (GNCNN vs.\ Xu--Net).}
The first panel (Figure~\ref{fig:panel-adv-risk-gncnn}) shows that GNCNN exhibits predominantly flat or slowly rising 
risk surfaces across epochs, with $\mathsf{Adv}^{(+)}(t)$ stabilising near a constant positive offset.  
This behaviour reflects the limited representational capacity of GNCNN---its filters quickly converge to a 
steady detection regime, leaving the attacker’s effective advantage largely unchanged.  
Xu--Net, by contrast (Figure~\ref{fig:panel-adv-risk-xunet}), reveals pronounced temporal variation and sharper upward 
slopes, especially at higher payloads (0.200–0.400~bpp).  
These oscillations correspond to dynamic interactions between $\beta_t^{\mathcal U}$ (the defender’s 
steganographic effectiveness) and the training progress of deeper convolutional filters, 
illustrating that stronger models yield more expressive and volatile adversarial landscapes.

\paragraph{Effect of embedding method and payload.}
Within each model, the choice of embedding algorithm modulates both the amplitude and persistence of $\mathsf{Adv}^{(+)}(t)$.  
For GNCNN, WOW and MiPOD exhibit consistently high advantage values ($\approx0.25$–$0.30$), implying that non-steganographic baselines dominate even at modest payloads. S--UNIWARD and HILL, however, produce intermittent reductions in advantage are visible as transient dips or plateaus which align with epochs where the detector momentarily fails to generalise, reducing adversary success.  
For Xu--Net, the relationship reverses in some regimes: as payload increases, advantage and risk rise together, confirming the theoretical expectation that higher embedding rates (larger message sizes) amplify the detectable distortion 
and thus the adversary’s marginal benefit. The monotonic dependence of $R_t$ on $(p^\star, I, |\bar\beta_t^{\mathcal U}-\beta_t^{\mathcal U}|)$ predicted by \eqref{eq:Risk-nonneg} is empirically validated.

\paragraph{Interpretive implications.}
The alignment between $\mathsf{Adv}^{(+)}(t)$ and $R_t$ across epochs demonstrates that the proposed game-theoretic risk metric captures the operational vulnerability dynamics in a principled and interpretable manner. In equilibrium, when $p^\star$ reflects optimal adversary aggressiveness (\S\ref{MixNE}), a sustained positive advantage signifies persistent leakage risk despite steganographic use which is consistent with limited $\beta_t^{\mathcal U}$ evolution. Conversely, decreasing $\mathsf{Adv}^{(+)}(t)$ trajectories correspond to effective defender learning, 
as the embedding process recovers concealment performance over time. Thus, the experimental curves directly operationalise the theoretical notion that risk under adversarial equilibrium is not merely a product of event frequency, but of the 
\emph{strategic gap} between baseline and concealment effectiveness.

Table~\ref{tab:adv-risk-summary} summarises the epoch-aggregated statistics of adversarial advantage and derived risk for representative $(p^\star,\bar{\beta}^{\mathcal U},I)$ calibrations. The table condenses each experimental trajectory into its mean, median, and peak values of $\mathsf{Adv}^{(+)}_{\mathcal G,\mathcal A}(t)$ and $R_t$, together with counts of epochs classified as “risk” (positive advantage) or “benefit” (negative advantage).
Consistent with Figs.~\ref{fig:panel-adv-risk-gncnn}–\ref{fig:panel-adv-risk-xunet}, GNCNN exhibits relatively stable yet persistently positive advantages across payloads, indicating sustained adversarial leverage, whereas Xu–Net produces higher volatility and greater variation in $\max R_t$, reflecting its deeper network sensitivity to payload and embedding method.
The aggregated statistics therefore complement the temporal plots by quantifying the relative persistence and magnitude of adversarial risk within each steganographic regime.

\begin{table*}[h!]
\centering
\scriptsize
\setlength{\tabcolsep}{16pt}
\renewcommand{\arraystretch}{1.0}

\sisetup{
  round-mode          = places,
  round-precision     = 2,
  group-separator     = {,},
  group-minimum-digits= 2,
  table-number-alignment = center
}

\caption{Condensed summary of adversarial advantage and derived risk metrics across models, methods, and payloads under calibration $(p^\star,\bar{\beta}^{\mathcal U},I)=(\{0.25,0.50,0.75\},\,0.95,\,0.90)$. Values report mean adversary advantage, maximal risk $\max R_t$, the epoch at which it occurs, and counts of benefit/risk epochs (top entries per method by $\max R_t$).}
\label{tab:adv-risk-summary}
\vspace{0.3em}

\resizebox{0.98\linewidth}{!}{%
\begin{tabular}{
  l                                                 
  l                                                 
  S[table-format=1.0,table-number-alignment=center] 
  S[table-format=1.0,table-number-alignment=center] 
  S[table-format=1.0,table-number-alignment=center] 
  S[table-format=1.0,table-number-alignment=center] 
  S[table-format=1.1,table-number-alignment=center] 
  S[table-format=4.0,table-number-alignment=left] 
}
\toprule
\multicolumn{1}{c}{Model} &
\multicolumn{1}{c}{Method} &
\multicolumn{1}{c}{Payload} &
\multicolumn{1}{c}{$\langle\mathrm{Adv}^{(+)}\rangle$} &
\multicolumn{1}{c}{$\max R_t$} &
\multicolumn{1}{c}{$\mathrm{epoch}(\max R_t)$} &
\multicolumn{1}{c}{$\langle B_{\mathrm{sec},t}\rangle$} &
\multicolumn{1}{c}{Risk epochs} \\
\midrule
\multirow{8}{*}{GNCNN}
  & HILL      & 0.100 & 0.218 & 0.200 &  54 & 0.0 &    150 \\
  & HILL      & 0.200 & 0.092 & 0.204 &  29 & 0.0 &    149 \\
  & MiPOD     & 0.100 & 0.218 & 0.198 &  69 & 0.0 &    150 \\
  & MiPOD     & 0.200 & 0.217 & 0.197 &  54 & 0.0 &    150 \\
  & S-UNIWARD & 0.100 & 0.117 & 0.196 &  25 & 0.0 &    150 \\
  & S-UNIWARD & 0.200 & 0.216 & 0.196 &  17 & 0.0 &    150 \\
  & WOW       & 0.100 & 0.217 & 0.196 &  24 & 0.0 &    150 \\
  & WOW       & 0.200 & 0.216 & 0.196 &  36 & 0.0 &    150 \\
\cmidrule(lr){1-8}
\multirow{8}{*}{Xu-Net}
  & HILL      & 0.100 & 0.125 & 0.163 &  75 & 0.0 &     88 \\
  & HILL      & 0.200 & 0.192 & 0.202 &  95 & 0.0 &    110 \\
  & MiPOD     & 0.100 & 0.187 & 0.198 & 116 & 0.0 &    145 \\
  & MiPOD     & 0.200 & 0.177 & 0.195 &  99 & 0.0 &    119 \\
  & S-UNIWARD & 0.100 & 0.012 & 0.012 &  10 & 0.0 &     48 \\
  & S-UNIWARD & 0.200 & 0.033 & 0.067 & 132 & 0.0 &    150 \\
  & WOW       & 0.100 & 0.012 & 0.012 &  35 & 0.0 &     81 \\
  & WOW       & 0.200 & 0.040 & 0.074 & 150 & 0.0 &    150 \\
\bottomrule
\end{tabular}}
\end{table*}

\paragraph{Joint interpretation of detection and effectiveness metrics.}
Figures~\ref{fig:beta-vs-epoch-all}–\ref{fig:tpr-at-fpr-all} jointly characterise the interplay between detector discrimination, defensive concealment, and equilibrium dynamics. The AUC and $\mathrm{TPR@FPR}$ trajectories reveal that GNCNN achieves consistent and high discriminative power with limited epoch-wise variance, reflecting stable convergence and minimal susceptibility to local feature noise. Xu-Net, although exhibiting greater volatility, reaches sharper peaks in $\mathrm{TPR@FPR}$ and $\beta_t^{\mathcal U}$ at specific payloads, indicating transient bursts of high concealment effectiveness before stabilisation. These fluctuations correspond to short-term changes in the defender’s steganographic success probability under mixed-strategy equilibria, consistent with the conditional formulations in Section~\ref{subsec:cond-success}. Collectively, these empirical profiles validate the theoretical expectation that architectures optimising for steady discrimination (e.g., GNCNN) yield smoother advantage curves and reduced operational risk volatility, whereas more expressive models (e.g., Xu-Net) produce adaptive but higher-variance security responses.

\begin{algorithm}[ht]
\caption{Monte Carlo extension for uncertainty in $(p^\star,\beta_t^{\mathcal{U}},\bar{\beta}_t^{\mathcal{U}},I)$}
\label{alg:mc-risk}
\DontPrintSemicolon

\KwGoal{Propagate parameter uncertainty into risk summaries.}
\KwIn{Sampling model for $(p^\star,\beta_t^{\mathcal{U}},\bar{\beta}_t^{\mathcal{U}},I)$; number of draws $N$.}
\KwOut{$\{R_t^{(k)}\}_{k=1}^N$; mean $\bar R_t$; quantiles $q_\alpha$; tail risk $\mathrm{CVaR}_\alpha$.}

\For{$k \gets 1$ \KwTo $N$}{
  Sample $(p^{\star,(k)},\,\beta_t^{\mathcal{U},(k)},\,\bar{\beta}_t^{\mathcal{U},(k)},\,I^{(k)})$.\;
  $\mathsf{Adv}^{(+)}_{\mathcal{G},\mathcal{A}}(t)^{(k)} \gets p^{\star,(k)}\bigl(\bar\beta_t^{\mathcal{U},(k)}-{\beta}_t^{\mathcal{U},(k)}\bigr)$.\;
  $R_t^{(k)} \gets I^{(k)} \cdot \max\{\mathsf{Adv}^{(+)}_{\mathcal{G},\mathcal{A}}(t)^{(k)},\,0\}$.\;
}

$\bar R_t \gets \frac{1}{N}\sum_{k=1}^{N} R_t^{(k)}$.\;
$q_\alpha \gets \alpha\text{-quantile of }\{R_t^{(k)}\}$.\;
$\mathrm{CVaR}_\alpha \gets \frac{1}{|\mathcal{K}_\alpha|}\sum_{k\in\mathcal{K}_\alpha} R_t^{(k)}$, 
where $\mathcal{K}_\alpha=\{k:\, R_t^{(k)}\ge q_\alpha\}$.\;

\textbf{Remark:} If $\beta_t^{\mathcal{U}}$ is observed via a detection stream, update its posterior (e.g., Beta–Bernoulli) over time and re-run the loop at each epoch $t$ for evolving risk distributions.
\end{algorithm}

\subsection{Decision support and policy levers}\label{sec:decision-policy}

\begin{table*}[t]
\centering
\footnotesize
\setlength{\tabcolsep}{7.0pt}
\renewcommand{\arraystretch}{1.0}

\sisetup{
  round-mode=places,
  round-precision=3,
  group-separator={,},
  group-minimum-digits=2,
  table-number-alignment=center
}
\caption{Epoch-aggregated defender failure ($\beta_t$), adversarial advantage ($\mathrm{Adv}^{(+)}$), and monetary consequences (risk $R_t$, security benefit $B_{\mathrm{sec},t}$, impact = $\pounds 300,000$) across models, methods, payloads, and $p^\star$ scenarios. Currency columns show values in pounds sterling ($\pounds$).}
\label{tab:combined_adv_risk_summary_by_method}
\vspace{0.1em}
\resizebox{0.98\linewidth}{!}{%
\begin{tabular}{
l  
l  
S[table-format=1.3,table-number-alignment=center] 
S[table-format=1.2,table-number-alignment=center] 
S[table-format=1.6,table-number-alignment=center] 
S[table-format=1.6,table-number-alignment=center] 
S[table-format=1.6,table-number-alignment=center] 
S[table-format=1.6,table-number-alignment=center] 
S[table-format=1.6,table-number-alignment=center] 
S[table-format=1.6,table-number-alignment=center] 
l  
l  
l  
l  
}
\toprule
\multicolumn{1}{c}{Method} &
\multicolumn{1}{c}{Model} &
\multicolumn{1}{c}{Payload} &
\multicolumn{1}{c}{$p^\star$} &
\multicolumn{1}{c}{$\langle\beta_t\rangle$} &
\multicolumn{1}{c}{$\min\beta_t$} &
\multicolumn{1}{c}{$\max\beta_t$} &
\multicolumn{1}{c}{$\langle \mathrm{Adv}^{(+)} \rangle$} &
\multicolumn{1}{c}{$\min \mathrm{Adv}^{(+)}$} &
\multicolumn{1}{c}{$\max \mathrm{Adv}^{(+)}$} &
\multicolumn{1}{c}{$\langle R_t\rangle$} &
\multicolumn{1}{c}{$\max R_t$} &
\multicolumn{1}{c}{$\langle B_{\mathrm{sec},t}\rangle$} &
\multicolumn{1}{c}{$\max B_{\mathrm{sec},t}$} \\
\midrule
  WOW & GNCNN & 0.100 & 0.75 & 0.080907 & 0.079000 & 0.086000 & 0.651820 & 0.648000 & 0.653250 & \pounds 195,546 & \pounds 195,975 & \pounds 0 & \pounds 0 \\
   &  & 0.100 & 0.50 & 0.080907 & 0.079000 & 0.086000 & 0.434547 & 0.432000 & 0.435500 & \pounds 130,364 & \pounds 130,650 & \pounds 0 & \pounds 0 \\
   &  & 0.100 & 0.25 & 0.080907 & 0.079000 & 0.086000 & 0.217273 & 0.216000 & 0.217750 & \pounds 65,182 & \pounds 65,325 & \pounds 0 & \pounds 0 \\
   &  & 0.200 & 0.75 & 0.084433 & 0.080000 & 0.091000 & 0.649175 & 0.644250 & 0.652500 & \pounds 194,752 & \pounds 195,750 & \pounds 0 & \pounds 0 \\
   &  & 0.200 & 0.50 & 0.084433 & 0.080000 & 0.091000 & 0.432783 & 0.429500 & 0.435000 & \pounds 129,835 & \pounds 130,500 & \pounds 0 & \pounds 0 \\
   &  & 0.200 & 0.25 & 0.084433 & 0.080000 & 0.091000 & 0.216392 & 0.214750 & 0.217500 & \pounds 64,918 & \pounds 65,250 & \pounds 0 & \pounds 0 \\
   &  & 0.400 & 0.75 & 0.080360 & 0.076000 & 0.091000 & 0.652230 & 0.644250 & 0.655500 & \pounds 195,669 & \pounds 196,650 & \pounds 0 & \pounds 0 \\
   &  & 0.400 & 0.50 & 0.080360 & 0.076000 & 0.091000 & 0.434820 & 0.429500 & 0.437000 & \pounds 130,446 & \pounds 131,100 & \pounds 0 & \pounds 0 \\
   &  & 0.400 & 0.25 & 0.080360 & 0.076000 & 0.091000 & 0.217410 & 0.214750 & 0.218500 & \pounds 65,223 & \pounds 65,550 & \pounds 0 & \pounds 0 \\
\cmidrule(lr){2-14}
   & Xu-Net & 0.100 & 0.75 & 0.902037 & 0.895000 & 0.910000 & 0.035972 & 0.030000 & 0.041250 & \pounds 10,792 & \pounds 12,375 & \pounds 0 & \pounds 0 \\
   &  & 0.100 & 0.50 & 0.902037 & 0.895000 & 0.910000 & 0.023981 & 0.020000 & 0.027500 & \pounds 7,194 & \pounds 8,250 & \pounds 0 & \pounds 0 \\
   &  & 0.100 & 0.25 & 0.902037 & 0.895000 & 0.910000 & 0.011991 & 0.010000 & 0.013750 & \pounds 3,597 & \pounds 4,125 & \pounds 0 & \pounds 0 \\
   &  & 0.200 & 0.75 & 0.789500 & 0.620000 & 0.915000 & 0.120375 & 0.026250 & 0.247500 & \pounds 36,112 & \pounds 74,250 & \pounds 0 & \pounds 0 \\
   &  & 0.200 & 0.50 & 0.789500 & 0.620000 & 0.915000 & 0.080250 & 0.017500 & 0.165000 & \pounds 24,075 & \pounds 49,500 & \pounds 0 & \pounds 0 \\
   &  & 0.200 & 0.25 & 0.789500 & 0.620000 & 0.915000 & 0.040125 & 0.008750 & 0.082500 & \pounds 12,037 & \pounds 24,750 & \pounds 0 & \pounds 0 \\
   &  & 0.400 & 0.75 & 0.435167 & 0.205000 & 0.900000 & 0.386125 & 0.037500 & 0.558750 & \pounds 115,838 & \pounds 167,625 & \pounds 0 & \pounds 0 \\
   &  & 0.400 & 0.50 & 0.435167 & 0.205000 & 0.900000 & 0.257417 & 0.025000 & 0.372500 & \pounds 77,225 & \pounds 111,750 & \pounds 0 & \pounds 0 \\
   &  & 0.400 & 0.25 & 0.435167 & 0.205000 & 0.900000 & 0.128708 & 0.012500 & 0.186250 & \pounds 38,612 & \pounds 55,875 & \pounds 0 & \pounds 0 \\
\cmidrule(lr){2-14}
\addlinespace[0.1em]
\cmidrule(lr){1-14}
  S-UNIWARD & GNCNN & 0.100 & 0.75 & 0.480040 & 0.077000 & 0.807000 & 0.352470 & 0.107250 & 0.654750 & \pounds 105,741 & \pounds 196,425 & \pounds 0 & \pounds 0 \\
   &  & 0.100 & 0.50 & 0.480040 & 0.077000 & 0.807000 & 0.234980 & 0.071500 & 0.436500 & \pounds 70,494 & \pounds 130,950 & \pounds 0 & \pounds 0 \\
   &  & 0.100 & 0.25 & 0.480040 & 0.077000 & 0.807000 & 0.117490 & 0.035750 & 0.218250 & \pounds 35,247 & \pounds 65,475 & \pounds 0 & \pounds 0 \\
   &  & 0.200 & 0.75 & 0.087620 & 0.077000 & 0.810000 & 0.646785 & 0.105000 & 0.654750 & \pounds 194,036 & \pounds 196,425 & \pounds 0 & \pounds 0 \\
   &  & 0.200 & 0.50 & 0.087620 & 0.077000 & 0.810000 & 0.431190 & 0.070000 & 0.436500 & \pounds 129,357 & \pounds 130,950 & \pounds 0 & \pounds 0 \\
   &  & 0.200 & 0.25 & 0.087620 & 0.077000 & 0.810000 & 0.215595 & 0.035000 & 0.218250 & \pounds 64,678 & \pounds 65,475 & \pounds 0 & \pounds 0 \\
   &  & 0.400 & 0.75 & 0.086047 & 0.083000 & 0.113000 & 0.647965 & 0.627750 & 0.650250 & \pounds 194,390 & \pounds 195,075 & \pounds 0 & \pounds 0 \\
   &  & 0.400 & 0.50 & 0.086047 & 0.083000 & 0.113000 & 0.431977 & 0.418500 & 0.433500 & \pounds 129,593 & \pounds 130,050 & \pounds 0 & \pounds 0 \\
   &  & 0.400 & 0.25 & 0.086047 & 0.083000 & 0.113000 & 0.215988 & 0.209250 & 0.216750 & \pounds 64,796 & \pounds 65,025 & \pounds 0 & \pounds 0 \\
\cmidrule(lr){2-14}
   & Xu-Net & 0.100 & 0.75 & 0.902083 & 0.895000 & 0.910000 & 0.035937 & 0.030000 & 0.041250 & \pounds 10,781 & \pounds 12,375 & \pounds 0 & \pounds 0 \\
   &  & 0.100 & 0.50 & 0.902083 & 0.895000 & 0.910000 & 0.023958 & 0.020000 & 0.027500 & \pounds 7,187 & \pounds 8,250 & \pounds 0 & \pounds 0 \\
   &  & 0.100 & 0.25 & 0.902083 & 0.895000 & 0.910000 & 0.011979 & 0.010000 & 0.013750 & \pounds 3,594 & \pounds 4,125 & \pounds 0 & \pounds 0 \\
   &  & 0.200 & 0.75 & 0.818600 & 0.650000 & 0.905000 & 0.098550 & 0.033750 & 0.225000 & \pounds 29,565 & \pounds 67,500 & \pounds 0 & \pounds 0 \\
   &  & 0.200 & 0.50 & 0.818600 & 0.650000 & 0.905000 & 0.065700 & 0.022500 & 0.150000 & \pounds 19,710 & \pounds 45,000 & \pounds 0 & \pounds 0 \\
   &  & 0.200 & 0.25 & 0.818600 & 0.650000 & 0.905000 & 0.032850 & 0.011250 & 0.075000 & \pounds 9,855 & \pounds 22,500 & \pounds 0 & \pounds 0 \\
   &  & 0.400 & 0.75 & 0.497344 & 0.280000 & 0.900000 & 0.339492 & 0.037500 & 0.502500 & \pounds 101,848 & \pounds 150,750 & \pounds 0 & \pounds 0 \\
   &  & 0.400 & 0.50 & 0.497344 & 0.280000 & 0.900000 & 0.226328 & 0.025000 & 0.335000 & \pounds 67,898 & \pounds 100,500 & \pounds 0 & \pounds 0 \\
   &  & 0.400 & 0.25 & 0.497344 & 0.280000 & 0.900000 & 0.113164 & 0.012500 & 0.167500 & \pounds 33,949 & \pounds 50,250 & \pounds 0 & \pounds 0 \\
\cmidrule(lr){2-14}
\addlinespace[0.1em]
\cmidrule(lr){1-14}
  MiPOD & GNCNN & 0.100 & 0.75 & 0.077200 & 0.070000 & 0.089000 & 0.654600 & 0.645750 & 0.660000 & \pounds 196,380 & \pounds 198,000 & \pounds 0 & \pounds 0 \\
   &  & 0.100 & 0.50 & 0.077200 & 0.070000 & 0.089000 & 0.436400 & 0.430500 & 0.440000 & \pounds 130,920 & \pounds 132,000 & \pounds 0 & \pounds 0 \\
   &  & 0.100 & 0.25 & 0.077200 & 0.070000 & 0.089000 & 0.218200 & 0.215250 & 0.220000 & \pounds 65,460 & \pounds 66,000 & \pounds 0 & \pounds 0 \\
   &  & 0.200 & 0.75 & 0.082160 & 0.074000 & 0.091000 & 0.650880 & 0.644250 & 0.657000 & \pounds 195,264 & \pounds 197,100 & \pounds 0 & \pounds 0 \\
   &  & 0.200 & 0.50 & 0.082160 & 0.074000 & 0.091000 & 0.433920 & 0.429500 & 0.438000 & \pounds 130,176 & \pounds 131,400 & \pounds 0 & \pounds 0 \\
   &  & 0.200 & 0.25 & 0.082160 & 0.074000 & 0.091000 & 0.216960 & 0.214750 & 0.219000 & \pounds 65,088 & \pounds 65,700 & \pounds 0 & \pounds 0 \\
   &  & 0.400 & 0.75 & 0.086300 & 0.076000 & 0.600000 & 0.647775 & 0.262500 & 0.655500 & \pounds 194,332 & \pounds 196,650 & \pounds 0 & \pounds 0 \\
   &  & 0.400 & 0.50 & 0.086300 & 0.076000 & 0.600000 & 0.431850 & 0.175000 & 0.437000 & \pounds 129,555 & \pounds 131,100 & \pounds 0 & \pounds 0 \\
   &  & 0.400 & 0.25 & 0.086300 & 0.076000 & 0.600000 & 0.215925 & 0.087500 & 0.218500 & \pounds 64,778 & \pounds 65,550 & \pounds 0 & \pounds 0 \\
\cmidrule(lr){2-14}
   & Xu-Net & 0.100 & 0.75 & 0.203414 & 0.070000 & 0.895000 & 0.559940 & 0.041250 & 0.660000 & \pounds 167,982 & \pounds 198,000 & \pounds 0 & \pounds 0 \\
   &  & 0.100 & 0.50 & 0.203414 & 0.070000 & 0.895000 & 0.373293 & 0.027500 & 0.440000 & \pounds 111,988 & \pounds 132,000 & \pounds 0 & \pounds 0 \\
   &  & 0.100 & 0.25 & 0.203414 & 0.070000 & 0.895000 & 0.186647 & 0.013750 & 0.220000 & \pounds 55,994 & \pounds 66,000 & \pounds 0 & \pounds 0 \\
   &  & 0.200 & 0.75 & 0.241597 & 0.085000 & 0.900000 & 0.531303 & 0.037500 & 0.648750 & \pounds 159,391 & \pounds 194,625 & \pounds 0 & \pounds 0 \\
   &  & 0.200 & 0.50 & 0.241597 & 0.085000 & 0.900000 & 0.354202 & 0.025000 & 0.432500 & \pounds 106,261 & \pounds 129,750 & \pounds 0 & \pounds 0 \\
   &  & 0.200 & 0.25 & 0.241597 & 0.085000 & 0.900000 & 0.177101 & 0.012500 & 0.216250 & \pounds 53,130 & \pounds 64,875 & \pounds 0 & \pounds 0 \\
   &  & 0.400 & 0.75 & 0.161064 & 0.035000 & 0.880000 & 0.591702 & 0.052500 & 0.686250 & \pounds 177,511 & \pounds 205,875 & \pounds 0 & \pounds 0 \\
   &  & 0.400 & 0.50 & 0.161064 & 0.035000 & 0.880000 & 0.394468 & 0.035000 & 0.457500 & \pounds 118,340 & \pounds 137,250 & \pounds 0 & \pounds 0 \\
   &  & 0.400 & 0.25 & 0.161064 & 0.035000 & 0.880000 & 0.197234 & 0.017500 & 0.228750 & \pounds 59,170 & \pounds 68,625 & \pounds 0 & \pounds 0 \\
\cmidrule(lr){2-14}
\addlinespace[0.1em]
\cmidrule(lr){1-14}
  HILL & GNCNN & 0.100 & 0.75 & 0.079760 & 0.063000 & 0.200000 & 0.652680 & 0.562500 & 0.665250 & \pounds 195,804 & \pounds 199,575 & \pounds 0 & \pounds 0 \\
   &  & 0.100 & 0.50 & 0.079760 & 0.063000 & 0.200000 & 0.435120 & 0.375000 & 0.443500 & \pounds 130,536 & \pounds 133,050 & \pounds 0 & \pounds 0 \\
   &  & 0.100 & 0.25 & 0.079760 & 0.063000 & 0.200000 & 0.217560 & 0.187500 & 0.221750 & \pounds 65,268 & \pounds 66,525 & \pounds 0 & \pounds 0 \\
   &  & 0.200 & 0.75 & 0.581887 & 0.044000 & 1.000000 & 0.276085 & -0.037500 & 0.679500 & \pounds 82,900 & \pounds 203,850 & \pounds 75 & \pounds 11,250 \\
   &  & 0.200 & 0.50 & 0.581887 & 0.044000 & 1.000000 & 0.184057 & -0.025000 & 0.453000 & \pounds 55,267 & \pounds 135,900 & \pounds 50 & \pounds 7,500 \\
   &  & 0.200 & 0.25 & 0.581887 & 0.044000 & 1.000000 & 0.092028 & -0.012500 & 0.226500 & \pounds 27,634 & \pounds 67,950 & \pounds 25 & \pounds 3,750 \\
   &  & 0.400 & 0.75 & 0.247053 & 0.039000 & 0.942000 & 0.527210 & 0.006000 & 0.683250 & \pounds 158,163 & \pounds 204,975 & \pounds 0 & \pounds 0 \\
   &  & 0.400 & 0.50 & 0.247053 & 0.039000 & 0.942000 & 0.351473 & 0.004000 & 0.455500 & \pounds 105,442 & \pounds 136,650 & \pounds 0 & \pounds 0 \\
   &  & 0.400 & 0.25 & 0.247053 & 0.039000 & 0.942000 & 0.175737 & 0.002000 & 0.227750 & \pounds 52,721 & \pounds 68,325 & \pounds 0 & \pounds 0 \\
\cmidrule(lr){2-14}
   & Xu-Net & 0.100 & 0.75 & 0.448523 & 0.225000 & 0.885000 & 0.376108 & 0.048750 & 0.543750 & \pounds 112,832 & \pounds 163,125 & \pounds 0 & \pounds 0 \\
   &  & 0.100 & 0.50 & 0.448523 & 0.225000 & 0.885000 & 0.250739 & 0.032500 & 0.362500 & \pounds 75,222 & \pounds 108,750 & \pounds 0 & \pounds 0 \\
   &  & 0.100 & 0.25 & 0.448523 & 0.225000 & 0.885000 & 0.125369 & 0.016250 & 0.181250 & \pounds 37,611 & \pounds 54,375 & \pounds 0 & \pounds 0 \\
   &  & 0.200 & 0.75 & 0.181818 & 0.050000 & 0.825000 & 0.576136 & 0.093750 & 0.675000 & \pounds 172,841 & \pounds 202,500 & \pounds 0 & \pounds 0 \\
   &  & 0.200 & 0.50 & 0.181818 & 0.050000 & 0.825000 & 0.384091 & 0.062500 & 0.450000 & \pounds 115,227 & \pounds 135,000 & \pounds 0 & \pounds 0 \\
   &  & 0.200 & 0.25 & 0.181818 & 0.050000 & 0.825000 & 0.192045 & 0.031250 & 0.225000 & \pounds 57,614 & \pounds 67,500 & \pounds 0 & \pounds 0 \\
   &  & 0.400 & 0.75 & 0.080921 & 0.015000 & 0.845000 & 0.651809 & 0.078750 & 0.701250 & \pounds 195,543 & \pounds 210,375 & \pounds 0 & \pounds 0 \\
   &  & 0.400 & 0.50 & 0.080921 & 0.015000 & 0.845000 & 0.434539 & 0.052500 & 0.467500 & \pounds 130,362 & \pounds 140,250 & \pounds 0 & \pounds 0 \\
   &  & 0.400 & 0.25 & 0.080921 & 0.015000 & 0.845000 & 0.217270 & 0.026250 & 0.233750 & \pounds 65,181 & \pounds 70,125 & \pounds 0 & \pounds 0 \\
\bottomrule
\end{tabular}}
\end{table*}
\noindent
Table~\ref{tab:combined_adv_risk_summary_by_method} summarises the aggregated defender failure probabilities $\beta_t^{\mathcal U}$, the adversarial advantage
$\mathrm{Adv}^{(+)}_{\mathcal G,\mathcal A}(t)=p^{*}(\bar{\beta}^{\mathcal U}-\beta_t^{\mathcal U})$, and the derived monetary effects $R_t=I\!\cdot\!\max\{\mathrm{Adv}^{(+)},0\}$ and $B_{\mathrm{sec},t}=I\!\cdot\!\max\{-\mathrm{Adv}^{(+)},0\}$. A low $\beta_{\text{mean}}$ reflects high steganographic effectiveness by the defender, while a large $\mathrm{Adv}^{(+)}_{\text{mean}}$ indicates growing adversarial success under stronger detectors or higher equilibrium attack probabilities~$p^{*}$. As expected, $\mathrm{Adv}^{(+)}_{\text{mean}}$ and $R_{\text{mean}}$ scale approximately linearly with~$p^{*}$, confirming that risk intensifies with more aggressive adversarial search. 

Consistent with the calibrated instantiation in \S\ref{scenario setup}, we set the defender’s leak impact to $C^{\mathcal U}_{\text{leak}}=\pounds 300{,}000$. Under the modelling assumption $B^{\mathcal A}_{\text{leak}}=\rho\,C^{\mathcal U}_{\text{leak}}$ this implies $B^{\mathcal A}_{\text{leak}}\approx 0.83\times\pounds 300{,}000=\pounds 249{,}000\approx\pounds 250{,}000$, which is the value used in the comparative analyses.

For example, the \textsc{WOW} and \textsc{S-UNIWARD} results at $p^{*}{=}0.75$ and $I{=}\pounds300{,}000$ exhibit the largest $R_{\max}$ values ($\approx\!\pounds1.9\times10^{5}$), whereas \textsc{MiPOD} and \textsc{HILL} maintain lower $\beta_{\text{mean}}$ and hence smaller $\mathrm{Adv}^{(+)}$ gaps. These observations substantiate the theoretical claim that diminishing defender effectiveness ($\beta_t^{\mathcal U}\!\downarrow$) suppresses $\mathrm{Adv}^{(+)}$, while high $p^{*}$ regimes expose the system to elevated economic loss. In operational terms, the results show that organisations must monitor not only the average detector failure but also the worst‐case risk envelope across attacker mixes and payloads. It also suggests that embedding in regimes with high attacker search probability may require additional controls or policy levers before steganography is beneficial. Overall, the results quantify how equilibrium adversarial intensity and payload magnitude jointly determine expected loss in operational currency units, bridging empirical steganalytic performance with game-theoretic cost modelling.

The results further underscore the central security goal of confidentiality: each leak event represents a breach of the protected communication, and the model’s outcomes map directly to the monetary consequences of that breach. By quantifying expected losses $(R_t)$ in currency units, the analysis links strategic embedding decisions, defender detection performance, and adversary search intensity to the actual exposure of confidential information. In turn, the security benefit metric $(B_{\mathrm{sec},t})$ captures the value of preserved confidentiality when the adversary’s advantage is negative. The results emphasise that embedding decisions should be guided by realistic attacker behaviour, detector effectiveness over time, and payload size, rather than static assumptions, if confidentiality is to be maintained cost-effectively.

\subsection{Model Discrimination and Steganographic Effectiveness}

\begin{table}[htbp]
  \centering
  \caption{Embedding statistics grouped by method and payload (distortion and embedding cost).}
  \label{tab:embedding-stats}
  \footnotesize
  \setlength{\tabcolsep}{4pt}%
  \renewcommand{\arraystretch}{1.12}%
  \sisetup{
    round-mode=places,
    round-precision=3,
    table-number-alignment=center
  }

  \begin{adjustbox}{max width=\linewidth}
  \begin{tabular}{
    l
    S[table-format=1.3]
    S[table-format=2.2]
    S[table-format=1.4]
    S[table-format=1.3]
    S[table-format=3.1]
  }
    \toprule
    \multicolumn{1}{c}{Method} &
    \multicolumn{1}{c}{Payload} &
    \multicolumn{1}{c}{Mean PSNR (dB)} &
    \multicolumn{1}{c}{Mean MSE} &
    \multicolumn{1}{c}{Median $t_{\mathrm{embed}}$ (s)} &
    \multicolumn{1}{c}{Success (\%)} \\
    \midrule

    \multirow{3}{*}{WOW}
      & 0.100 & 65.03 & 0.0204 & 0.226 & 100.0 \\
      & 0.200 & 61.68 & 0.0442 & 0.229 & 100.0 \\
      & 0.400 & 58.29 & 0.0965 & 0.235 & 100.0 \\
    \addlinespace[0.35em]
    \cmidrule(lr){1-6}

    \multirow{3}{*}{S-UNIWARD}
      & 0.100 & 67.31 & 0.0136 & 0.028 & 100.0 \\
      & 0.200 & 67.19 & 0.0144 & 0.029 & 100.0 \\
      & 0.400 & 67.19 & 0.0144 & 0.028 & 100.0 \\
    \addlinespace[0.35em]
    \cmidrule(lr){1-6}

    \multirow{3}{*}{MiPOD}
      & 0.100 & 56.14 & 0.2012 & 0.088 & 100.0 \\
      & 0.200 & 55.91 & 0.2030 & 0.180 & 100.0 \\
      & 0.400 & 55.41 & 0.2114 & 0.087 & 100.0 \\
    \addlinespace[0.35em]
    \cmidrule(lr){1-6}

    \multirow{3}{*}{HILL}
      & 0.100 & 58.14 & 0.0998 & 0.013 & 100.0 \\
      & 0.200 & 55.15 & 0.1989 & 0.013 & 100.0 \\
      & 0.400 & 52.13 & 0.3986 & 0.013 & 100.0 \\

    \bottomrule
  \end{tabular}
  \end{adjustbox}
\end{table}

Table~\ref{tab:embedding-stats} benchmarks distortion and embedding cost across payloads and four spatial-domain schemes. As expected, higher payloads degrade fidelity monotonically (PSNR$\downarrow$, MSE$\uparrow$). \textsc{S-UNIWARD} consistently yields the best image quality (e.g., $67.3$\,dB at $0.1$\,bpp with MSE $1.36\times10^{-2}$), followed by \textsc{WOW} and \textsc{HILL}; \textsc{MiPOD} trades slightly lower PSNR for stable behaviour across payloads. Embedding runtimes are near-constant for \textsc{HILL} ($\approx\!13$\,ms) and \textsc{S-UNIWARD} ($\approx\!28$–$29$\,ms), while \textsc{WOW} is comparatively heavier (0.226–0.235\,s) and \textsc{MiPOD} exhibits higher variability (0.087–0.180\,s), reflecting its per-pixel optimisation. All methods achieved a 100\% embedding success rate.

\begin{figure*}
    \centering
    \includegraphics[width=0.9\linewidth]{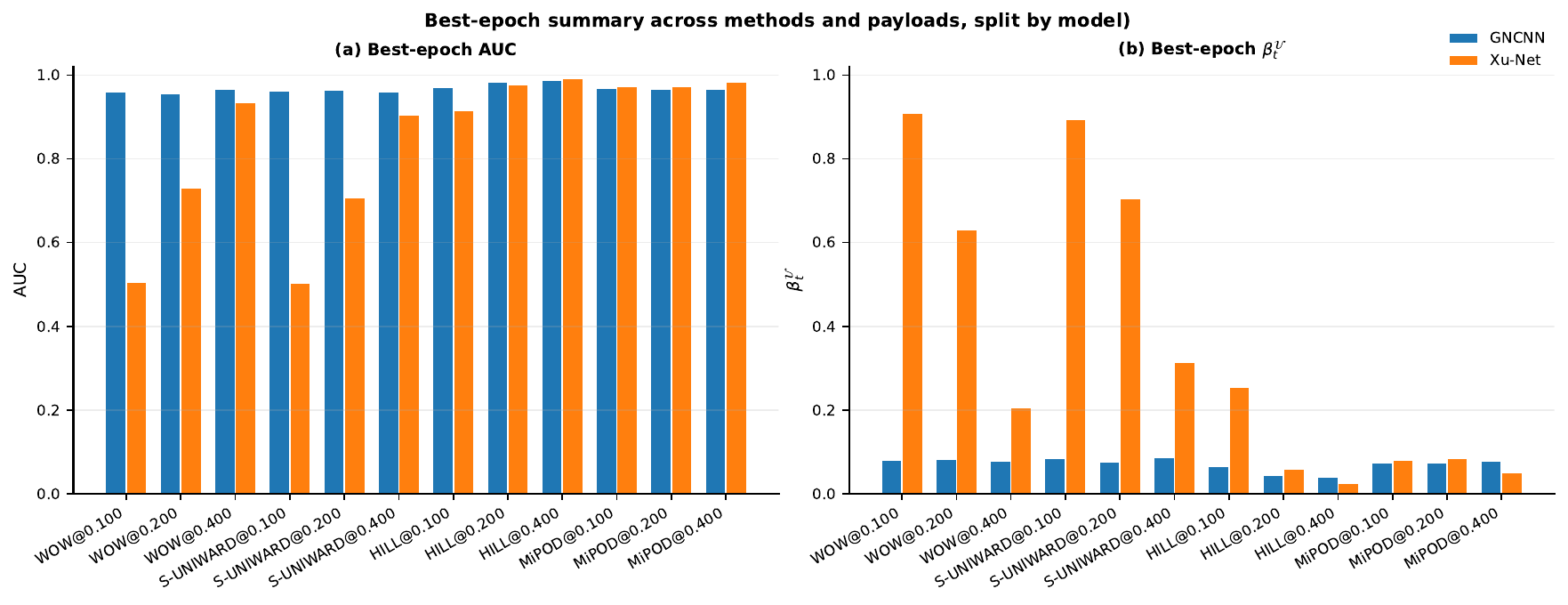}
    \caption{Comparison of (a) discriminative power (AUC) and (b) steganographic effectiveness gap proxy \(\beta_t^{\mathcal U}\) at the best epoch for each method–payload configuration and model (GNCNN, Xu-Net). Panel~(a) summarises detector operating characteristics; Panel~(b) summarises the corresponding defender failure probability at the selected best epoch. Together, they contextualise the trade-off between \emph{recognition quality} and \emph{residual failure} that drives the adversarial advantage metric in Sec.~\ref{sec:risk_interpretation}.}
  \label{fig:best-epoch-auc-beta-combined}
\end{figure*}

Subsequently, figure~\ref{fig:best-epoch-auc-beta-combined} consolidates the best-epoch discrimination (AUC) and defender effectiveness $\beta_t^{\mathcal U}$ for Xu-Net and GNCNN across payloads and embedding methods. 
GNCNN achieves uniformly high AUC values ($\ge0.95$) with minimal variance, indicating strong and consistent detection of stego artefacts. 
Xu-Net, while occasionally surpassing GNCNN at low payloads, shows higher volatility—especially for \textsc{wow} and \textsc{s-uniward}—reflecting sensitivity to embedding perturbations and texture diversity. 

The $\beta_t^{\mathcal U}$ profiles reveal complementary trends: Xu-Net attains higher peak concealment at small payloads but fluctuates sharply, whereas GNCNN maintains a stable, moderate $\beta_t^{\mathcal U}$ across payloads. 
This stability translates into smoother operational advantage trajectories and fewer abrupt risk spikes in the mixed-strategy dynamics (cf.\ Section~\ref{subsec:cond-success}). 
Overall, Xu-Net exhibits burst-like defensive gains under favourable conditions, while GNCNN delivers steadier, architecture-driven robustness. For \textit{Detector Operator points} and  \textit{Training and compute efficiency summary with system metrics } see the Tables \ref{tab:efficiency} and \ref{tab:operating-points} in Appendix \ref{Det_Training}.

\section{Comparison to Adaptive Steganography Schemes}
\label{sec:comparison-adaptive}

\begin{table*}[htbp]
  \centering
  \scriptsize
  \caption{Comparison of classical content–adaptive steganography to our curvature–based game model with extensions for dynamic risk governance. 
  Each distortion “principle” captures the original scheme’s embedding logic; 
  “Our framework view” interprets the induced curvature and how it evolves over time through equilibrium mixes $(p^\star,q^\star)$ under varying detection effectiveness and impact valuation.}
  \label{tab:comparison-adaptive}
  \setlength{\tabcolsep}{6pt}%
  \renewcommand{\arraystretch}{0.8}%
  \begin{tabularx}{\linewidth}{l l X X X}
    \toprule
    \textbf{Scheme} & \textbf{Domain} & \textbf{Distortion principle (original goal)} & \textbf{Strengths / Typical use} & \textbf{Our framework view (effective curvature $\rightarrow$ effect on $p^\star,q^\star$)} \\
    \midrule
    WOW~\cite{Holub2012WOW} 
      & Spatial 
      & Wavelet-obtained weights penalise modifications in smooth structures; concentrate changes in textured/noisy areas to reduce detectability. 
      & Strong baseline across BOSSBase/BOWS2; simple, fast cost; reference method in many studies. 
      & Heavier penalties for “fragile” pixels $\Rightarrow$ higher effective leak-loss convexity $\gamma_\ell$, which \emph{depresses} attacker mix $p^\star$ in our model (Figure~\ref{fig:F3_sensitivity}). \\
    \addlinespace[3pt]
    S-UNIWARD~\cite{holub2014universal}
      & Spatial (wavelet-domain cost)
      & Universal relative distortion in wavelet subbands; content adaptivity via multi-scale responses; objective is low detectability at target payload.
      & Robust, generalises across cover sources; widely used as “hard” baseline in steganalysis papers.
      & More conservative cost in predictable regions $\Rightarrow$ increases effective $\gamma_\ell$; pushes equilibrium toward lower $p^\star$ (attacker searches less often). \\
    \addlinespace[3pt]
    HILL~\cite{li2014new}
      & Spatial
      & High-pass/low-pass residual guided cost; favours embedding in edges/textures, suppresses smooth areas.
      & Competitive performance with modest complexity; often a comparator for CNN steganalysis.
      & Residual-driven penalties behave like steeper loss curvature on “smooth” pixels; net effect similar: higher effective $\gamma_\ell \Rightarrow$ smaller $p^\star$. \\
    \addlinespace[3pt]
    MiPOD~\cite{Sedighi2016MiPOD}
      & Spatial
      & Probabilistic embedding that \emph{minimises detector power} (near-optimal/locally optimal tests); change probabilities tied to statistical detectability.
      & Strong asymptotic rationale; remains a high-bar baseline vs. modern CNN steganalysis.
      & MiPOD focuses on embedding in areas with lower detector sensitivity. In our curvature-based game model, this is represented by a higher effective leakage cost after utility transformation, shifting the mixed equilibrium toward lower attacker search incentives ($p^\star$). Implementation and tuning overheads may also increase the defender's hiding-cost curvature ($\gamma_h$), reflecting diminishing returns as system complexity grows.\\
    \addlinespace[3pt]
    Schöttle–Böhme~\cite{Schottle2013,Schottle2016a}
      & Theory (content-adaptive)
      & Equilibria under detector priors/independence assumptions; mixes driven by cover–detector statistics (game of signals).
      & Clear equilibrium logic; explains when/why mixed strategies arise under statistical models.
      & Our model turns the \emph{driver} from divergences to \emph{economically curved utilities}: concave benefits and convex losses/overheads. 
      Increased $\gamma_\ell$ (breach loss convexity) $\downarrow p^\star$; increased $\gamma_h$ (ops convexity) $\uparrow p^\star$; increased $\gamma_a$ (search convexity) $\downarrow q^\star$. \\
    \addlinespace[3pt]
    \textbf{This work} 
      & Dynamic Game–Risk Framework
      & Introduces detection dynamics via time–varying defender effectiveness $\beta^{\mathcal U}_t$ with baseline $\bar\beta^{\mathcal U}$; defines per–epoch adversarial advantage (See Equation \ref{eq:AdvPlus-def}); integrates impact as currency-valued risk (See equation \ref{eq:Risk-nonneg}); enables simulation-based distributions of outcomes rather than static snapshots.
      & Extends beyond static distortion maps—models dynamic detection, impact translation, and temporal adaptation. Enables Monte Carlo risk evaluation and supports decision–ready governance metrics.
      & Time-varying $\beta^{\mathcal U}_t$ alters curvature flow over epochs; higher variance in $\beta^{\mathcal U}_t$ produces cyclical $p^\star,q^\star$ adjustments. Monetary scaling ($I$) and stochastic updates propagate through equilibrium to yield evolving distributions of expected loss, risk premium, and adversary advantage over time. \\
    \bottomrule
  \end{tabularx}
  \vspace{0.25em}
  \footnotesize\emph{Notes.} 
  The curvature-based formulation subsumes classical content–adaptive cost maps into an \emph{economic game of incentives}. Extensions introduced in this work incorporate (i) dynamic detection processes through $\beta^{\mathcal U}_t$, (ii) monetary translation of adversarial advantage $R_t = I\cdot \max\{\mathsf{Adv}^{(+)}(t),0\}$, and (iii) Monte Carlo simulation of evolving equilibria. Together, these upgrades transform static embedding equilibria into dynamic, time-aware distributions of risk, loss, and adversarial success—providing a governance-oriented bridge between steganographic adaptivity and strategic risk management.
\end{table*}

Classical content–adaptive steganography calibrates embedding changes to local image complexity via a distortion functional, trading payload for detectability. Canonical spatial families include \emph{WOW}, \emph{S-UNIWARD}, \emph{HILL}, and \emph{MiPOD} \cite{Holub2012WOW,holub2014universal,fridrich2011steganalysis,Sedighi2016MiPOD,li2014new}. While their cost maps differ, they share the same design principle: concentrate modifications in textured/noisy regions to minimise steganalytic evidence. In our economic/game–of–incentives view (See Sections \S \ref{A_Steganographic_Game-Theoretic_Model} and \ref{Game_Analysis}), these design choices manifest as \emph{effective curvatures} of benefits and losses: heavier penalties for changing “fragile” pixels translate into higher leak–loss convexity on the defender side, which depresses the attacker’s equilibrium look mix $p^\star$, whereas costlier defensive pipelines (e.g., aggressive pre/post–processing) steepen the operational–overhead convexity and push $p^\star$ upward. Hence, the curvature sweeps reported in Figure~\ref{fig:F3_sensitivity} provide a unifying lens across these content–adaptive families by isolating how curvature governs strategic responses.

Against this backdrop, game–theoretic analyses by Schöttle and Böhme form the closest theoretical comparators  \cite{Schottle2016a, Schottle2013, SchottlePascalandLaszkaAronandJohnsonBenjaminandGrossklagsJensandBohme2013}. Their models develop equilibrium reasoning for content–adaptive embedding under detector priors and (in parts) independent embedding constraints, yielding mixed strategies driven by \emph{statistical} properties of covers and steganalysers. Our framework shares the equilibrium focus but shifts the driver from statistical divergences to \emph{economically calibrated, explicitly nonlinear} utilities: concave protection benefits and convex breach/overhead losses anchor best responses in observables (See \ref{tab:combined_adv_risk_summary_by_method}) and feed directly into risk. Where Schöttle–Böhme models are fundamentally games of signals, our formulation is a game of incentives that preserves the content–adaptive intuition while making curvature an explicit policy knob.

This difference enables three extensions that matter for risk governance. First, we make detection dynamics an input through a time–varying defender effectiveness $\beta^{\mathcal U}_t$ with a baseline $\bar\beta^{\mathcal U}$, and define a per–epoch adversarial advantage, which is expressed in Equation \ref{eq:AdvPlus-def}, quantifying when the attacker’s search incentives temporarily outpace the defender's effective detection capability. Second, we integrate impact and replace the likelihood with adversary advantage (See Equation \ref{eq:Risk-nonneg}), improving on the classical risk framework and turning equilibrium mixes into decision–ready risk measures. Third, we structure the pipeline for time–series simulation and Monte Carlo scenario analysis (See Algorithm \ref{alg:mc-risk} for Monte Carlo extension for uncertainty in $(p^\star,\beta_t^{\mathcal{U}},\bar{\beta}_t^{\mathcal{U}},I)$), so that payload shifts, detector learning, and calibrated mixing translate into distributions of success rates and losses rather than a static snapshot.

Finally, we note that newer “robust/generative’’ paradigms (e.g., diffusion/flow–based generative steganography) pursue semantic control and channel robustness rather than classical pixel–wise distortion minimisation. Even there, the central trade–offs remain payload, fidelity, and detectability \cite{qi2024provably, yu2023cross}; in our terms, increased robustness effectively raises the transformed leak cost, which, all else equal, lowers $p^\star$ and shifts risk in the defender’s favour. Thus, the curvature–based interpretation extends naturally from traditional content–adaptive schemes to modern generative pipelines, providing a consistent bridge from equilibrium behaviour to quantitative risk. Table \ref{tab:comparison-adaptive} summarises the comparison of classical content–adaptive steganography to our model.

\section{Conclusion}
\label{Conclusion}
This work bridges the domains of strategic steganography and quantitative risk management by developing a novel game-theoretic framework that links mixed-strategy equilibrium behaviour to monetised risk outcomes. The classical risk template is refined into a decision-conditioned form, where the adversarial advantage metric captures how the attacker’s incentives can temporarily outstrip a defender’s time-varying detection effectiveness. By introducing this new metric and translating it into currency-unit risk, it enables decision-makers to move from theoretical equilibrium mixes to actionable assessments of exposure and defence investment.

The model presented in this paper integrates empirical calibration—from breach cost statistics and regulatory fine bounds to expert-elicited hardening costs—and embeds it into a Monte Carlo simulation pipeline that captures uncertainty, detector learning, payload shifts and temporal dynamics. The result is a distributional view of success probabilities and expected losses rather than a single static snapshot. In addition, this work shows how different dimensions—such as operational “harmony” benefits, defender search/overhead cost curvature, and adversary search cost—interact strategically to delineate when steganographic embedding is beneficial, marginal or counter-productive.

Finally, analysis identifies transparent mappings between policy levers (such as raising attacker search cost or reducing adversarial benefit via data minimisation) and game-model parameters, enabling comparative statics and governance prescriptions tailored to realistic organisational settings. This fusion of strategic embedding modelling and quantitative risk management opens new pathways for steganography research and organisational defence planning. Future work can extend our model to broader operational domains beyond spatial-domain image embedding.

\appendices
\section{Parameter Elicitation for Nonlinear Utilities}
\label{app:param-elicitation}

This appendix gives a reproducible recipe to set two key curvature parameters in the nonlinear utility maps of Section~\ref{A_Steganographic_Game-Theoretic_Model}:
(i) the superlinearity exponent \(\gamma_\ell>1\) in \(\tilde C^U_{\text{leak}}=\alpha_\ell\,(C^U_{\text{leak}})^{\gamma_\ell}\),
anchored to GDPR fine exposure; and
(ii) the benefit scale \(\beta_U\) in \(\tilde B^U_{\text{hide}}=\beta_U\log(1+B^U_{\text{hide}}/b_U)\),
anchored to a Gordon–Loeb–style budget fraction for optimal security investment.

\subsection{Eliciting \texorpdfstring{\(\gamma_\ell\)}{gamma\_ℓ} from GDPR Exposure Bands}
The EDPB Guidelines on calculating administrative fines (final, 24~May~2023) prescribe a five-step methodology that sets a \emph{turnover-aware starting amount} using infringement gravity and enterprise size, then adjusts for aggravating/mitigating factors before checking the Article~83 caps (up to the higher of fixed euro amounts or a turnover percentage).%
\footnote{See EDPB Guidelines~04/2022 (final), esp. Chapter~2; Article~83(4)--(6) GDPR caps: up to EUR~10/20~million or 2\%/4\% of worldwide turnover.}
Let \(F(s,T)\) denote the \emph{starting amount} as a function of seriousness band \(s\) (e.g., low/medium/high/very high) and annual global turnover \(T\).
To encode the empirically convex exposure of large controllers and of serious infringements, choose \(\gamma_\ell>1\) so that the transformed loss \(\tilde C^U_{\text{leak}}\) \emph{matches} representative fine levels across two or more calibration points.

\paragraph{Inputs.}
Pick \(k\ge2\) scenarios \(\{(C_i, F_i)\}_{i=1}^k\), where
\(C_i\) is a baseline monetary loss proxy for scenario \(i\) (e.g., direct breach losses plus internal response costs),
and \(F_i\) is a GDPR \emph{starting amount} consistent with the EDPB table for seriousness band \(s_i\) at turnover \(T_i\).
(If available, incorporate expected civil damages and sector penalties into \(F_i\).)

\paragraph{Two-point closed form.}
For \(k=2\), solve
\[
\alpha_\ell\,C_1^{\gamma_\ell}=F_1,\qquad \alpha_\ell\,C_2^{\gamma_\ell}=F_2
\]
to obtain
\[
\gamma_\ell \;=\; \frac{\ln(F_2/F_1)}{\ln(C_2/C_1)} \;>\; 1,\qquad
\alpha_\ell \;=\; \frac{F_1}{C_1^{\gamma_\ell}} \;=\; \frac{F_2}{C_2^{\gamma_\ell}}.
\]
Use \(C_2>C_1\) and \(F_2>F_1\) reflecting a higher seriousness band and/or larger turnover.

\paragraph{Multi-point fit.}
For \(k>2\), estimate \((\alpha_\ell,\gamma_\ell)\) by log–log regression:
\[
\min_{\alpha_\ell>0,\,\gamma_\ell>1}\;
\sum_{i=1}^k \Bigl(\ln F_i - \ln \alpha_\ell - \gamma_\ell \ln C_i\Bigr)^2,
\]
optionally weighting each calibration point by regulator-assessed infringement gravity or by the estimated number of affected data subjects.

\paragraph{Practical bands.}
As a starting rule of thumb when detailed \(F_i\) are not available:
\[
\gamma_\ell \in 
\begin{cases}
[1.10,1.30] & \text{low/medium gravity, SME turnover},\\
[1.30,1.70] & \text{high gravity or mid/large turnover},\\
[1.70,2.20] & \text{very high gravity and large/global turnover}.
\end{cases}
\]
Once actual EDPB starting amounts for your sector/size are inserted, recompute via the two-point or multi-point fit to replace these priors.

\subsection{Calibrating \texorpdfstring{\(\beta_U\)}{beta\_U} from a Gordon–Loeb Budget Fraction}
Gordon and Loeb show that, for a given information set with expected loss \(E[L]\), the optimal security investment typically does not exceed \(1/e\approx0.368\) of \(E[L]\).%
\footnote{See \emph{The Economics of Information Security Investment}, ACM TISSEC 2002.}
We translate this into a target optimal spend \(x^\star = f\,E[L]\) with \(f\in(0,1/e]\), then choose \(\beta_U\) so that \(x^\star\) solves the first-order condition for the defender’s utility of spend.

\paragraph{Spend–benefit mapping.}
Let security spend be \(x\ge0\).
Model the (monetary) benefit from spending as a risk-reduction response \(B^U_{\text{hide}}(x)=\kappa\,x\) for small-to-moderate \(x\)
(\(\kappa>0\) converts spend into avoided loss units),%
\footnote{Replace \(\kappa x\) with any differentiable increasing response \(R(x)\); the formula below adapts by substituting \(R'(x)\).}
and keep the utility transform \(\tilde B^U_{\text{hide}}(x)=\beta_U\log\bigl(1+\kappa x\bigr)\).
For the cost side, take \(\tilde C^U_{\text{hide}}(x)=\eta_1 x^{\gamma_h}\) with \(\eta_1>0\), \(\gamma_h\ge1\).

\paragraph{FOC at the target optimum.}
Maximize \(U(x)=\beta_U\log(1+\kappa x)-\eta_1 x^{\gamma_h}\).
The first-order condition \(\beta_U \kappa/(1+\kappa x) - \eta_1\gamma_h x^{\gamma_h-1}=0\) evaluated at \(x^\star=f\,E[L]\) gives a direct calibration:
\[
\boxed{\;
\beta_U \;=\; \frac{(1+\kappa x^\star)\,\eta_1\,\gamma_h\,\bigl(x^\star\bigr)^{\gamma_h-1}}{\kappa}
\;},\quad x^\star=f\,E[L],\;\; f\le 1/e.
\]
Choose \(\kappa\) from an empirical control–impact curve, e.g., observed 
or set \(\kappa=1/b_U\) if you normalize the log scale by \(b_U=E[L]\) (dimensionless argument).
Then compute \(\beta_U\) to make \(x^\star\) the optimiser.

\paragraph{Checklist.}
(i) Estimate \(E[L]\) from recent incidents/sector studies; set \(f\in[0.2,\,0.37]\) to encode budget policy.
(ii) Choose \(\gamma_h\) from engineering cost curvature (e.g., linear \(\gamma_h{=}1\) if dominated by license/CapEx; convex \(\gamma_h{>}1\) if dominated by analyst-hours/latency).
(iii) Set \(\eta_1\) to match current operating costs at status quo spend \(x_0\).
(iv) Pick \(b_U=E[L]\) so \(\log(1+\kappa x)\) acts on a natural loss scale.
(v) Compute \(\beta_U\) from the boxed formula; verify that the induced \(x^\star\) matches the Gordon–Loeb fraction and that marginal utility at \(x^\star\) balances marginal cost.

\section{Absolute Risk Variant}\label{app:risk-variants}

For completeness, we record an \emph{absolute} formulation of the adversarial risk that can be useful when the objective is to rank scenarios strictly by the \emph{magnitude} of exposure, irrespective of whether steganography improves or degrades outcomes relative to baseline. Expanding on the adversarial advantage defined in equation \ref{eq:AdvPlus-def}, the \emph{absolute adversarial advantage} takes the magnitude of this differential:
\begin{equation}
\label{eq:abs-adv-def}
\mathsf{Adv}^{\mathsf{attack}}_{\mathcal{G},\mathcal{A}}(t)
\;:=\;
\bigl|\mathsf{Adv}^{(+)}_{\mathcal{G},\mathcal{A}}(t)\bigr|
\;=\; p^\star \,\bigl|\bar\beta_t^{\mathcal U}-\beta^{\mathcal U}\bigr|.
\end{equation}
If \(I_t\ge 0\) denotes the (possibly time–varying) impact random variable measured in currency units, the \emph{absolute} risk at epoch \(t\) is defined as
\begin{equation}
\label{eq:R-abs}
R^{\mathrm{abs}}_t
\;:=\;
\mathbb{E}[I_t]\cdot \mathsf{Adv}^{\mathsf{attack}}_{\mathcal{G},\mathcal{A}}(t)
\;=\;
\mathbb{E}[I_t]\cdot p^\star\,\bigl|\bar\beta_t^{\mathcal U}-\beta^{\mathcal U}\bigr|.
\end{equation}
This construction obeys standard risk–management doctrine in which risk is evaluated as a combination (or function) of \emph{likelihood} and \emph{consequence/impact}: here the likelihood proxy is \(\mathsf{Adv}^{\mathsf{attack}}_{\mathcal{G},\mathcal{A}}(t)\in[0,1]\) and the consequence is \(\mathbb{E}[I_t]\ge 0\).\cite{NIST80030,ENISA_RM_Standards}

\paragraph{Basic properties.}
From \eqref{eq:abs-adv-def}–\eqref{eq:R-abs} we immediately have: (i) \textbf{Nonnegativity and bounds:} \(0\le \mathsf{Adv}^{\mathsf{attack}}_{\mathcal{G},\mathcal{A}}(t)\le p^\star\le 1\) since \(|\bar\beta_t^{\mathcal U}-\beta^{\mathcal U}|\le 1\). Consequently \(0\le R^{\mathrm{abs}}_t\le \mathbb{E}[I_t]\). (ii) \textbf{Monotonicity in aggressiveness:} for fixed \(\beta_t^{\mathcal U},\bar\beta^{\mathcal U}\), the advantage is linear and increasing in \(p^\star\). (iii) \textbf{Symmetry in effectiveness gap:} \(\mathsf{Adv}^{\mathsf{attack}}_{\mathcal{G},\mathcal{A}}(t)\) depends only on the magnitude \(|\bar\beta_t^{\mathcal U}-\beta^{\mathcal U}|\), so it is indifferent to whether hiding is \emph{better} or \emph{worse} than baseline; it measures the size of the gap, not its sign. (iv) \textbf{Zero conditions:} \(R^{\mathrm{abs}}_t=0\) if either \(p^\star=0\) (adversary abstains), or \(\beta_t^{\mathcal U}=\bar\beta^{\mathcal U}\) (no incremental effect from hiding), or \(\mathbb{E}[I_t]=0\) (no loss).

\paragraph{Relation to the signed formulation.}
Let the signed risk/benefit pair from the main text be
\(
R_t=\mathbb{E}[I_t]\cdot \max\{\mathsf{Adv}^{(+)}_{\mathcal{G},\mathcal{A}}(t),0\}
\)
and
\(
B_{\text{sec},t}=\mathbb{E}[I_t]\cdot \max\{-\mathsf{Adv}^{(+)}_{\mathcal{G},\mathcal{A}}(t),0\}
\),
which keep \emph{risk} nonnegative and separately account for any \emph{security benefit} when hiding outperforms baseline (i.e., \(\mathsf{Adv}^{(+)}_{\mathcal{G},\mathcal{A}}(t)<0\)). Then
\begin{multline}
\label{eq:decomp}
 R^{\mathrm{abs}}_t
\;=\; \mathbb{E}[I_t]\cdot |\mathsf{Adv}^{(+)}_{\mathcal{G},\mathcal{A}}(t)|
\;=\; R_t + B_{\text{sec},t}, 
\\ 
R_t \;=\; \mathbb{E}[I_t]\cdot \max\{\mathsf{Adv}^{(+)}_{\mathcal{G},\mathcal{A}}(t),0\}
\;\le\; R^{\mathrm{abs}}_t.
\end{multline}
Thus, the absolute variant aggregates the two signed quantities into a single nonnegative magnitude; it is informative for \emph{size} comparisons but discards the directional information (whether hiding helps or hurts).

\paragraph{Aggregation across time or scenarios.}
Over a horizon \(t=1,\dots,T\) with discount factor \(\delta\in(0,1]\), one may summarize absolute risk by
\(
\mathsf{ARisk}=\sum_{t=1}^T \delta^{t-1} R^{\mathrm{abs}}_t
\).
If \((I_t,\bar\beta_t^{\mathcal U},\beta^{\mathcal U})\) are modeled as random (e.g., via Monte Carlo or Bayesian posteriors), unbiased estimation follows by averaging \(\hat R^{\mathrm{abs}}_t = \hat{\mathbb{E}}[I_t]\cdot p^\star\,\bigl|\hat{\bar\beta}_t^{\mathcal U}-\hat{\beta}_t^{\mathcal U}\bigr|\) over draws. When independence between \(I_t\) and the effectiveness gap is assumed at this layer, \(\mathbb{E}[R^{\mathrm{abs}}_t]=\mathbb{E}[I_t]\cdot p^\star\,\mathbb{E}\bigl[|\bar\beta_t^{\mathcal U}-\beta^{\mathcal U}|\bigr]\).

\paragraph{When to use the absolute variant.}
The absolute measure \eqref{eq:R-abs} is appropriate when stakeholders require a single nonnegative risk score to rank/control portfolios or to comply with reporting regimes that mandate magnitudes only. However, because it suppresses the sign of \(\mathsf{Adv}^{(+)}_{\mathcal{G},\mathcal{A}}(t)\), it cannot differentiate \emph{security benefit} from \emph{residual risk}. For strategic decision-making (e.g., control selection), we therefore recommend the signed formulation in the main text and report the pair \((R_t,B_{\text{sec},t})\); the absolute variant should then be interpreted as their sum, cf.\ \eqref{eq:decomp}. This keeps “risk” nonnegative, aligns with standards that define risk as a combination of likelihood and impact, and avoids the conceptual pitfall of reporting negative risk values. \cite{NIST80030,ENISA_RM_Standards}

\paragraph{Normalised risk--benefit illustration}
\begin{figure*}[t]
  \centering
  \includegraphics[width=0.75\linewidth]{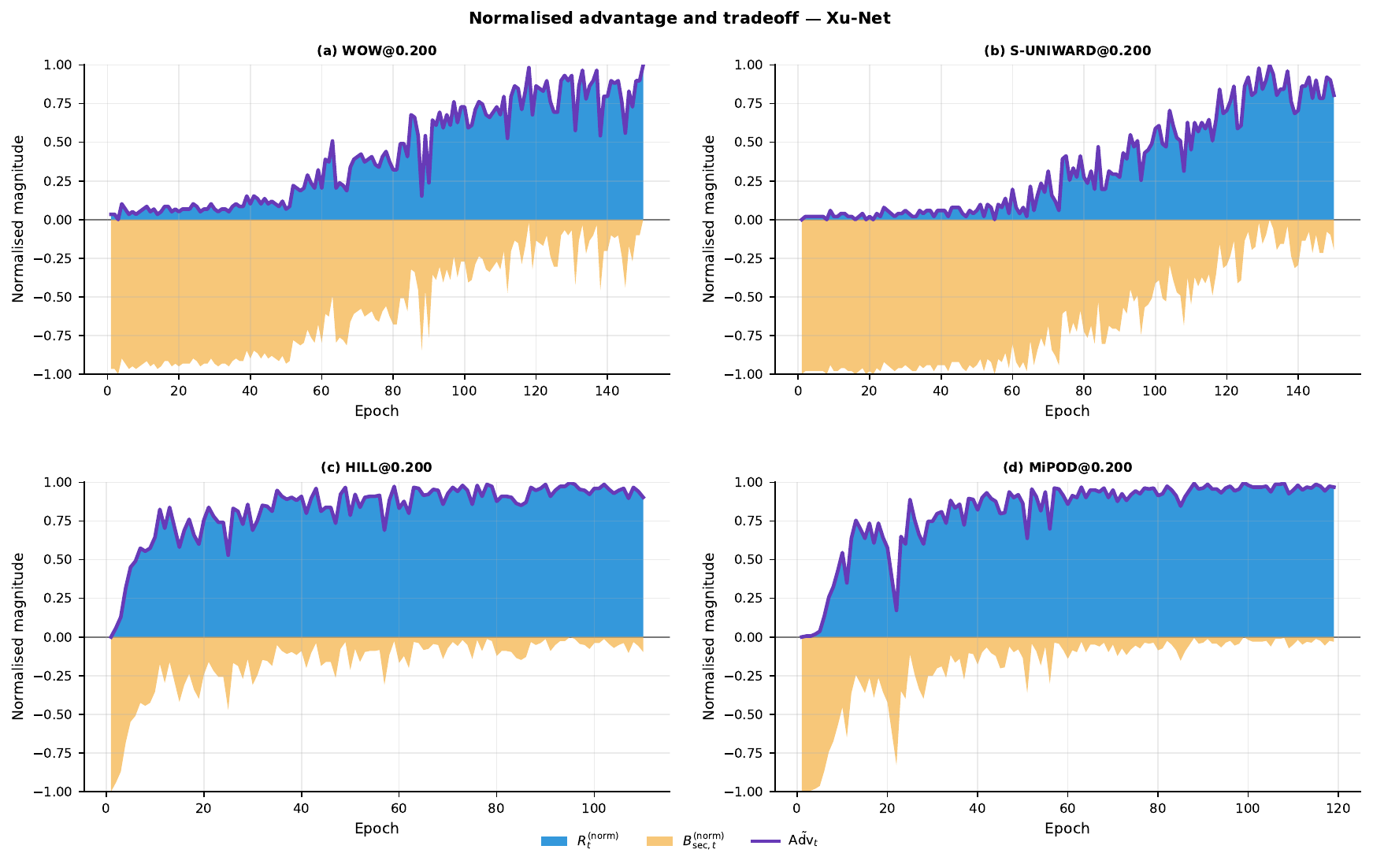}
  \caption{\small Normalised advantage and risk--benefit tradeoff over training epochs for Xu-Net at payload $0.200$ across WOW, S-UNIWARD, HILL, and MiPOD.}
  \label{fig:appendix-norm-xunet}
\end{figure*}

The main text defines the impact series $R_t = I\cdot\max\{\mathsf{Adv}^{(+)}_t,0\}$ and $B_{\mathrm{sec},t} = I\cdot\max\{-\mathsf{Adv}^{(+)}_t,0\}$ to separate attacker-favourable and defender-favourable regimes relative to a fixed baseline $\bar\beta^{\mathcal U}$.
For intuition, Fig.\ref{fig:appendix-norm-xunet} in this appendix shows an alternative normalised
parameterisation based on $\tilde{\mathsf{Adv}}_t \in [0,1]$, with
$R^{(\mathrm{norm})}_t = I\cdot\,\tilde{\mathsf{Adv}}_t$ and
$B^{(\mathrm{norm})}_t = I\cdot,(1 - \tilde{\mathsf{Adv}}_t)$, which
produces a smooth tradeoff where risk increases as residual security
benefit decreases. These normalised curves are used only for illustration and are not part
of the operational risk metric reported in the main evaluation.

\section{Quantitative Evaluation Setup}
\label{Quantitative Evaluation setup}

\paragraph{Objective.}
To evaluate robustness under uncertainty, we perform a Monte-Carlo ensemble of \(n{=}10{,}000\) equilibrium computations for the game \(\mathcal G\) with nonlinear utilities (Section~\ref{sec:assumptions}). Each iteration samples monetary primitives and behavioural parameters, applies the transformations \eqref{eq:def-concave-benefits}–\eqref{eq:def-adversary}, and solves for mixed-strategy probabilities \((p^\star,q^\star)\) via \eqref{equation_p}–\eqref{equation_q}. The resulting distributions capture the time-varying interaction between detection capability, curvature of utilities, and economic impact within our dynamic risk-governance framework \cite{ross2022simulation}.

\paragraph{Monetary primitives.}
All quantities are denominated in pounds sterling and drawn from uniform bands consistent with the scenario in Section~\ref{Motivation_for_Security_Game Model}:
\[
\begin{aligned}
B^{U}_{\text{harmony}} &\sim \mathsf{Unif}(120{,}000,350{,}000), &
C^{U}_{\text{hide}} &\sim \mathsf{Unif}(30{,}000,120{,}000),\\
B^{U}_{\text{hide}} &\sim \mathsf{Unif}(80{,}000,250{,}000), &
C^{U}_{\text{leak}} &\sim \mathsf{Unif}(150{,}000,900{,}000),\\
B^{A}_{\text{leak}} &\sim \mathsf{Unif}(120{,}000,600{,}000), &
C^{A}_{\text{look}} &\sim \mathsf{Unif}(5{,}000,60{,}000).
\end{aligned}
\]
These ranges satisfy the ordering assumptions (A2)–(A3): \(C^U_{\text{hide}}<C^U_{\text{leak}}\) and \(C^A_{\text{look}}<B^A_{\text{leak}}\).

\paragraph{Behavioural parameters.}
Curvature and scale parameters are drawn as:
\begin{multline*}
\gamma_\ell \sim \mathsf{Unif}(1.2,2.0), 
\gamma_h \sim \mathsf{Unif}(1.0,1.8),
\gamma_a \sim \mathsf{Unif}(1.0,1.8), \\
\alpha_\ell \sim \mathsf{Unif}(5{\times}10^{-7},2{\times}10^{-6}), 
\alpha_a \sim \mathsf{Unif}(5{\times}10^{-6},3{\times}10^{-5}), \\
\eta_1 \sim \mathsf{Unif}(1{\times}10^{-4},5{\times}10^{-4}),
\end{multline*}

with fixed \(b_U{=}h_U{=}b_A{=}100{,}000\) and \(\beta_U{=}\delta_U{=}\beta_A{=}1\), \(\eta_0{=}0\).  
These encode concave benefits and convex losses (A1) while keeping transformed magnitudes comparable.

\paragraph{Utility transformation and equilibrium.}
Each draw produces transformed payoffs:
\[
\begin{aligned}
\tilde B^U_{\text{hide}} &= \beta_U\log(1{+}B^U_{\text{hide}}/b_U), &
\tilde B^U_{\text{harmony}} &= \delta_U\log(1{+}B^U_{\text{harmony}}/h_U),\\
\tilde C^U_{\text{leak}} &= \alpha_{\ell}(C^U_{\text{leak}})^{\gamma_\ell}, &
\tilde C^U_{\text{hide}} &= \eta_1 (C^U_{\text{hide}})^{\gamma_h},\\
\tilde B^A_{\text{leak}} &= \beta_A\log(1{+}B^A_{\text{leak}}/b_A), &
\tilde C^A_{\text{look}} &= \alpha_a (C^A_{\text{look}})^{\gamma_a}.
\end{aligned}
\]
Defining \(\Delta=\tilde B^U_{\text{harmony}}-\tilde B^U_{\text{hide}}+\tilde C^U_{\text{hide}}\),
\[
p^\star=\frac{\Delta}{\tilde B^U_{\text{harmony}}+\tilde C^U_{\text{leak}}},\qquad
q^\star=1-\frac{\tilde C^A_{\text{look}}}{\tilde B^A_{\text{leak}}},
\]
with denominators lower-bounded at \(10^{-12}\) and results clipped to \([0,1]\) for numerical stability (Theorem~\ref{theorem_p}).  
Draws where \(p^\star,q^\star\in(0,1)\) are classified as interior equilibria; others collapse to pure strategies.

\paragraph{Sampling design and reproducibility.}
Independent uniform sampling maintains transparent comparative statics; variance-reduced designs (e.g., Latin Hypercube Sampling) can reduce noise at the same computational budget \cite{stein1987lhs, ross2022simulation}.  
All draws use NumPy’s \texttt{Generator} with seed 42; figure generation is deterministic under the same environment, ensuring reproducibility.

\paragraph{Assumption checks.}
Each draw logs (A2)–(A4) compliance. Occasional violations of monetary orderings are discarded; affine transformations (A4) have no equilibrium effect.  
The ensemble distributions in Figure~\ref{fig:F5_distribution} quantify how parameter uncertainty and detection variability translate into evolving adversarial advantage and currency-valued risk $R_t$ within the simulated horizon.

\section{Detector Operating Point / Training and compute efficiency}\label{Det_Training}
The detector operating point are presented in Tables \ref{tab:efficiency} and \ref{tab:operating-points}.

\begin{table*}[t]
\centering
\footnotesize
\setlength{\tabcolsep}{27pt}
\renewcommand{\arraystretch}{1.0}
\sisetup{
  table-number-alignment = center,
  round-mode = places,
  round-precision = 3
}

\caption{Detector operating points at the selected best epoch (max AUC or min $\beta^{\mathcal U}_t$). \emph{Condensed for space: highest AUC rows retained.}}
\label{tab:operating-points}
\vspace{0.01em}

\resizebox{0.98\linewidth}{!}{%
\begin{tabular}{
  l l
  S[table-format=1.3]
  S[table-format=1.2]
  S[table-format=1.3]
  S[table-format=1.3]
  S[table-format=1.3]
  S[table-format=1.3]
}
\toprule
\multicolumn{1}{c}{\textbf{Model}} &
\multicolumn{1}{c}{\textbf{Method}} &
\multicolumn{1}{c}{\textbf{Payload}} &
\multicolumn{1}{c}{\textbf{FPR} $(\alpha)$} &
\multicolumn{1}{c}{\textbf{TPR@FPR}} &
\multicolumn{1}{c}{$\boldsymbol{\beta^{\mathcal U}_t}$} &
\multicolumn{1}{c}{\textbf{AUC}} &
\multicolumn{1}{c}{\textbf{thr@FPR}} \\
\midrule

\multirow{12}{*}{\textbf{GNCNN}}
  & \multirow{3}{*}{HILL}
    & 0.100 & 0.10 & 0.935 & 0.065 & 0.970 & 0.177 \\
  & & 0.200 & 0.10 & 0.956 & 0.044 & 0.984 & 0.274 \\
  & & 0.400 & 0.10 & 0.961 & 0.039 & 0.987 & 0.298 \\
\cmidrule(lr){2-8}
  & \multirow{3}{*}{MiPOD}
    & 0.100 & 0.10 & 0.926 & 0.074 & 0.969 & 0.243 \\
  & & 0.200 & 0.10 & 0.926 & 0.074 & 0.966 & 0.201 \\
  & & 0.400 & 0.10 & 0.921 & 0.079 & 0.968 & 0.228 \\
\cmidrule(lr){2-8}
  & \multirow{3}{*}{S-UNIWARD}
    & 0.100 & 0.10 & 0.916 & 0.084 & 0.962 & 0.266 \\
  & & 0.200 & 0.10 & 0.923 & 0.077 & 0.965 & 0.268 \\
  & & 0.400 & 0.10 & 0.913 & 0.087 & 0.960 & 0.284 \\
\cmidrule(lr){2-8}
  & \multirow{3}{*}{WOW}
    & 0.100 & 0.10 & 0.919 & 0.081 & 0.961 & 0.326 \\
  & & 0.200 & 0.10 & 0.917 & 0.083 & 0.957 & 0.310 \\
  & & 0.400 & 0.10 & 0.921 & 0.079 & 0.967 & 0.178 \\

\cmidrule(lr){1-8}

\multirow{12}{*}{\textbf{Xu-Net}}
  & \multirow{3}{*}{HILL}
    & 0.100 & 0.10 & 0.745 & 0.255 & 0.915 & 0.574 \\
  & & 0.200 & 0.10 & 0.940 & 0.060 & 0.977 & 0.693 \\
  & & 0.400 & 0.10 & 0.975 & 0.025 & 0.991 & 0.318 \\
\cmidrule(lr){2-8}
  & \multirow{3}{*}{MiPOD}
    & 0.100 & 0.10 & 0.920 & 0.080 & 0.974 & 0.581 \\
  & & 0.200 & 0.10 & 0.915 & 0.085 & 0.972 & 0.523 \\
  & & 0.400 & 0.10 & 0.950 & 0.050 & 0.984 & 0.510 \\
\cmidrule(lr){2-8}
  & \multirow{3}{*}{S-UNIWARD}
    & 0.100 & 0.10 & 0.105 & 0.895 & 0.502 & 0.515 \\
  & & 0.200 & 0.10 & 0.295 & 0.705 & 0.708 & 0.638 \\
  & & 0.400 & 0.10 & 0.685 & 0.315 & 0.906 & 0.633 \\
\cmidrule(lr){2-8}
  & \multirow{3}{*}{WOW}
    & 0.100 & 0.10 & 0.090 & 0.910 & 0.505 & 0.501 \\
  & & 0.200 & 0.10 & 0.370 & 0.630 & 0.731 & 0.595 \\
  & & 0.400 & 0.10 & 0.795 & 0.205 & 0.934 & 0.477 \\

\bottomrule
\end{tabular}%
}
\end{table*}

\begin{table*}[t]
\centering
\footnotesize
\setlength{\tabcolsep}{17pt}
\renewcommand{\arraystretch}{1.0}

\sisetup{
  table-number-alignment = center,
  round-mode = places,
  round-precision = 3,
  group-separator = {,},
  group-minimum-digits = 3
}

\caption{Training and compute efficiency summary with system metrics (mean values across epochs).}
\label{tab:efficiency}
\vspace{0.25em}

\begin{threeparttable}
\resizebox{0.98\linewidth}{!}{%
\begin{tabular}{
  l l
  S[table-format=1.3]
  S[table-format=1.3]
  S[table-format=2.3, table-text-alignment=center] 
  S[table-format=1.3]
  S[table-format=2.3]
  S[table-format=1.3]
}
\toprule
\multicolumn{1}{c}{\textbf{Model}} &
\multicolumn{1}{c}{\textbf{Method}} &
\multicolumn{1}{c}{\textbf{Payload}} &
\multicolumn{1}{c}{\textbf{Mean AUC}} &
\multicolumn{1}{c}{\textbf{Mean $t_{\mathrm{epoch}}$ (s)}} &
\multicolumn{1}{c}{\textbf{Mean CPU\%}} &
\multicolumn{1}{c}{\textbf{Mean RAM\%}} &
\multicolumn{1}{c}{\textbf{Mean GPU mem (GB)}} \\
\midrule

\multirow{12}{*}{\textbf{GNCNN}}
  & \multirow{3}{*}{HILL}
    & 0.100 & 0.961184 & 46.231300 & 0.798333 & 48.357667 & 7.331805 \\
  & & 0.200 & 0.500026 & 47.538589 & 1.332667 & 42.712667 & 7.028487 \\
  & & 0.400 & 0.870760 & 47.302302 & 1.406667 & 42.965333 & 7.031497 \\
\cmidrule(lr){2-8}

  & \multirow{3}{*}{MiPOD}
    & 0.100 & 0.963751 & 47.516723 & 2.769333 & 49.897000 & 7.255461 \\
  & & 0.200 & 0.959412 & 47.178661 & 2.164000 & 44.348333 & 7.372381 \\
  & & 0.400 & 0.961260 & 45.717345 & 0.774333 & 48.695667 & 7.078159 \\
\cmidrule(lr){2-8}

  & \multirow{3}{*}{S-UNIWARD}
    & 0.100 & 0.829641 & 46.804040 & 1.980333 & 46.639000 & 7.339918 \\
  & & 0.200 & 0.952133 & 45.788732 & 0.398667 & 40.666333 & 7.160612 \\
  & & 0.400 & 0.956153 & 45.587356 & 0.908333 & 42.826000 & 7.184938 \\
\cmidrule(lr){2-8}

  & \multirow{3}{*}{WOW}
    & 0.100 & 0.959093 & 46.662203 & 3.531000 & 43.715333 & 7.427169 \\
  & & 0.200 & 0.954774 & 46.260254 & 0.774333 & 41.155667 & 7.097771 \\
  & & 0.400 & 0.963602 & 47.033806 & 1.140000 & 41.385333 & 7.280163 \\

\cmidrule(lr){1-8}

\multirow{12}{*}{\textbf{Xu-Net}}
  & \multirow{3}{*}{HILL}
    & 0.100 & 0.829071 & NaN & 1.772727 & 39.217045 & 0.054000 \\
  & & 0.200 & 0.936848 & NaN & 5.044545 & 43.669091 & 0.050000 \\
  & & 0.400 & 0.971926 & NaN & 3.171930 & 49.324561 & 0.035000 \\
\cmidrule(lr){2-8}

  & \multirow{3}{*}{MiPOD}
    & 0.100 & 0.927580 & NaN & 0.427586 & 42.630345 & 0.054000 \\
  & & 0.200 & 0.915864 & NaN & 0.449580 & 42.565546 & 0.049000 \\
  & & 0.400 & 0.944602 & NaN & 0.358156 & 42.548936 & 0.035000 \\
\cmidrule(lr){2-8}

  & \multirow{3}{*}{S-UNIWARD}
    & 0.100 & 0.501030 & NaN & 5.887500 & 50.452083 & 0.054000 \\
  & & 0.200 & 0.581380 & NaN & 2.642000 & 47.860667 & 0.049000 \\
  & & 0.400 & 0.806093 & NaN & 1.949219 & 45.697656 & 0.035000 \\
\cmidrule(lr){2-8}

  & \multirow{3}{*}{WOW}
    & 0.100 & 0.502146 & NaN & 0.440741 & 41.829630 & 0.063000 \\
  & & 0.200 & 0.609838 & NaN & 0.440000 & 41.729333 & 0.049000 \\
  & & 0.400 & 0.836124 & NaN & 0.332667 & 41.642000 & 0.035000 \\

\bottomrule
\end{tabular}%
}
\begin{tablenotes}[flushleft]
\footnotesize
\item \textit{Note.} Xu\textendash Net epoch times ($t_{\mathrm{epoch}}$) were not recorded by the logging backend and are marked “NaN”. Cross-model wall-clock comparisons are excluded \\ where telemetry is incomplete. All other metrics represent epoch-wise means across training.
\end{tablenotes}
\end{threeparttable}
\end{table*}

Table~\ref{tab:efficiency} summarises detector performance and system telemetry. With GNCNN, AUCs are uniformly high at $0.1$\,bpp and $0.4$\,bpp (0.83–0.96), indicating clear discriminability at low and moderate payloads; the $0.2$\,bpp dip for \textsc{HILL} (AUC $\approx\!0.50$) is a notable outlier and suggests payload/method interactions that merit separate analysis or cross-checking. Mean $t_{\mathrm{epoch}}$ for GNCNN is stable at $\sim\!46$–$48$\,s across methods/payloads on the reported hardware, with modest CPU and RAM utilisation and $\sim\!7$\,GB GPU memory use. Xu--Net shows strong AUC at high payloads (e.g., \textsc{HILL} $0.4$\,bpp: $0.972$), but epoch times were not captured by the logging backend and are therefore marked $NaN$. To avoid bias, we refrain from cross-model wall-clock comparisons where telemetry is incomplete (see the table note). The remaining Xu--Net telemetry (CPU/RAM/GPU mem) was recorded and is reported for completeness.

\bibliographystyle{IEEEtran} 
\bibliography{bare_arXiv}

\end{document}